\documentclass[prodmode,permissions]{acmsmall-ec16-full}

\usepackage[ruled]{algorithm2e}

\usepackage{amsmath,amsfonts,amssymb,bbm}
\usepackage[numbers,sort&compress]{natbib} 
\SetArgSty{textrm}  
\SetAlFnt{\small}
\SetAlCapFnt{\small}
\SetAlCapNameFnt{\small}
\SetAlCapHSkip{0pt}
\IncMargin{-\parindent}

\usepackage{cleveref}
\usepackage{dsfont}
\usepackage{url}
\usepackage{mathrsfs}
\usepackage{framed}
\usepackage{mathtools}

\def\1{\mathds 1}

\def\R{\mathbf R}

\def\cN{\mathcal N}
\def\cO{\mathcal O}
\def\cW{\mathcal W}

\def\scrD{\mathscr D}

\def\conv{\mathrm{conv}}
\def\SPAN{\mathrm{span}}

\def\id{\mathrm{id}}

\newcommand{\spoly}[1]{G_{\mathrm{p}({#1})}}

\newcommand\blue[1]{{\color{blue} #1}}

\newcommand{\ds}{\displaystyle}
\newcommand{\bR}{\mathbf{R}}

\newcommand{\trans}{\top}
\newcommand{\N}{[n]}
\newcommand{\M}{[m]}
\newcommand{\norm}[1]{\|#1\|}
\newcommand{\cset}{H}
\newcommand{\pset}{F}
\newcommand{\Gid}{G_{\mathrm{id}}}
\newcommand{\Gpoly}[1]{G_{\mathrm{p}(#1)}}

\newcommand{\sgf}{\delta\!f}
\newcommand{\Exp}{\mathbf{E}}
\newcommand{\optq}{\bar{q}}
\newcommand{\optp}{\bar{p}}
\newcommand{\optw}{\bar{w}}
\newcommand{\subdiff}{\partial}

\newcommand{\meanv}{\tilde{v}}
\newcommand{\tk}{{\textup{k}}}
\newcommand{\etao}{\eta_{\textup{o}}}
\newcommand{\etae}{\eta_{\textup{e}}}

\DeclareMathOperator*{\argmax}{arg\,max}


\doi{XXXXXXX.XXXXXXX}

\begin{document}


\title{Rate of Price Discovery in Iterative Combinatorial Auctions}
\author{JACOB ABERNETHY
\affil{University of Michigan}
S\'EBASTIEN LAHAIE
\affil{Microsoft Research}
MATUS TELGARSKY
\affil{University of Michigan}}

\begin{abstract}
  We study a class of iterative combinatorial auctions which can be
  viewed as subgradient descent methods for the problem of pricing
  bundles to balance supply and demand. We provide concrete
  convergence rates for auctions in this class, bounding the number of
  auction rounds needed to reach clearing prices. Our analysis allows
  for a variety of pricing schemes, including item, bundle, and
  polynomial pricing, and the respective convergence rates confirm
  that more expressive pricing schemes come at the cost of slower
  convergence. We consider two models of bidder behavior. In the first
  model, bidders behave stochastically according to a random utility
  model, which includes standard best-response bidding as a special
  case. In the second model, bidders can behave arbitrarily (even
  adversarially), and meaningful convergence relies on properly
  designed activity rules.
\end{abstract}


\if 0
\begin{CCSXML}
<ccs2012>
<concept>
<concept_id>10010405.10010455.10010460</concept_id>
 <concept_desc>Applied computing~Economics</concept_desc>
<concept_significance>500</concept_significance>
</concept>
<concept>
<concept_id>10003752.10003809.10010047</concept_id>
 <concept_desc>Theory of computation~Online algorithms</concept_desc>
<concept_significance>100</concept_significance>
</concept>
</ccs2012>
\end{CCSXML}

 \ccsdesc[500]{Applied computing~Economics}
 \ccsdesc[100]{Theory of computation~Online algorithms}

\keywords{combinatorial auctions, market clearing, price discovery}
\fi

\begin{bottomstuff}
Author addresses:
J. Abernethy, Department of Computer Science,
University of Michigan; email: \url{jabernet@umich.edu};
S. Lahaie, Microsoft Research, New York City; email:
\url{slahaie@microsoft.com};
M. Telgarsky, Department of Computer Science,
University of Michigan; email: \url{mtelgars@umich.edu}
\end{bottomstuff}

\maketitle

\section{Introduction.}

Combinatorial auctions are used to sell multiple distinct items at
once in situations where items may be substitutes or complements.
Because of their generality as a resource allocation mechanism,
combinatorial auctions have been proposed for a variety of domains
including the allocation of wireless spectrum, airport landing slots,
and real estate~\citep{cramton2006combinatorial}. In an
\emph{iterative} combinatorial auction, bidders place bids on bundles
of items in response to prices updated by the auctioneer, and this
process repeats until bidding reaches quiescence. Iterative designs
are particularly attractive for combinatorial auctions because the
sheer size of the bundle space makes it impractical to report
valuation information in a single shot.

At the core of an iterative auction design is the choice of pricing
scheme, because prices drive the information revelation process. The
current space of auctions offers two extremes: linear (item) pricing,
where the price of a bundle is the sum of its item prices, and bundle
pricing, where each bundle is explicitly priced. The two schemes have
different advantages. Linear prices provide information about final
costs even for bundles not explicitly bid on, leading to fewer rounds
of bidding---in this sense, they provide effective `price discovery'.
However, in the presence of complementarities linear prices cannot
effectively balance supply and demand, leading to inefficiencies in
the final allocation. Bundle prices, on the other hand, can always
clear the market and support an efficient allocation, but provide
limited price discovery. As a result, bundle-price auctions typically
require far more rounds in practice to reach termination, as confirmed
empirically in both simulation studies and lab
experiments~\citep{scheffel2011experimental,schneider2010robustness}.

In this paper we consider the design of iterative auctions from an
algorithmic perspective, leading to a formal study of the relationship
between price structure and price discovery. Prior work has shown that
iterative multi-item auctions generally fall under two design
paradigms: they can be viewed as either primal-dual or subgradient
algorithms to solve the dual problems of allocation and pricing, once
these are formulated as linear programming
problems~\cite{bikhchandani2002linear}. In this work we consider
auctions based on the subgradient method. We obtain general linear and
quadratic programming formulations of the pricing problem solved by
the auction, which allows us to study various price structures under a
single framework. We provide a general specification of subgradient
auctions which subsumes several prior auctions using linear or bundle
prices, and also leads to novel auction designs using polynomial
prices in between these two schemes. Polynomial prices extend linear
prices by assigning coefficients to combinations of items, and can
allow the auctioneer to strike a more careful balance between market
clearing and price discovery.

Our main focus is the convergence rate of iterative auctions,
and in particular its dependence on the pricing scheme used.
We consider two agent models. In
the first model, agents bid on their most preferred bundles at the
current prices, but their value estimates for bundles are subject to
stochastic errors at each round. This translates into stochastic behavior where
bids may be placed on lesser-preferred bundles by chance. Our
convergence analysis confirms that subgradient auctions reach clearing
prices even under this kind of imperfect bidding behavior, and bounds
the rates for linear, bundle, and polynomial pricing.
In the second model, agents can place bids arbitrarily or even
adversarially in each round upon seeing the current prices. While it
is still possible to prove certain technical convergence statements
under such a model, it unsurprisingly admits behavior that would be
unreasonable and disallowed in practice. We therefore show how the
convergence statements become more meaningful when imposing
revealed-preference activity rules (constraints on bids over rounds)
that have been proposed in
practice~\cite{ausubel2014practical}.\footnote{Under both the
  stochastic model and the adversarial model with activity rules,
  agents bid consistently with some underlying valuation for the
  bundles, but it need not correspond to their true valuation. Agents
  may benefit from strategic bidding. This has no bearing on the
  auction's worst-case convergence rate, so we set aside incentive
  concerns in this work. As with other iterative auctions used in
  practice, supplementary pricing stages such as core-selecting or VCG
  pricing may be used to achieve reasonable outcomes in
  equilibrium~\cite{day2012quadratic}.}

At the core of our analysis are tools and concepts from statistical
learning theory, online (sequential) learning, and convex
analysis. The auction is cast as an iterative procedure whose goal is
to optimize an objective function over prices, in analogy to fitting a
learning model to minimize prediction loss. The tension between market
clearing and price discovery mirrors the bias-variance trade-off
familiar from statistics. Linear prices offer `low variance', which
makes them informative about final costs even after a limited number
of rounds. Bundle prices offer `low bias', which is needed to flexibly
price bundles to balance demand and supply. (The notions of bias and
variance here are analogies, because there is no randomness in the
auction mechanism.) We emphasize the price discovery aspects
throughout the paper. While our analysis does provide structural
insights into the market clearing properties of various pricing
schemes, a detailed treatment of this question requires one to
consider restricted valuation classes, which we defer to a separate
study.

%

The remainder of the paper is organized as follows.
Section~\ref{sec:model} introduces the elements of our model and
describes our approach to price representation. Section~\ref{sec:opt}
formulates the core problems of allocation and pricing, and develops
the duality relationship between the two.
Section~\ref{sec:iter-auction} specifies the class of auctions based
on the subgradient method. The main results of the paper appear in
Sections~\ref{sec:stochastic} and~\ref{sec:arbitrary}, providing
convergence rates for the auctions under the stochastic and arbitary
bidding models, respectively. Section~\ref{sec:conclusion} concludes.
Detailed proofs of all results are deferred to the appendices.
Sketches of the proofs are given in the main text.



\paragraph{Related work} There is an extensive literature on the
design of multi-item auctions~\citep[e.g.][]{milgrom2004putting}. Our
paper relates to the narrower literature on the algorithmic properties
of iterative auctions. The allocation problem at the basis of
combinatorial auctions was first formulated in a linear programming
(LP) framework by~\citet{bikhchandani2002package}, who provide
formulations leading to linear and bundle prices, both anonymous and
personalized. The formulation used in this paper is equivalent to the
one used by~\citet{lahaie2011kernel} for his primal-dual auction, and
subsumes the LPs of~\citet{bikhchandani2002package}. The work
of~\citet{lahaie2009kernel,lahaie2011kernel} introduced the use of
polynomial prices for combinatorial auctions.

Drawing on the LP viewpoint,~\citet{bikhchandani2002linear}
and~\citet{de2007ascending} categorized existing auctions as either
primal-dual or subgradient algorithms. Subgradient auctions include
the uniform price auction for homogeneous goods, and the well-known
auctions of~\citet{crawford1981job}, \citet{kelso1982job},
\citet{parkes1999bundle}, and~\citet{ausubel2002ascending} for
heterogeneous goods. All of these auctions are part of the class
studied here (for a suitable choice of step-size policy).

An important aspect of our auction class is that we do not enforce
price monotonicity: the price of a bundle may ascend and descend
during the course of the auction. While there is precedent for
non-monotone price paths~\citep[e.g.][]{ausubel2006efficient}, most
auction designs favor ascending prices to rule out certain gaming
behaviors on the part of bidders that aim to delay termination. We
show how activity rules can mitigate such concerns and restore
convergence. \citet{ausubel2014practical} discuss the practical
aspects of combinatorial auction design,
while~\citet{ausubel2011activity} consider activity rules.
\citet{harsha2010strong} provide a detailed treatment of revealed
preference activity rules similar to the rule considered in this
paper.

Many of the core ideas used herein draw from convex
analysis~\cite{borwein2010convex,HULL,ROC} and the mathematical
results that explore sequential optimization procedures such as
subgradient descent, or more broadly \emph{mirror
  descent}~\cite{beck2003mirror}. Many of these methods have recently
been employed within the machine learning community, where prediction
and decision problems are frequently viewed as an \emph{online convex
  optimization game}~\citep{zinkevich2003online}. In this model, a
learner is repeatedly asked to select decisions from a convex set, and
on each round the feedback on this decision arrives in the form of a
convex loss function, potentially selected by an adversary. In the
context of iterative auctions, the decision is the price vector at
each round, and the resulting bids provide feedback on the `loss'
incurred. Online convex optimization games have been thoroughly
explored in recent years and one can find a number of comprehensive
surveys, including the work
of~\citet{cesa2006prediction},~\citet{hazan2012survey},
and~\citet{shalev2011online}.


\section{The formal model.}
\label{sec:model}

We consider a model with $n$ agents (buyers) and a single seller
holding $m$ distinct items. At a high level, the purpose of the
auction is to allocate the $m$ items to the $n$ agents according to
how the agents value the items. We use the notation
$[n] = \{1,2,\ldots,n\}$ to denote an index set; thus $\N$ indexes the
set of agents and $\M$ indexes the set of items. To formulate our model, we
first treat the items as divisible; we will later explicitly
impose indivisibility (i.e, integrality) requirements. Agents have
preferences over various bundles of items, where a bundle is a subset
of the items. Let $X$ denote the set of all bundles the agents would
be interested in acquiring, and let $\ell = |X|$. In general, this
could be all possible bundles, in which case $\ell = 2^m$, but in some
applications it may be substantially smaller.\footnote{For instance,
  the literature often considers \emph{single-minded} agents. Such
  agents only derive positive value if they acquire a specific,
  designated bundle, and the marginal value of all other items is
  zero. Under single-minded agents we have $\ell \leq n$.}

\paragraph{Notation}
We use the following convention to index vectors and matrices
throughout the paper. For a vector $a \in \bR^\ell$ indexed by the
finite set of bundles, we use the functional notation $a(x)$ for the
component corresponding to $x \in X$. For a vector $a \in \bR^n$
indexed by agents, we use the usual subscript notation $a_i$ for
$i \in [n]$. For a vector $a \in \bR^{n \ell}$ indexed by both, we
write $a_i(x)$ to refer to a component. The convention extends to
matrices. For a matrix $A \in \bR^{n \ell \times d}$, we write
$A_i(x)$ for the $d$-dimensional row associated with agent $i$ and
bundle $x$.
\medskip

\noindent
The primitives of our model will be cast as elements of vector spaces.
A bundle assigned to an agent $i$ is represented as a vector from the
agent's \emph{consumption set}
\begin{equation} \label{def:consumption-set}
\cset_i = \conv\left\{
q_i \in \{0,1\}^\ell : \sum_{x \in X} q_i(x) \leq 1
\right\},
\end{equation}
where `$\conv$' denotes the convex hull operator. The consumption set
is a polytope whose extreme points are in one-to-one correspondence
with bundles; an extreme point $q_i$ is a binary vector corresponding
to the unique $x \in X$ for which $q_i(x) = 1$, or the empty bundle if
$q_i$ is the origin. Fractional vectors represent bundles with
fractional quantities of items. The agents' consumption sets are
identical, but this is not important for our results. We use
$\cset = \cset_1 \times \dots \times \cset_n$ to denote the agents'
joint consumption set.

An \emph{allocation} is represented by a vector $q = (q_1,\ldots,q_n) \in \bR^{n\ell}$, where the subvector $q_i \in \bR^\ell$  encodes the bundle that agent $i$ receives. An allocation is feasible if no more than one unit of each item is supplied to the agents in total. More formally, the set of feasible allocations is captured by the seller's \emph{production set}%
\begin{equation} \label{def:production-set}
F = \conv\left\{
q \in \{0,1\}^{n \ell} :
\sum_{i \in \N} \sum_{x \in X,\, x \ni j} q_i(x) \leq 1 \:\: (j \in \M),\:
\sum_{x \in X} q_i(x) \leq 1 \:\: (i \in \N)
\right\}.
\end{equation}
The production set is a polytope whose extreme points are in
one-to-one correspondence with lists of bundles $(x_1,\ldots,x_n)$
such that each item appears in at most one bundle and each agent
receives no more than one bundle. They therefore correspond to
feasible allocations of indivisible items. Note that, by
definition, if $q \in F$ then $q_i \in H_i$ for each agent $i$, so
that $F \subset H$.

Each agent $i$ has a \emph{valuation} $v_i \in \bR^\ell$ which records
the agent's willingness to pay for each bundle in a common unit of
currency. We assume that each agent's value for the empty bundle is 0.
A \emph{valuation profile} is a vector of agent valuations
$v = (v_1,\ldots,v_n) \in \bR^{n \ell}$.
Given a valuation profile $v$, an allocation $\optq$ is \emph{efficient} (in the economic sense) if it maximizes the total value to the agents among all feasible allocations:
\begin{equation} \label{eq:eff-alloc}
\optq \in \argmax_{q \in \pset} \:\: v^\trans q.
\end{equation}
%

\noindent
Like valuation profiles, \emph{prices} are described by a vector
$p \in \R^{n\ell}$ where $p_i(x)$ is the charge for bundle $x$ to agent
$i$. As defined the prices may be \emph{personalized}, in the sense
that two different agents may see different prices for identical
bundles. We assume that agents have \emph{quasi-linear utility}: given
prices $p$, agent $i$'s utility for bundle $x \in X$ is
$v_i(x) - p_i(x)$. Equivalenty, if $q_i \in H_i$ is the vector
corresponding to bundle $x$, the utility is
$v_i^\trans q_i - p_i^\trans q_i$.  If the auction charges prices $p$
and allocates according to $q \in F$ then the \emph{revenue} totals
$p^\trans q$. The seller has zero value for the items and only derives
utility from the revenue collected.


\medskip
\emph{Price representation. }
We have so far introduced prices as vectors from $\bR^{n\ell}$ which
explicitly list the price of each bundle to each agent. The purpose of
this paper is to analyze the impact of different pricing schemes
(e.g., linear or bundle) on the operation of the auction, especially
its rate of convergence. To impose further structure on prices and
restrict them to a lower-dimensional subspace, we use an indirect
approach that first defines an alternative vector space representation
for the bundles. This approach, explored in depth in
\cite{lahaie2009kernel,lahaie2011kernel}, draws very much from the
class of \emph{kernel methods} used in machine learning. In the
context of prediction and estimation, the intuition behind kernel
methods is that they map the data to a (potentially high
dimensional) space where the function to be learned is
\emph{linear} in the space. The same idea applies in the context of
our auction design, where our goal is to produce a pricing function
that is linear under a particular representation of the bundles.

We introduce a \emph{representation matrix}
$G \in \bR^{n \ell \times d}$, where $m \leq d \leq n \ell$. The row
$G_i(x)$ provides a $d$-dimensional encoding of bundle $x \in X$,
which can also depend on the agent $i$ in general. The interpretation
of the encodings is that they define the ``features'' of the bundles
that are priced in the auction. We take prices to be linear functions
of bundle encodings, so that prices can be represented in $\bR^d$.
That is, the price vector $p \in \bR^{n\ell}$ can be written as
$p = G w$ for some \emph{price parameter} vector $w \in \bR^d$; we
will sometimes drop $p$ and refer to the prices as $G w$. As a
convention, the empty bundle is always encoded as the origin in
$\bR^d$, which means that its price is normalized to 0.
To make these ideas more concrete, let us consider several examples of
representations, each leading to different pricing schemes.
\medskip
%
\begin{description}
\item[Linear] There is a feature for each item ($d = m$). A bundle is
  encoded using its standard 0-1 indicator vector representation. For
  example, with items $a,b,c$, the representations of bundles $\{a,b\}$ and $\{a,b,c\}$ are respectively
$$
\arraycolsep=5pt
\begin{array}{ccccccc}
  &&& a & b & c & \\
  \{a,b\}&\mapsto& [ & 1 & 1 & 0 & ], \\
  \{a,b,c\}&\mapsto& [ & 1 & 1 & 1 & ].
\end{array}
$$
This leads to simple item pricing: the price of a bundle is the sum of the prices associated with each of its items.
In the sequel, the notation $\spoly{1}$ will refer to linear pricing matrices (polynomials of degree 1).

\medskip
\item[Bundle] At the other extreme, we can have a feature for each
  relevant non-empty bundle ($d = \ell$). A non-empty bundle is
  encoded by the unit vector which has a 1 in the component
  corresponding to the bundle. For example, with items $a,b,c$,
  and assuming $X$ consists of all bundles,
  the representations of bundles $\{a,b\}$ and $\{a,b,c\}$ are respectively
$$
\arraycolsep=5pt
\begin{array}{ccccccccccc}
  && & a & b & c & ab & ac & bc & abc & \\
  \{a,b\}&\mapsto & [ & 0 & 0 & 0 & 1 & 0 & 0 & 0 & ], \\
  \{a,b,c\}&\mapsto& [ & 0 & 0 & 0 & 0 & 0 & 0 & 1 & ].
\end{array}
$$
Here each bundle is explicitly priced and there may be no relationship
whatsoever between the prices of different bundles. In the sequel,
$G_{\id}$ will refer to bundle pricing matrices.

\medskip
\item[Polynomial] As a generalization of the linear representation, we
  can have a feature for subsets of items up to size $r$, for $1 \leq r
  \leq m$. A component is 1 if the bundle contains all the items in
  the associated subset, and 0 otherwise.
  For example, with items $a,b,c$, the representations of bundles $\{a,b\}$ and $\{a,b,c\}$ using $r = 2$ are respectively
$$
\arraycolsep=5pt
\begin{array}{cccccccccc}
  & &  & a & b & c & ab & ac & bc \\
  \{a,b\}&\mapsto& [ & 1 & 1 & 0 & 1 & 0 & 0 & ], \\
  \{a,b,c\}&\mapsto& [ & 1 & 1 & 1 & 1 & 1 & 1 & ].
\end{array}
$$
Prices therefore take the form of multi-variate
polynomials\footnote{To see this, note that the price function in this
  example can be written as the quadratic polynomial\\$w_a x_a + w_b
  x_b + w_c x_c + w_{ab} x_a x_b + w_{ac} x_a x_c + w_{bc} x_b
  x_c$, where
  $x_a,x_b,x_c$ are binary indicator variables for the items contained
  in the bundle, and $w$ is the price parameter vector.}
of degree $r$. Note that price components can be positive or negative, so polynomial prices of degree $r \geq 2$ may be super- or sub-additive on various bundles.
In the sequel, $\spoly{r}$ will denote polynomial pricing matrices of degree $r$.
\end{description}

\medskip
\noindent
Note that while the bundle representation is fully expressive, it is
conceptually distinct from the linear and polynomial representations
and should not be construed as a generalization of either.
We are not aware of any iterative auctions in the literature using
price structures other than the ones listed above. However, the
formalism can also accommodate intricate pricing schemes such as
attribute pricing (e.g., square footage for real estate, or population
density for wireless spectrum). The encodings cannot be entirely
arbitrary: they must retain enough information about the items
contained in the bundles to allow one to verify, given a vector of $n$
bundle encodings, whether it forms a feasible allocation. The above
encodings all have this property---see~\cite{lahaie2011kernel} for a
more detailed discussion.

The representations given above (linear, bundle, and polynomial) are
all anonymous, in the sense that $G_i(x) = G_j(x)$ for all $i,j \in
\N$, leading to anonymous prices. To personalize prices, we can
include features that depend on the identity of the agent. For
instance, one can introduce an agent intercept to any of the encodings
above, leading to a feature space of dimension $d + n$. To obtain
entirely separate prices for each agent, one can take $n$ copies of
the original feature space and map a bundle into the agent's copy; the
resulting dimension is $nd$. Our results can be adapted to such
schemes by extending the dimension of the feature space accordingly.


\section{Allocation and pricing.}
\label{sec:opt}

It has been understood since the literature on stability of equilibria that price adjustment processes minimize a convex potential involving indirect utilities~\cite{arrow1971general,varian1981dynamical}. In the context of combinatorial auctions, this potential arises as the dual of the efficient allocation problem~\citep{bikhchandani2002linear,bikhchandani2002package}. We give a full development of this duality for our setup. This will serve to clarify why iterative auctions can be analyzed as subgradient methods, and also provides bounds on price magnitudes, important for later analysis.

Consider the agents' \emph{aggregate} indirect utility function and demand correspondence, defined respectively as
\begin{eqnarray}
u(p;v) & = & \max_{q \in \cset}\: v^\trans q - p^\trans q, \label{eq:agg-demand-util} \\
U(p;v) & = & \argmax_{q \in \cset}\: v^\trans q - p^\trans q, \label{eq:agg-demand-corresp}
\end{eqnarray}
for personalized bundle prices $p \in \bR^{n\ell}$. We will often
suppress parameter $v$ when clear from context. As $\cset$ is a
product set, an optimal $q$
in~(\ref{eq:agg-demand-util}--\ref{eq:agg-demand-corresp}) decomposes
into $q = (q_1,\ldots,q_n)$ where $q_i$ maximizes agent $i$'s utility
$v_i^\trans q_i - p_i^\trans q_i$ individually over $H_i$. The indirect
utility $u$ therefore aggregates the agents' maximal utilities over
bundles, given prices $p$. Similarly, correspondence $U$ maps to
vectors of utility-maximizing bundles, listing one bundle for each
agent, which may overlap in general.
We also have the seller's indirect utility function and supply
correspondence, defined respectively as
\begin{eqnarray}
s(p) & = & \max_{q \in F}\; p^\trans q, \label{eq:supply-util} \\
S(p) & = & \argmax_{q \in F}\; p^\trans q. \label{eq:supply-corresp}
\end{eqnarray}
We say that prices $\optp$ are \emph{market clearing} if
$U(\optp) \cap S(\optp) \ne \emptyset$, and for any allocation
$\optq \in U(\optp) \cap S(\optp)$ we say that prices $\optp$
\emph{support} $\optq$. More explicitly, this means that: 1)
$\optq \in \pset$, so it represents a feasible allocation; 2) the
assigned bundles according to $\optq$ maximize the agents' utilities
at prices $\optp$; 3) allocation $\optq$ maximizes the seller's
revenue at prices $\optp$. Since the agents and seller would each
willingly select allocation $\optq$ when faced with prices $\optp$,
the prices are market clearing in this sense.
If there exists prices $\optp$ supporting an allocation $\optq$, then
the allocation is efficient, because for any $q\in F$ we have
\begin{equation}\label{eq:eff-alloc-derivation}
v^\top \optq = (v^\top \optq - \optp^\top \optq) + \optp^\top \optq \geq (v^\top q -
\optp^\top q) + \optp^\top q = v^\top q,
\end{equation}
where the inequality holds because $\optq \in U(\optp)$
and $\optq \in S(\optp)$. An iterative auction proceeds by updating a
provisional allocation and prices given agent bids in each round,
until the prices support the allocation. According to
derivation~\eqref{eq:eff-alloc-derivation}, this provides a
certificate that an efficient allocation has been reached.

\if 0
\paragraph{Bids}
In each round of the auction, bidders announce which bundles interest them after observing associated prices.
The collection of all bids in a given round is denoted by the vector $b \in \R^{n2^m}$. More specifically, we impose the additional constraint that $b \in \{0,1\}^{n2^m}$, meaning
bidders must request either 0 or 1 unit of each bundle, and moreover that $b_{(i,j)}$ is 1 for at most
one bundle $i$ for each bidder $j$. Naturally bids are linked to the agents' personal valuations as well as the prices in the current round.

\begin{definition}
  \label[definition]{def:consistent}
  A value vector $v$ is \emph{consistent} with bid vector $b$ and price vector $p$ if it satisfies
  the following properties, where $j\in [n]$ is arbitrary.
  \begin{itemize}
    \item
      If $b_{(i,j)} = 0$ for every $i\in[2^m]$, then
      $v_{(i,j)} < p_{(i,j)}$ for every $i \in [2^m]$.
    \item
      If $b_{(k,j)} = 1$ for some coordinate $k\in[2^m]$,
      then $(v - p)_{(k,j)} = \max_{i\in 2^m} (v - p)_{(i,j)} \geq 0$.
  \end{itemize}
\end{definition}
\noindent
In words, a value vector is consistent with a bid vector if, for each agent, the bid encodes a bundle that maximizes the agent's utility (i.e., value minus price) at the given prices. In the iterative auctions we consider, valuation information is only communicated indirectly and incompletely via bids. Under some agent models, it is possible that no valuation profile is consistent with all bids placed throughout the auction, unless constraints are placed on bids via activity rules.
\fi

The convex dual to the problem of computing an efficient allocation is
the problem of finding market clearing prices. As the minimization
problem is on the price parameter, we will refer to the price
optimization as the \emph{primal} objective and the allocation
optimization as the \emph{dual} objective.\footnote{This is consistent
  with the usual convention in convex analysis, but unfortunately
  conflicts with the convention in the auctions literature, where the
  primal is typically the allocation problem and the dual is the
  pricing problem. We adopt the former convention because it becomes
  much simpler to directly apply convex analysis results in the proofs.}
In what follows, `$\ker$' refers to the kernel (i.e., nullspace) of a matrix.
Also, given any two subsets of Euclidean space $A,B$, the sum $A + B$ is defined in the natural way: $A + B = \{ x + y : x \in A, y \in B \}$.
\begin{theorem} Consider the optimization problems
\label{thm:alloc-price-dual}
\begin{align}
  &\inf \left\{u(Gw) + s(Gw) : w \in \bR^d \right\},\label{eq:alloc-price-primal} \\
  &\sup\left\{v^\top q : q \in \cset \cap (\ker(G^\top) + F) \right\}. \label{eq:alloc-price-dual}
\end{align}
Then the primal value in~\eqref{eq:alloc-price-primal} equals the dual value in~\eqref{eq:alloc-price-dual},
and both primal and dual optima are attained.
Moreover, allocation $\optw$ and price parameter $\optq$ are optimal primal and dual solutions if 
\begin{equation}\label{eq:comp-slack}
  \optq \in U(G\optw) \cap \Big( \ker(G^\top) + S(G\optw) \Big).
\end{equation}
%
\end{theorem}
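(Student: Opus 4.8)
The plan is to read \eqref{eq:alloc-price-primal} as a Fenchel--Rockafellar dual pair built from a convex function precomposed with the linear map $G$. Set $h(p) = u(p) + s(p)$. Since $\cset$ and $F$ are polytopes, each of $u$ and $s$ is a maximum of finitely many affine functions of $p$, hence a finite-valued polyhedral convex function; so is $h$, and the primal is simply $\inf_w h(Gw)$. Because $h$ is finite on all of $\bR^{n\ell}$, the polyhedral constraint qualification for Fenchel duality holds automatically, and it yields
\[
  \inf_w h(Gw) \;=\; -\inf\bigl\{ h^*(y) : y \in \ker(G^\top) \bigr\} .
\]

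The next step is to evaluate the conjugate $h^*$. Writing $u$ and $s$ out as maxima over $\cset$ and $F$ and exchanging $\sup_p$ with the resulting inner infima over the compact set $\cset \times F$ — a bilinear minimax, justified by Sion's minimax theorem since the integrand is affine in $p$ and affine in $(q_1,q_2)$ — one obtains
\[
  h^*(y) \;=\; \inf\bigl\{ -v^\top q_1 : q_1 \in \cset,\ y + q_1 \in F \bigr\},
\]
which is finite precisely when $y \in F - \cset$. Plugging this into the displayed identity and using that $\ker(G^\top)$ is a linear subspace, the nested infimum over $y$ and $q_1$ collapses to $\sup\{ v^\top q_1 : q_1 \in \cset \cap (F + \ker(G^\top)) \}$; since $\ker(G^\top) + F = F + \ker(G^\top)$, this is exactly \eqref{eq:alloc-price-dual}, so the two problems share a common optimal value. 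For attainment: the dual feasible set $\cset \cap (\ker(G^\top) + F)$ is the intersection of the compact polytope $\cset$ with a closed (indeed polyhedral) set, hence compact, and it is nonempty since it contains the origin, so the linear dual objective attains its supremum; and $w \mapsto h(Gw)$ is a real-valued polyhedral convex function which, being bounded below by the finite dual value (weak duality, recorded next), attains its infimum.

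Finally I would check that \eqref{eq:comp-slack} is sufficient for optimality. Weak duality is direct: for any $w$ and any dual-feasible $q = k + f$ with $k \in \ker(G^\top)$ and $f \in F$, we have $u(Gw) \ge v^\top q - (Gw)^\top q$ and $s(Gw) \ge (Gw)^\top f = (Gw)^\top q$, the last equality since $G^\top k = 0$, so $h(Gw) \ge v^\top q$. Now take $\optw, \optq$ satisfying \eqref{eq:comp-slack} and write $\optq = k + q_2$ with $k \in \ker(G^\top)$ and $q_2 \in S(G\optw)$. Then $\optq \in U(G\optw)$ gives $u(G\optw) = v^\top \optq - (G\optw)^\top \optq$ exactly, while $q_2 \in S(G\optw)$ together with $G^\top k = 0$ gives $s(G\optw) = (G\optw)^\top q_2 = (G\optw)^\top \optq$; adding these, $h(G\optw) = v^\top \optq$. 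Since $\optw$ is trivially primal feasible and $\optq$ is dual feasible ($\optq \in \cset$ because $U(G\optw) \subseteq \cset$, and $\optq \in \ker(G^\top) + F$ because $S(G\optw) \subseteq F$), weak duality and the equality of optimal values force $\optw$ and $\optq$ to be primal- and dual-optimal, respectively.

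The one place that needs genuine care is the duality step itself — exchanging $\inf_w$ with the inner supremum (equivalently, the conjugate manipulation) when $w$ ranges over the unbounded space $\bR^d$. This is where polyhedrality does the work: either one invokes the polyhedral form of the Fenchel duality theorem, which requires only that the relevant domains overlap rather than a relative-interior condition, or, equivalently, one observes that the entire problem is a linear program in $(w, q_1, q_2)$ and appeals to strong LP duality. The remaining ingredients — the conjugate computation and the two attainment arguments — are routine consequences of compactness and of the fact that a polyhedral convex function attains its infimum whenever it is bounded below.
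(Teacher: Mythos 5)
Your proof is correct, and it takes a noticeably different route from the paper's. The paper does not prove this theorem in isolation: it establishes a general duality lemma (\Cref{fact:duality:basic}) for an arbitrary closed convex regularizer $r$, computed via the conjugates $u^*$, $s^*$ and the infimal convolution $(u+s)^* = u^*\,\square\,s^*$ together with the Borwein--Lewis Fenchel duality theorem, obtains primal attainment from a separate coercivity argument (\Cref{fact:opt:basic}, which restricts to $\ker(G)^\perp$ and uses the lower bounds of \Cref{fact:u_s_nice}), and reads off the sufficient conditions \eqref{eq:comp-slack} from the abstract subdifferential optimality conditions, using $\partial u = -U$ and $\partial s = S$. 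You instead specialize immediately to the polyhedral, unregularized case: you compute $h^* $ for $h = u+s$ directly by a bilinear minimax exchange (Sion), invoke polyhedral Fenchel duality (equivalently, strong LP duality) where no relative-interior qualification is needed, get dual attainment from compactness of $\cset \cap (\ker(G^\top)+F)$, get primal attainment from the fact that a polyhedral convex function bounded below attains its infimum, and verify the sufficiency of \eqref{eq:comp-slack} by an explicit weak-duality calculation ($h(G\optw) = v^\top\optq$ with $\optq$ dual feasible). Both arguments are sound; what the paper's heavier machinery buys is reuse---the same lemma with $r(w)=\lambda\|w\|_2^2/2$ yields \Cref{thm:alloc-price-dual-reg}, and the coercivity lemma is needed again for the stochastic setting where the objective is an expectation over $\nu$ and polyhedrality is less convenient---whereas your argument is more elementary and self-contained for the $\lambda=0$ statement, and your direct verification of \eqref{eq:comp-slack} is arguably cleaner than the paper's change-of-variables manipulation of the abstract optimality conditions.
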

To understand the dual formulation~\eqref{eq:alloc-price-dual}, which
captures the allocation problem, note first that the objective is the
efficiency $v^\top q$ of allocation $q$. The first part of the
constraint is simply $q \in \cset$, meaning agents are allocated bundles
from their consumption sets. This is combined with the constraint
$q \in \ker(G^\top) + F$. If the latter set were simply $F$, then the
feasible set would reduce to $\cset \cap F = F$, which is simply the
convex hull of set of feasible allocations. Recall that the role of
$G$ is to constrain the possibilities for the pricing space. Dually,
restricting the dimension of the rows in $G$ expands $\ker(G^\top)$,
and $\ker(G^\top) + F$ becomes a relaxation of $F$. This discussion
leads to the following result.
\begin{corollary}
\label[corollary]{integrality}
 If $G$ has full row rank, then there is an integer optimal solution
 $\optq$ to the dual objective~\eqref{eq:alloc-price-dual}, and for any
 primal optimal solution $\optw$, prices $\optp = G\optw$ support
 allocation $\optq$.
\end{corollary}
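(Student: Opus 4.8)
The plan is to use the full–row–rank hypothesis to collapse the kernel term appearing in the dual~\eqref{eq:alloc-price-dual}, which reduces it to the ordinary efficient–allocation linear program, and then to read off the support (market clearing) property from strong duality in Theorem~\ref{thm:alloc-price-dual}.

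First I would note that if $G \in \bR^{n\ell \times d}$ has full row rank, then $G^\top$ has full column rank and is therefore injective, so $\ker(G^\top) = \{0\}$. Consequently $\ker(G^\top) + \pset = \pset$, and since $\pset \subset \cset$ the feasible set of~\eqref{eq:alloc-price-dual} is $\cset \cap \pset = \pset$; thus the dual problem is exactly $\sup\{v^\top q : q \in \pset\}$, the efficient–allocation problem~\eqref{eq:eff-alloc}. By~\eqref{def:production-set}, $\pset$ is a bounded polytope whose extreme points are the $0\text{--}1$ indicator vectors of feasible indivisible allocations, and a linear objective over a polytope attains its maximum at an extreme point; since Theorem~\ref{thm:alloc-price-dual} already guarantees the dual optimum is attained, there is an integer optimal solution $\optq \in \pset$. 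Fix one such $\optq$.

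Now let $\optw$ be any primal optimal solution (which exists by Theorem~\ref{thm:alloc-price-dual}), write $\optp = G\optw$, and let $V$ be the common primal/dual value. Because $\optq \in \pset \subset \cset$ we have $u(\optp) \ge (v - \optp)^\top \optq$ and $s(\optp) \ge \optp^\top \optq$; adding these and using primal optimality of $\optw$ and dual optimality of $\optq$,
\[
V \;=\; u(\optp) + s(\optp) \;\ge\; (v-\optp)^\top\optq + \optp^\top\optq \;=\; v^\top\optq \;=\; V,
\]
so both inequalities are equalities. Equality in the first yields $\optq \in \argmax_{q\in\cset}(v-\optp)^\top q = U(\optp)$, and equality in the second yields $\optq \in \argmax_{q\in\pset}\optp^\top q = S(\optp)$; hence $\optq \in U(\optp) \cap S(\optp)$, i.e.\ $\optp = G\optw$ supports $\optq$. (Equivalently, this chain verifies the complementary–slackness condition~\eqref{eq:comp-slack} once $\ker(G^\top)=\{0\}$ is substituted in.)

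The routine bookkeeping is the identities $\ker(G^\top)=\{0\}$, $\cset \cap \pset = \pset$, and integrality of the vertices of $\pset$. The only point that deserves care is that a \emph{single} integral $\optq$ must work simultaneously for \emph{every} primal optimum $\optw$; this is handled by fixing one integral dual optimum at the outset, since the sandwich inequality above holds for that $\optq$ against an arbitrary primal optimal $\optw$ (and this direct derivation also sidesteps needing a converse to the ``if'' in Theorem~\ref{thm:alloc-price-dual}).
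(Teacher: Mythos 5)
Your proof is correct, and the first half is exactly the paper's argument: full row rank gives $\ker(G^\top)=\{0\}$, so the dual feasible set collapses to $\cset\cap\pset=\pset$, whose integer extreme points (together with attainment from Theorem~\ref{thm:alloc-price-dual}) yield an integer optimal $\optq$. Where you diverge is the supporting-prices claim. The paper simply observes that the optimality conditions~\eqref{eq:comp-slack} collapse to $\optq\in U(\optp)\cap S(\optp)$ when $\ker(G^\top)=\{0\}$; strictly read, Theorem~\ref{thm:alloc-price-dual} states those conditions only as sufficient for optimality, so the paper's one-line argument tacitly uses their necessity (which does hold, via the Fenchel-duality characterization in the appendix, but is not what the theorem literally asserts). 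You instead derive the support property directly from strong duality: since $\optq\in\pset\subset\cset$, the inequalities $u(\optp)\ge(v-\optp)^\top\optq$ and $s(\optp)\ge\optp^\top\optq$ sandwich the common optimal value and must be tight, giving $\optq\in U(\optp)\cap S(\optp)$ for an arbitrary primal optimum $\optw$. This is a slightly longer but more self-contained route: it needs only equality of optimal values and attainment, avoids any appeal to the converse of the stated optimality conditions, and makes explicit that one fixed integral $\optq$ is supported by every primal optimal price vector. Both arguments are valid; the paper's is shorter given the machinery already developed in the appendix.
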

For instance, $G$ has full row rank if it encodes personalized bundle
prices, or personalized polynomial prices of degree $m$, which shows
that such prices can support efficient integer allocations (i.e.,
allocations of indivisible items). The former case was first proved
by~\citet{bikhchandani2002package}. If the set of relevant bundles $X$
is restricted, lower-dimensional prices may suffice. The condition
given in~\Cref{integrality} is sufficient but not necessary, so in
practice lower-dimensional prices may still be able to clear the
market even when $X$ is large. In order to interpret the results in
the sequel, one can assume that $G$ is sufficiently expressive to
ensure that~\eqref{eq:alloc-price-dual} has an integer optimal
solution; if this is not the case, our results still meaningfully
bound convergence rates under divisible items.

Let us now consider the primal more closely. We write $\scrD : \R^d \to \R$ to represent the primal objective function over price parameters $w$, parametrized by the valuation vector $v$:
\begin{equation*}
  \scrD(w;v) = u(Gw;v) + s(Gw).
\end{equation*}
Recall that the prices are $p = Gw$. The first term is aggregate
indirect utility, namely the maximum surplus that agents can achieve
by each selecting their preferred bundle at prices $p$ (again, the
bundles may overlap). Prices should be set high to minimize this term.
The second term is the seller's indirect utility, namely the maximum
revenue possible over all feasible allocations, under prices $p$.
Prices should be set low to minimize this term. Thus the two terms
lead prices to strike a balance between demand and supply.
Looking ahead to the iterative auction of
Section~\ref{sec:iter-auction}, the auction can be construed as a
subgradient method to optimize the primal
objective~\eqref{eq:alloc-price-primal}, obtaining subgradient
information from the agents' bids, which lie in the dual space
associated with objective~\eqref{eq:alloc-price-dual}.

A difficulty with formulations~\eqref{eq:alloc-price-primal}
and~\eqref{eq:alloc-price-dual} is that the optimal prices in the
primal may not be unique, as both primal and dual are linear programs.
As a result it is not possible analyze convergence of the actual
prices, only convergence in objective value. To obtain a strictly
convex objective and unique solution, we can introduce a
regularization term $\norm{w}_2^2/2 $ with weight $\lambda > 0$ into
the primal objective. The duality result generalizes as follows.
\begin{theorem} \label{thm:alloc-price-dual-reg}
Given weight $\lambda > 0$, consider the optimization problems
\begin{align}
  \label{eq:alloc-price-primal-reg}
  & \inf \left\{u(Gw) + s(Gw) + \frac{\lambda}{2} \|w\|_2^2 : w \in \bR^d \right\}, \\
  \label{eq:alloc-price-dual-reg}
  & \sup\left\{v^\top q - \frac{1}{2\lambda}\| G^\top (q - q') \|_2^2 : q \in \cset, q' \in F \right\}.
\end{align}
Then the primal value in \eqref{eq:alloc-price-primal-reg} equals the dual value in \eqref{eq:alloc-price-dual-reg}, and both the primal and dual optima are attained. Furthermore, the primal optimum $\optw$
is unique. Price parameter $\optw$ and allocations $(\optq, \optq')$ are optimal primal and dual solutions if
\begin{equation}\label{eq:comp-slack-reg}
\optq \in U(G\optw),\qquad \optq' \in S(G\optw), \qquad G^\top (\optq -
\optq') = \lambda \optw.
\end{equation}
\end{theorem}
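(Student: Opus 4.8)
The plan is to derive Theorem~\ref{thm:alloc-price-dual-reg} by Lagrangian/Fenchel duality, mirroring the unregularized argument of Theorem~\ref{thm:alloc-price-dual} but now exploiting strong convexity of the regularizer. First I would rewrite the primal objective by unfolding the definitions of $u$ and $s$ from \eqref{eq:agg-demand-util} and \eqref{eq:supply-util}, obtaining
\begin{equation*}
\inf_{w \in \bR^d}\ \Big[\ \sup_{q \in \cset}\big(v^\top q - (Gw)^\top q\big) + \sup_{q' \in F}\big((Gw)^\top q'\big) + \tfrac{\lambda}{2}\|w\|_2^2\ \Big].
\end{equation*}
Since $u$ and $s$ are finite convex functions on all of $\bR^{n\ell}$ (maxima of linear functions over nonempty compact polytopes) and the regularizer is continuous, the inner objective is convex in $w$, coercive (the quadratic term dominates), and the suprema are attained; so the primal infimum is attained at a unique $\optw$ by strict convexity. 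The next step is to swap the order of the inf and the two sups: because for fixed $(q,q')$ the bracketed expression is convex in $w$ and concave (in fact linear plus the min of a concave quadratic — here I would be careful and instead move the $\inf_w$ inside after collecting the $w$-dependent terms), I would invoke a minimax theorem. Cleaner: group the $w$-terms as $\tfrac{\lambda}{2}\|w\|_2^2 - w^\top G^\top(q - q')$ and minimize analytically over $w$, giving $\optw = \tfrac{1}{\lambda}G^\top(q-q')$ with value $-\tfrac{1}{2\lambda}\|G^\top(q-q')\|_2^2$. Substituting back yields exactly the dual objective \eqref{eq:alloc-price-dual-reg}, so modulo the interchange of $\inf_w$ with $\sup_{q,q'}$ we get weak duality immediately and strong duality from the minimax theorem.

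For the interchange I would apply Sion's minimax theorem (or the standard Fenchel–Rockafellar duality theorem from \cite{ROC,HULL}): the function $L(w;q,q') = v^\top q - w^\top G^\top(q-q') + \tfrac{\lambda}{2}\|w\|_2^2$ is convex in $w$ for each $(q,q')$ and concave (indeed affine) in $(q,q')$ for each $w$, the set $\cset \times F$ is compact and convex, and $\bR^d$ is convex; hence $\inf_w \sup_{q,q'} L = \sup_{q,q'} \inf_w L$. This gives equality of the primal value in \eqref{eq:alloc-price-primal-reg} and the dual value in \eqref{eq:alloc-price-dual-reg}. Attainment on the dual side follows since the dual objective is continuous and $\cset \times F$ is compact.

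Finally, for the optimality conditions \eqref{eq:comp-slack-reg}: the relation $\optw = \tfrac{1}{\lambda}G^\top(\optq - \optq')$ is exactly the stationarity condition of the inner minimization over $w$ (setting the gradient $\lambda w - G^\top(q-q')$ to zero), and once $w$ is fixed at $\optw$ the outer suprema decouple into $\sup_{q\in\cset}(v - G\optw)^\top q$ and $\sup_{q'\in F}(G\optw)^\top q'$, whose argmax sets are precisely $U(G\optw)$ and $S(G\optw)$ by \eqref{eq:agg-demand-corresp} and \eqref{eq:supply-corresp}. I would check that a triple $(\optw,\optq,\optq')$ satisfying all three conditions achieves equal primal and dual values (plug in and verify the chain of inequalities collapses), which by weak duality certifies joint optimality; conversely any optimal pair must satisfy them. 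I expect the main obstacle to be the clean justification of the inf–sup interchange — making sure the hypotheses of the chosen minimax/duality theorem are literally met (convex–concave structure, compactness of $\cset \times F$, finiteness and continuity everywhere) — rather than any of the subsequent algebra, which is routine once the quadratic in $w$ is minimized explicitly.
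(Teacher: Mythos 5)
Your proposal is correct, but it takes a genuinely different route from the paper. The paper proves a single general duality lemma (Lemma~\ref{fact:duality:basic}) for an arbitrary closed convex, bounded-below regularizer $r$: it computes the conjugates $u^*$ and $s^*$, uses polyhedrality of $u$ and $s$ to get an exact infimal convolution for $(u+s)^*$, invokes the Fenchel duality theorem of \cite{borwein2010convex}, and then reads off the optimality conditions from subdifferential calculus; Theorem~\ref{thm:alloc-price-dual-reg} follows by specializing $r(w)=\lambda\|w\|_2^2/2$, so that $r^*(y)=\|y\|_2^2/(2\lambda)$, with existence of $\optw$ supplied by a separate coercivity lemma (Lemma~\ref{fact:opt:basic}) and uniqueness by strict convexity. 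You instead exploit the specific quadratic regularizer: you minimize the saddle function $L(w;q,q')=v^\top q - w^\top G^\top(q-q')+\tfrac{\lambda}{2}\|w\|_2^2$ over $w$ in closed form (yielding $w=\tfrac1\lambda G^\top(q-q')$ and exactly the dual objective), justify the inf--sup interchange by Sion's minimax theorem using compactness and convexity of $\cset\times F$ together with convexity in $w$ and affinity in $(q,q')$, get dual attainment from continuity on a compact set and primal attainment/uniqueness from coercivity and strict convexity, and certify the sufficiency of \eqref{eq:comp-slack-reg} by plugging in and checking that the primal and dual values coincide (they both equal $v^\top\optq-\tfrac{\lambda}{2}\|\optw\|_2^2$), which with weak duality is all the theorem asks. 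Your argument is more elementary and self-contained for this particular statement; what the paper's heavier machinery buys is generality --- the same lemma covers the unregularized case $\lambda=0$ of Theorem~\ref{thm:alloc-price-dual} (where your closed-form inner minimization is unavailable and polyhedrality is doing real work for attainment) and, as the paper notes, other choices of convex regularizer. Two small remarks: the converse direction you mention ("any optimal pair must satisfy them") is not claimed by the theorem and is not needed; and your weak-duality step is immediate from the inf--sup structure, so no separate argument is required there.
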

The original duality relation derived in Theorem~\ref{thm:alloc-price-dual} can be viewed as the limiting case of Theorem~\ref{thm:alloc-price-dual-reg} when we set $\lambda = 0$. To make this more transparent, first observe that the constraint in the dual objective~\eqref{eq:alloc-price-dual} can be re-written as
\begin{eqnarray}
G^\top q = G^\top q' &\ & (q \in \cset,\, q' \in F).
\end{eqnarray}
To relax the constraint in this form, we can replace it with a penalty term in the objective that quantifies the discrepancy between demand and supply according to the squared norm, weighed according to $\lambda > 0$:
$$
\frac{1}{2\lambda} \|G^\top (q - q')\|_2^2 .
$$
This leads to the relaxed dual formulation
in~\eqref{eq:alloc-price-dual-reg}. We will write $\scrD_{\lambda}$ to
refer to the objective~\eqref{eq:alloc-price-primal-reg} with
regularization term weighed by $\lambda > 0$. As $\lambda$ tends to
$0$, the original constraint is satisfied exactly (supply matches
demand), so by convention $\scrD_0 \equiv \scrD$. In the regularized
primal formulation \eqref{eq:alloc-price-primal-reg}, $\lambda > 0$
shrinks the price parameter vector towards 0.
The proof of Theorems~\ref{thm:alloc-price-dual}
and~\ref{thm:alloc-price-dual-reg} appears in
Appendix~\ref{sec:proofs}. The case of $\lambda = 0$ follows from
linear programming duality. When $\lambda > 0$, general convex duality
(Fenchel duality) is invoked. The duality proof admits other choices
for the convex regularizer besides the squared norm, which could lead
to improved bounds on convergence rates.

The following result bounds the magnitude of the optimal price parameter, which is an essential element of our convergence rate analyses.
\begin{proposition} \label[proposition]{prop:w-bounds}
  Consider the optimization problems in~\eqref{eq:alloc-price-primal} and~\eqref{eq:alloc-price-primal-reg},
  and set $V = \|v\|_\infty$.
  \begin{itemize}
    \item
      If $G = G_{\id}$, then every optimal $\bar w$ satisfies
      $\|\bar w\|_\infty \leq (n+1)V$
      and
      $\|\bar w\|_2 \leq (n+1) V \sqrt{\ell}$.
   \item
      If $G = \spoly{r}$ for some integer $r\geq 1$, then every optimal $\bar w$ satisfies
      $\|\bar w\|_\infty \leq (n+1) V 2^r$
      and
      $\|\bar w\|_2 \leq (n+1) V 2^r m^{r/2}$.
  \end{itemize}
\end{proposition}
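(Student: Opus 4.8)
The idea is to bound the optimal \emph{prices} $\bar p = G\bar w$ first, in a way that does not depend on the pricing scheme, and only afterwards translate this into a bound on the parameter $\bar w$ for each of $G_{\id}$ and $\spoly r$. The key preliminary fact is that the objective is small at $w=0$: since $s(0)=\max_{q\in F}0=0$ and $u(0;v)=\sum_{i}\max\{0,\max_{x}v_i(x)\}\le nV$ (each agent contributes a nonnegative amount no larger than $V$), we have $\scrD(0)=\scrD_\lambda(0)=u(0;v)\le nV$. Because $u$, $s$, and $\tfrac\lambda2\|\cdot\|_2^2$ are all nonnegative, optimality of $\bar w$ in either \eqref{eq:alloc-price-primal} or \eqref{eq:alloc-price-primal-reg} forces both $u(G\bar w)\le nV$ and $s(G\bar w)\le nV$.

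Next I would extract an $\ell_\infty$ bound on $\bar p=G\bar w$ from these two inequalities. Evaluating $s(G\bar w)=\max_{q\in F}(G\bar w)^\top q$ at the extreme point of $F$ that gives bundle $x$ to agent $i$ and nothing to the other agents yields $\bar p_i(x)\le nV$ for every $i\in[n]$, $x\in X$. On the other side, $u(G\bar w)=\sum_i\max\{0,\max_x(v_i(x)-\bar p_i(x))\}\le nV$ makes each summand at most $nV$, so $v_i(x)-\bar p_i(x)\le nV$, and since $v_i(x)\ge -V$ this gives $\bar p_i(x)\ge -(n+1)V$. Hence $\|G\bar w\|_\infty\le (n+1)V$.

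It then remains to invert $w\mapsto Gw$ on the coordinates that matter. For $G=G_{\id}$ the prices are the parameters, $\bar w(x)=\bar p_i(x)$, so $\|\bar w\|_\infty\le (n+1)V$ and, as $\bar w\in\mathbf R^{\ell}$, $\|\bar w\|_2\le\sqrt\ell\,\|\bar w\|_\infty$. For $G=\spoly r$ the price of $x$ is the truncated zeta sum $\bar p_i(x)=\sum_{T\subseteq x,\,|T|\le r}\bar w(T)$; on bundles $x$ with $|x|\le r$ the truncation is inactive, so it is the full zeta transform over subsets of size at most $r$ and M\"obius inversion gives $\bar w(T)=\sum_{S\subseteq T}(-1)^{|T|-|S|}\bar p_i(S)$. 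Bounding each price by $(n+1)V$ yields $|\bar w(T)|\le 2^{|T|}(n+1)V\le 2^r(n+1)V$, the stated $\ell_\infty$ bound; for the $\ell_2$ bound use $\|\bar w\|_2\le\sqrt d\,\|\bar w\|_\infty$ with $d=\sum_{k=1}^r\binom mk$, together with the counting estimate $d\le m^r$ (map a nonempty set of size $\le r$ to the $r$-tuple listing its elements increasingly and padding with the largest; this is injective).

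I expect the last paragraph to be the only place requiring care: one must make sure the low-order bundles whose prices feed the M\"obius inversion are genuinely constrained, i.e. that $\bar w$ is pinned down by $\bar p$ on those coordinates — for the regularized primal this is automatic since the unique $\bar w$ is the minimum-norm preimage of $\bar p=G\bar w$, and for the unregularized primal it holds whenever, e.g., $X$ contains all bundles of size at most $r$ — and then to run the inversion and the $\sum_{k\le r}\binom mk\le m^r$ count cleanly. Everything else is bookkeeping.
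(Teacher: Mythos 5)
Your proposal is correct and follows essentially the same route as the paper: bound the objective at $w=0$ by $nV$, use nonnegativity of $u$, $s$ and the regularizer together with extreme points of $\cset$ and $F$ to get $\|G\bar w\|_\infty \leq (n+1)V$, then read off $\bar w$ directly for $G_{\id}$ and via M\"obius inversion (with the count $d \leq m^r$) for $\spoly{r}$. The paper merely packages these steps into auxiliary lemmas stated for a general distribution over valuations (and its M\"obius step carries the same implicit assumption about small bundles being priced that you flag), so the substance is identical.
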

The $V$ bound here amounts to a choice of units. It could be normalized to $V = 1$, but we choose to keep it explicit to clarify how it affects the choice of step-size in the auction.
%
%
The proposition establishes that the optimal price parameter (even when non-unique) lies in a bounded set. Importantly, the size of this set depends on the number of degrees of freedom in $w \in \R^d$, meaning the dimension $d$. This implies that the price parameters $w$ can be constrained during the auction to lie in a ball of sufficiently bounded radius without affecting the optimum. We will see that the radius of this ball affects the convergence rate of the auction; simply put, increasing the representation power of $G$ increases the radius of the ball.

As a sketch, the proof first proceeds by bounding $\norm{p}_{\infty}$,
where recall that $p = Gw$ lists the explicit bundle prices. The
reasoning is that as the components of $p$ become large, the second
term of $\scrD$ grows, and as the components of $p$ become small, the
first term grows. Taken together, it follows that the optimum must be
bounded.
We then obtain bounds on $\norm{w}_{\infty}$ given the bound on
$\norm{p}_{\infty}$. For $G = \Gid$ this is trivial because the latter
is a stack of identity matrices. The case of $G = \Gpoly{r}$ requires
a combinatorial argument. Finally, a bound on $\norm{w}_{\infty}$
immediately yields a bound on $\norm{w}_2$.


\section{Iterative auction.}
\label{sec:iter-auction}

\begin{figure}
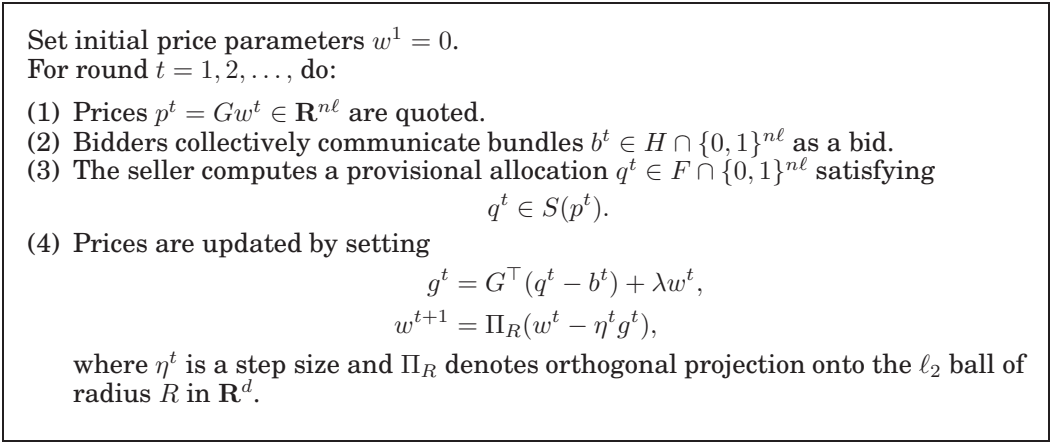

\begin{framed}
Set initial price parameters $w^1 = 0$.\\
For round $t = 1, 2, \ldots,$ do:
    \begin{enumerate}
      \item \label{step:price-quote}
        Prices $p^t = G w^t \in \R^{n\ell}$ are quoted.
      \item \label{step:agents-bids}
        Bidders collectively communicate bundles $b^t \in \cset \cap \{0,1\}^{n\ell}$ as a bid.
      \item \label{step:prov-allocation}
        The seller computes a provisional allocation
        $q^t\in F \cap \{0,1\}^{n\ell}$ satisfying $$q^t \in S(p^t).$$
     \item \label{step:price-update} Prices are updated by setting
        \begin{align*}
          g^t
          &=
          G^\top(q^t -  b^t) + \lambda w^t,
          \\
          w^{t+1}
          &=
          \Pi_R(w^t - \eta^{t} g^t),
      \end{align*}
      where $\eta^t$ is a step size
      and $\Pi_R$ denotes orthogonal projection onto the
      $\ell_2$ ball of radius $R$ in $\R^d$.
    \end{enumerate}
\end{framed}
\caption{Subgradient auction.  \label{fig:subg-auction}}
\end{figure}



We now introduce the iterative auction that will be the subject of our
analysis.  The auction is described in Figure~\ref{fig:subg-auction}
and follows a high-level outline common to many practical auction
designs: 1) prices are quoted; 2) agents place bids on bundles,
meaning offers to purchase bundles at the quoted prices; 3) the seller
computes a provisional allocation based on bids and prices; 4) prices
are updated when there is discrepancy between the bids and
allocation. Note that the agents do not make price offers, so this is
a \emph{clock auction}.

The auction is fully specified given the matrix $G$, the
regularization parameter $\lambda \geq 0$, the step sizes $\eta^t$,
and the projection radius $R$. Since different choices for the matrix
$G$ correspond to different pricing schemes,
Figure~\ref{fig:subg-auction} may more generally be viewed as a class
of auctions. The price update in step~\ref{step:price-update} has a
natural economic interpretation. Bid $b^t$ represents demand, while
allocation $q^t$ represents supply, so $q^t - b^t$ corresponds to
\emph{excess supply} in bundle space. By applying $G^\top$, this maps
to excess supply in feature space, where the price parameter vector
lies. The result is then subtracted away from the current price
parameter $w^t$, which is therefore updated in the direction of
\emph{excess demand}. With regularization ($\lambda > 0$), the update
is pulled back towards the current iterate $w^t$, leading to more
conservative price parameter updates.

The projection in step~\ref{step:price-update} is essentially a
rescaling to ensure that price coordinates are well-behaved, and we
confirm below that the radius can be chosen large enough (relative to
the number of agents and items) to leave the final clearing prices
unaffected. Projection may seem undesirable, since small choices of
$R$ can rule out the (unconstrained) optima guaranteed to exist for
the pricing problem
in~\Cref{thm:alloc-price-dual,thm:alloc-price-dual-reg}. However, as
detailed in \Cref{prop:w-bounds}, it is possible to choose $R$ purely
from knowledge of the matrix $G$ and units $V$ so that the ball of
radius $R$ includes unconstrained optima. This point will rearise in
\Cref{sec:stochastic,sec:arbitrary}, where the iterative auction is
shown to converge to these unconstrained optima under an appropriately
chosen $R$.

Note that in step~\ref{step:agents-bids} of the auction, agents are
required to provide an \emph{integer} bid vector, representing a
bundle with whole items. This is where item indivisibility comes into
play. Under our model, requiring integer bid vectors is unrestrictive,
because an agent's best-response problem is to optimize a linear
function (valuation minus price) over the consumption
set~\eqref{def:consumption-set}, whose extreme points are
integer. Therefore, there is always an integer best-response---in
plainer terms, an agent can always maximize its utility by bidding on
a bundle with whole items. Similarly, the seller is required to
compute an integer allocation in step~\ref{step:prov-allocation}. As
prices are a linear function and the production
set~\eqref{def:production-set} has integer extreme points, this does
not prevent the seller from computing a revenue-maximizing allocation.

It is worth noting that in our auction neither the price parameters
$(w^t)_{t\geq 1}$ nor the bundle prices $(p^t)_{t\geq 1}$, where $p^t
= Gw^t$, are constrained to move monotonically (ascending or
descending pointwise). While most iterative auctions in the literature
are monotone, there is precedent for non-monotonic price paths, for
instance Ausubel's auction for multiple distinct
items~\citep{ausubel2006efficient}. On one hand, non-monotone price
paths allow for highly adversarial bidding behavior that can even
prevent the auction from converging; we return to this issue
in~\Cref{sec:arbitrary} and show how it is addressed by activity
rules. On the other hand, non-monotone auctions also have substantial
advantages when it comes to convergence rates, which makes them more
natural to study in the context of this paper.\footnote{To illustrate,
consider the task of guessing a secret number: One player chooses a
number $y$ within the interval $[0,1]$, and another player must guess
this number to within accuracy $\epsilon > 0$. The game protocol is
that the guesser specifies an $x \in [0,1]$, and the chooser responds
with $\1[ x \geq r]$. With monotonic guesses, it is necessary to make
$\Omega(1/\epsilon)$ queries; on the other hand, non-monotonic
questions allow binary search, meaning $\cO(\ln(1/\epsilon))$
queries.}

We next confirm that there is a direct relationship between the
optimization problems of Section~\ref{sec:opt} and the iterative
auction in Figure~\ref{fig:subg-auction}. First, observe that the
auction makes no mention of agent valuations---valuations affect agent
bidding behavior, but cannot be part of the auction specification. We
say that a bid vector $b \in \cset$ and valuation vector $v \in
\bR^{n\ell}$ are \emph{consistent} with each other at prices $p$ if $b
\in U(p;v)$. In words, this means that the bid vector is a
best-response to prices assuming the given valuations (i.e., the bid
maximizes the agents' aggregate utility).

\begin{proposition}
  \label[proposition]{fact:subgradient}
  Let representation matrix $G$ and regularization parameter $\lambda \geq 0$ be given.
      If bid vector $b^t \in \cset$ is consistent with valuation $v^t$ at prices
      $p^t = Gw^t$, and allocation $q^t \in F$ maximizes the seller's
      revenue at prices $p^t$, then
      \begin{equation}
      g^t = G^\top (q^t - b^t) + \lambda w^t
      \label{eq:subgradient}
      \end{equation}
      is a subgradient of $\scrD_{\lambda}(\,\cdot\, ; v^t)$ at $w^t$.
\end{proposition}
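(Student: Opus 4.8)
The plan is to recognize $\scrD_\lambda(\,\cdot\,;v^t)$ as a sum of three convex functions of $w$ and to exhibit $g^t$ as a sum of subgradients of the three summands at $w^t$. Write $\scrD_\lambda(w;v^t) = \phi_1(w) + \phi_2(w) + \phi_3(w)$ where $\phi_1(w) = u(Gw;v^t)$, $\phi_2(w) = s(Gw)$, and $\phi_3(w) = \tfrac{\lambda}{2}\|w\|_2^2$. Each is finite-valued and convex on $\bR^d$, so the subdifferential sum rule gives $\subdiff\phi_1(w^t) + \subdiff\phi_2(w^t) + \subdiff\phi_3(w^t) \subseteq \subdiff\scrD_\lambda(\,\cdot\,;v^t)(w^t)$; it therefore suffices to show $-G^\top b^t \in \subdiff\phi_1(w^t)$, $G^\top q^t \in \subdiff\phi_2(w^t)$, and $\lambda w^t \in \subdiff\phi_3(w^t)$, and then add the three memberships.

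For $\phi_1$, first show $-b^t$ is a subgradient of $p\mapsto u(p;v^t)$ at $p^t = Gw^t$. Since $u(p;v^t) = \max_{q\in\cset}(v^t - p)^\top q$ is a pointwise maximum of affine functions of $p$, it is convex, and the consistency hypothesis $b^t\in U(p^t;v^t)$ says $b^t$ attains that maximum at $p^t$; a one-line computation then gives $u(p;v^t) \ge (v^t - p)^\top b^t = u(p^t;v^t) + \ip{-b^t}{p - p^t}$ for all $p$. Composing with the linear map $w\mapsto Gw$ (if $\xi\in\subdiff h(Gw)$ then $G^\top\xi\in\subdiff(h\circ G)(w)$, immediate from the subgradient inequality) yields $-G^\top b^t\in\subdiff\phi_1(w^t)$. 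For $\phi_2$, identically: $s(p) = \max_{q\in\pset}p^\top q$ is convex, $q^t\in S(p^t)$ attains the maximum, so $q^t$ is a subgradient of $s$ at $p^t$ and $G^\top q^t\in\subdiff\phi_2(w^t)$. For $\phi_3$, it is differentiable with $\nabla\phi_3(w^t) = \lambda w^t$. Adding the three memberships gives $g^t = G^\top(q^t - b^t) + \lambda w^t \in \subdiff\scrD_\lambda(\,\cdot\,;v^t)(w^t)$.

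There is no serious obstacle here: the argument is a textbook assembly of (i) convexity of a support-function-type maximum, (ii) the characterization of its subgradients via maximizers, (iii) the chain rule for composition with a linear map, and (iv) the sum rule. The only points deserving a moment's care are that all three summands are finite everywhere on $\bR^d$ (so the sum rule applies without any constraint-qualification subtleties), which holds because $\cset$ and $\pset$ are compact, and that the decomposition of the agents' problem over the product set $\cset$ plays no role: we use only that $b^t$ maximizes the aggregate utility at $p^t$, i.e.\ $b^t\in U(p^t;v^t)$, exactly as in the consistency definition stated before the proposition.
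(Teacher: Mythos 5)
Your proposal is correct and takes essentially the same route as the paper: the paper's proof simply cites the standard subgradient rules (subgradients of a maximum via its maximizers, the chain rule for composition with a linear map, and the sum rule for finite convex functions) from Hiriart-Urruty and Lemar\'echal, which are exactly the steps you spell out explicitly. Nothing is missing; your added care about finiteness of the summands (so the sum rule needs no constraint qualification) is exactly the right remark.
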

%
\noindent
The full proof is given in the appendix.
This proposition connects the optimization problem and the iterative
auction; specifically, since the subgradient expression
in~\eqref{eq:subgradient} matches the update step in the auction,
namely step~\ref{step:price-update} in Figure~\ref{fig:subg-auction},
the auction is performing subgradient descent on $\scrD_{\lambda}$. Under this
viewpoint we can analyze auction convergence using well-developed
techniques.

%
\paragraph{Implementation}
Athough our focus is on the theoretical convergence properties of the
subgradient auction in Figure~\ref{fig:subg-auction}, let us say a few
words about how it might be implemented in practice. This is a salient
question given that the auction, as specified, manipulates
high-dimensional vectors and computes allocations over a complex polytope.

Note first that although bids $b^t$ and allocations $q^t$ have
dimension $n\ell$, and $\ell$ may be exponential in $m$, the
restriction that they should be integer in
steps~\ref{step:agents-bids} and~\ref{step:prov-allocation} means that
they have at most $n$ non-zero entries and are succinct to
communicate. The sparsity of bids and allocations also means that the
price update in step~\ref{step:price-update} can be computed
efficiently. As mentioned previously, the projection in the price
update is simply a rescaling of the price parameters.

As the representation matrix $G$ is commonly known, communicating
prices is a matter of communicating $w^t \in \bR^d$, which can be done
directly for linear prices or polynomials of low degree. For larger
$d$, the following result gives a dual representation of $w^t$ with
storage on the order of $nt$ coefficients. In the statistical learning
literature, such results are known as \emph{representer
  theorems}~\citep{SVM}.

\begin{proposition}
  \label[proposition]{fact:subgradient:representer}
  Let matrix $G$ and regularization parameter $\lambda \geq 0$ be given.
  For all $t$, $w^t \in \SPAN(G^\top)$.
  Indeed, setting $\gamma^1 = 1$
  and $\gamma^{t+1} = R / \max\{R, \norm{w^t - \eta^t g^t}_2  \} $,
  $w^t$ has the form
  \[
    w^t
    = G^\top \sum_{s=1}^{t-1} \gamma^{t} \eta^s (b^s - q^s) \prod_{j =
      s+1}^{t-1} \gamma^{j+1}(1  - \lambda \eta^j),
  \]
  meaning $w^t$ may be reconstructed from $(\eta^s, b^s, q^s, \gamma^s)_{s=1}^{t-1}$.
\end{proposition}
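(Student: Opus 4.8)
The plan is to derive both claims from a single rewriting of the price update. Substituting $g^t = G^\top(q^t-b^t)+\lambda w^t$ into step~\ref{step:price-update} gives
\[
w^t - \eta^t g^t \;=\; (1-\lambda\eta^t)\,w^t \;+\; \eta^t\,G^\top(b^t-q^t),
\]
and, because $\Pi_R$ is orthogonal projection onto the $\ell_2$ ball of radius $R$ centered at the origin, it rescales any vector $x$ by the nonnegative scalar $R/\max\{R,\norm{x}_2\}$; for $x = w^t-\eta^t g^t$ this scalar is exactly $\gamma^{t+1}$. Hence
\[
w^{t+1} \;=\; \gamma^{t+1}(1-\lambda\eta^t)\,w^t \;+\; \gamma^{t+1}\eta^t\,G^\top(b^t-q^t).
\]
The containment $w^t\in\SPAN(G^\top)$ then follows by induction on $t$: the base case is $w^1 = 0$, and in the inductive step the right-hand side above is a scalar combination of $w^t\in\SPAN(G^\top)$ (the hypothesis) and $G^\top(b^t-q^t)\in\SPAN(G^\top)$, so it again lies in the subspace $\SPAN(G^\top)$.

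For the explicit form I would read the displayed recursion as an inhomogeneous scalar linear recurrence in $w^t$ with $w^1 = 0$ and forcing term $\gamma^{s+1}\eta^s\,G^\top(b^s-q^s)$ injected at step $s$, and unroll it (formally, by a second induction on $t$). A term injected into $w^{s+1}$ is multiplied, on each passage from $w^j$ to $w^{j+1}$ with $j = s+1,\dots,t-1$, by the factor $\gamma^{j+1}(1-\lambda\eta^j)$, so its total contribution to $w^t$ is $\gamma^{s+1}\eta^s\,G^\top(b^s-q^s)\prod_{j=s+1}^{t-1}\gamma^{j+1}(1-\lambda\eta^j)$. Summing over $s = 1,\dots,t-1$ and pulling $G^\top$ out of the sum yields the stated expression, and reconstructibility of $w^t$ from $(\eta^s,b^s,q^s,\gamma^s)_{s=1}^{t-1}$ is then immediate, since $G$ and $\lambda$ are fixed parts of the auction specification.

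There is no substantial obstacle; the argument is essentially bookkeeping. The point that needs care is tracking the products of $\gamma$ and $(1-\lambda\eta)$ factors with the correct index ranges, and in particular making sure the lone $\gamma$-coefficient standing in front of each summand carries the index matching that summand. It is also worth emphasizing that the entire representer argument rests on $\Pi_R$ acting as a \emph{positive rescaling} of its argument, a property special to projection onto a ball centered at the origin and precisely what allows a closed form at all; for projection onto an arbitrary convex set neither the span containment nor the closed-form expression would survive.
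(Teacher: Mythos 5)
Your argument follows the same route as the paper's own proof: rewrite the update as $w^{t+1} = \gamma^{t+1}(1-\lambda\eta^t)\,w^t + \gamma^{t+1}\eta^t G^\top(b^t-q^t)$, obtain the span containment by induction, and unroll the recursion for the closed form. One point deserves attention, and it is exactly the indexing issue you flag at the end. The unrolling you describe produces the coefficient $\gamma^{s+1}$ in front of the $s$-th summand,
\[
  w^t \;=\; G^\top \sum_{s=1}^{t-1} \gamma^{s+1}\,\eta^s (b^s - q^s) \prod_{j=s+1}^{t-1} \gamma^{j+1}(1-\lambda\eta^j),
\]
whereas the displayed statement (and the final line of the paper's own inductive step) puts $\gamma^{t}$ in front of every summand. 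The two expressions disagree whenever a projection is actually active: already at $t=3$ the recursion gives $w^3=\gamma^3(1-\lambda\eta^2)\,\gamma^2\eta^1 G^\top(b^1-q^1)+\gamma^3\eta^2 G^\top(b^2-q^2)$, with $\gamma^2$ (not $\gamma^3$) multiplying the first term; likewise, substituting the printed form into the recursion yields a factor $\gamma^{t+1}\gamma^{t}$ on the old terms rather than the $(\gamma^{t+1})^2$ the printed form for $w^{t+1}$ would require. So your bookkeeping is the correct one, the printed formula contains an index slip, and the only inaccuracy in your write-up is the closing claim that the unrolling ``yields the stated expression''---it yields a corrected version of it. None of this affects the conclusions that matter downstream: $w^t\in\SPAN(G^\top)$ and reconstructibility of $w^t$ from $(\eta^s,b^s,q^s,\gamma^s)_{s=1}^{t-1}$ hold exactly as you argue, and your observation that everything hinges on $\Pi_R$ being a nonnegative rescaling (projection onto a ball centered at the origin) is the right thing to emphasize.
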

\noindent
The proof of this statement can be found in the appendix.
The claim follows by an inductive argument, since by the price update
rule $w^{t}$ is a linear combination of elements from $\SPAN(G^\top)$.

The remaining item is to compute a revenue-maximizing allocation in
step~\ref{step:prov-allocation}. The efficient allocation problem
in~\eqref{eq:eff-alloc} is known to be NP-hard by reduction from
weighted set packing~\citep{nisan2000bidding}, and
comparing~\eqref{eq:supply-util} and~\eqref{eq:eff-alloc}, we see that
the allocation step is equivalently intractable. In practice, this
step is implemented using integer programming solvers, and there is a
large body of work on solvers for combinatorial
auctions~\citep[e.g.][Part III]{cramton2006combinatorial}.
Using an efficient allocation oracle in the specification allows us to focus
on the number of auction rounds, which is the relevant metric for
iterative combinatorial auctions.

There are several parameters to tune to run the auction: the
regularization weight $\lambda$, the projection radius $R$, and the
step size schedule $\{\eta^t\}$. Our convergence rate analyses---see
Theorems~\ref{fact:sgd:basic:1} and~\ref{fact:sgd:basic:2}---will
provide guidance on how to set each of these parameters. Essentially,
all of them can be set in terms of $V$, the maximum possible agent
value for a bundle.

The convergence results we present in the next sections concern price
convergence (in the primal), not allocation convergence (in the dual),
and even for prices convergence holds only in the limit rather than in
a finite number of steps. For auctions based on the subgradient
method, it is not possible to establish finite-time convergence
without further structural assumptions~\citep{de2007ascending}. To
increase the chances of matching supply with demand at each round,
practical auction designs break ties in
step~\ref{step:prov-allocation} to satisfy as many agents as possible,
and also allow agents to take an $\epsilon$-discount on bundles in the
provisional allocation~\citep[e.g.][]{parkes1999bundle}. For
relatively small $\epsilon$, this does not impact our bounds on
convergence rate.



\section{Stochastic bidding.}
\label{sec:stochastic}

In this section and the next, we provide the central results of the
paper which bound the convergence rate of prices in the subgradient
combinatorial auction of Figure~\ref{fig:subg-auction}. We first
consider \emph{stochastic bidding}, meaning that agents behave
according to a random utility model. In the next section we turn our
attention to \emph{adversarial bidding} in which agents can place
arbitrary bids, constrained across rounds only by activity rules.

Our stochastic model aims to capture the fact that bids can
incorporate an element of randomness at each round due to fluctuating
valuations, bounded rationality, behavioral noise, etc. However,
rather than directly assume that bids are stochastic, we instead
assume that valuations are stochastic at each round and that bids are
chosen as best-responses to prices according to the realized
valuations. This is the bidding behavior that arises from a random
utility model, familiar from discrete choice
modeling~\citep{mcfadden1973conditional}.
Formally, at each round the agents draw their valuation profile $v^t$
from a fixed distribution, denoted $\nu$. That is, one should view
each $v^t$, for $t = 1, 2, \ldots,$ as an i.i.d.\ draw from $\nu$.
Once $v^t$ is drawn and the prices $p^t = Gw^t$ are quoted by the
auction, the agents place a collective bid vector $b^t$ consistent
with $v^t$, where ties are broken arbitrarily in case the
best-response is not unique.
In discrete choice models, the random valuation is usually decomposed
as $v^t = \tilde{v} + \epsilon^t$, where $\tilde{v}$ is the mean
valuation and $\epsilon^t$ is an error term capturing deviations from
the mean at round $t$. The most common error models for $\epsilon^t$
are the Gumbel distribution (known as the logit model) and the
Gaussian distribution (known as the probit model).

\if 0
\blue{this might be the place to discuss random utility models more.
  that could include:
  \begin{itemize}
    \item
      Perhaps including the full lemma from the appendix about $V$.
      Alternatively, it can just be stated informally and then in the theorem
      itself we say ``where $V$ can be bounded as in \Cref{fact:V_bounds}''.
      indeed it may be good to include this in the statement either way.
    \item
      Clarifying that the single-shot auction is treated in an unconstrained manner;
      we then bound the magnitude of this unconstrained optimum, and use this constraint
      to make our iterative auction more well-behaved while preserving the fact that our iterative
      auction converges to the unconstrained optimum.
  \end{itemize}
}
\fi


Under stochastic bidding, bid vector $b^t$ is consistent with random
valuation $v^t$ under prices $p^t$ at each round, which means that the
auction is performing subgradient descent as per
Proposition~\ref{fact:subgradient}. Note that the associated
distribution $\nu$ may be arbitrary, and in particular $\nu$ does not
need to be a product distribution across the $n$ bidders. Our
convergence results are robust to arbitrary correlations between the
bidders valuations. However, one limitation of the model is that the
i.i.d.\ nature of the distribution cannot incorporate learning from
past bids and prices (e.g., as one would expect if there were a common
value component to the agents' valuations).

\if 0
\begin{theorem}
  \label{fact:sgd:basic:1}
  Let $\nu$ be a distribution over value vectors $v\in \R^{n\ell}$
  with $V = \Exp_\nu[ \|v\|_\infty ] < \infty$;
  by these conditions, there exists an optimum $\bar w$
  to the problem $\min\{\Exp_\nu[ \scrD_{\lambda}(w;v)] : w\in \R^d\}$.
  Moreover, with probability at least $1 - \delta$ over an i.i.d. draw of valuations $(v^t)_{t=1}^T$,
  running an iterative auction over $T$ rounds
  with step size $\eta^t = V/\sqrt{t}$
  and any projection radius $R \geq \|\bar w\|_2$
  gives the bound
  \begin{align*}
    \Exp_\nu[\scrD_{\lambda}(\hat w^T; v)] - \Exp_\nu[\scrD_{\lambda}(\bar w; v)]
    &\leq
    \sum_{t=1}^T \hat\eta^t\left(
      \Exp_\nu[\scrD_{\lambda}(w^t; v)] - \Exp_\nu[\scrD_{\lambda}(\bar w; v)]
    \right)
    \\
    &
    \leq
    \cO \left( \frac {\left(L\sqrt{V} + B/\sqrt{V}\right)^2\ln(T)\sqrt{\ln(1/\delta)}}{\sqrt T}\right),
  \end{align*}
  where $\hat\eta^t = \eta^t / \sum_{s=1}^T \eta^s$ and
  $\hat w^T = \sum_{t=1}^T \hat\eta^t w^t$ is the averaged
  iterate, and quantities $B$ and $L$ depend on representation matrix $G$ and the projection radius $R$,
  and may be bounded as follows.
  \begin{itemize}
    \item
      When $G = G_{\id}$, it suffices to choose $R = (n+1)V2^{m/2}$,
      whereby $B \leq R$ and $L \leq \sqrt{n} + \sqrt{m} + \lambda R$.
    \item
      When $G = \spoly{r}$, it suffices to choose $R = (n+1)V m^{r/2} 2^r$,
      whereby $B \leq R$ and $L \leq (1 + \sqrt{n})m^r + \lambda R$.
  \end{itemize}
\end{theorem}
\fi

%
\begin{theorem}
  \label{fact:sgd:basic:1}
  Let $\nu$ be a distribution over value vectors $v\in \R^{n\ell}$
  with $V = \Exp_\nu[ \|v\|_\infty ] < \infty$;
  by these conditions, there exists an optimum $\bar w$
  to the problem $\min\{\Exp_\nu[ \scrD_{\lambda}(w;v)] : w\in \R^d\}$.
  Moreover, with probability at least $1 - \delta$ over an i.i.d. draw of valuations $(v^t)_{t=1}^T$,
  running an iterative auction over $T$ rounds
  with step size $\eta^t = V/\sqrt{t}$,
  regularization $\lambda \leq 1/V$,
  and any projection radius $R \geq \|\bar w\|_2$
  gives the bound
  \begin{align*}
    \Exp_\nu[\scrD_{\lambda}(\hat w^T; v)] - \Exp_\nu[\scrD_{\lambda}(\bar w; v)]
    &\leq
    \sum_{t=1}^T \hat\eta^t\left(
      \Exp_\nu[\scrD_{\lambda}(w^t; v)] - \Exp_\nu[\scrD_{\lambda}(\bar w; v)]
    \right)
    \\
    & \leq
        \cO \left( \frac{\kappa^2 \ln T \sqrt{\ln(1/\delta)}\, V}{\sqrt{T}} \right)
  \end{align*}
  where $\hat\eta^t = \eta^t / \sum_{s=1}^T \eta^s$ and
  $\hat w^T = \sum_{t=1}^T \hat\eta^t w^t$ is the averaged iterate.
  The quantity $\kappa$ depends on representation matrix $G$ and
  projection radius $R$, and may be bounded as follows.
  \begin{itemize}
    \item
      When $G = G_{\id}$, it suffices to choose $R = (n+1)V\sqrt{\ell}$,
      whereby $$\kappa \leq \sqrt{n} + \sqrt{m} + 2(n+1)\sqrt{\ell}.$$
    \item
      When $G = \spoly{r}$, it suffices to choose $R = (n+1)V m^{r/2} 2^r$,
      whereby $$\kappa \leq (1 + \sqrt{n}) m^r + 2 (n+1) m^{r/2} 2^{r}.$$
  \end{itemize}

\end{theorem}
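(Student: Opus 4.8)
The plan is to establish the result in two stages: a deterministic online-to-batch bound, and then a concentration step to pass from the averaged regret guarantee to the expected-suboptimality guarantee. The starting point is Proposition~\ref{fact:subgradient}, which tells us that at round $t$ the vector $g^t$ is a genuine subgradient of $\scrD_\lambda(\cdot\,;v^t)$ at $w^t$. The iterative auction is therefore exactly projected subgradient descent on the sequence of random convex functions $f^t(w) := \scrD_\lambda(w;v^t)$, over the $\ell_2$ ball of radius $R$. First I would invoke the standard regret bound for projected online subgradient descent with step sizes $\eta^t = V/\sqrt t$ on a domain of radius $R$ with subgradient norm bound $\|g^t\|_2 \le \text{(something)}$: this gives $\sum_t \langle g^t, w^t - \bar w\rangle = \cO\big((R^2/\eta^T + \sum_t \eta^t \|g^t\|_2^2)\big) = \cO\big((R^2 + \text{grad}^2\, V^2)\sqrt T\,/V\big)$ roughly, and by convexity $\langle g^t, w^t-\bar w\rangle \ge f^t(w^t) - f^t(\bar w)$. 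Dividing by $\sum_s \eta^s = \Theta(V\sqrt T)$ and using convexity of $\scrD_\lambda$ again (so $\scrD_\lambda(\hat w^T;v) \le \sum_t \hat\eta^t \scrD_\lambda(w^t;v)$, which gives the first inequality of the theorem for each fixed $v$, hence in expectation) yields the weighted-average regret bound $\sum_t \hat\eta^t (f^t(w^t) - f^t(\bar w)) = \cO(\kappa^2 \ln T \cdot V/\sqrt T)$ after collecting constants into $\kappa$.

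The key quantitative inputs are the bounds on $\|g^t\|_2$ and on $R$. For the gradient bound, write $g^t = G^\top(q^t - b^t) + \lambda w^t$; since $q^t, b^t$ are integer points of $F$ and $\cset$ respectively they each have at most $n$ nonzero entries equal to one, so $\|G^\top(q^t-b^t)\|_2$ is controlled by the maximum $\ell_2$ norm of a row of $G$ summed appropriately — for $\Gid$ the rows are standard basis vectors so this piece is $\cO(\sqrt n)$-ish combined across agents and the supply/demand mismatch on $m$ items, giving the $\sqrt n + \sqrt m$ term; for $\Gpoly r$ a row has at most $\binom{m}{\le r} \le m^r$ entries equal to one so the norm is $\cO(m^{r/2})$ per agent, giving the $(1+\sqrt n)m^r$ term. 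The $\lambda\|w^t\|_2 \le \lambda R$ term is absorbed using $\lambda \le 1/V$ together with the stated choice of $R$, contributing the $2(n+1)\sqrt\ell$ (resp. $2(n+1)m^{r/2}2^r$) term; these radius values come directly from Proposition~\ref{prop:w-bounds}, which guarantees $\|\bar w\|_2$ is within the chosen ball, so the projection is harmless and the comparator $\bar w$ is feasible. Packaging $\|g^t\|_2 \le L$ and $R$ into a single $\kappa$ (of the form $L/\sqrt V + \sqrt V$-type combination, with $\lambda R \le R/V$) produces the two displayed bounds on $\kappa$.

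The main obstacle is the \emph{concentration step}: the regret bound above controls $\sum_t \hat\eta^t(f^t(w^t) - f^t(\bar w))$, where each $f^t$ is evaluated at the data point $v^t$ that also generated it, but the theorem's conclusion is about $\Exp_\nu[\scrD_\lambda(w^t;v)] - \Exp_\nu[\scrD_\lambda(\bar w;v)]$, i.e. the \emph{population} objective. The difference $\big(\scrD_\lambda(w^t;v^t) - \scrD_\lambda(\bar w;v^t)\big) - \big(\Exp_\nu[\scrD_\lambda(w^t;v)] - \Exp_\nu[\scrD_\lambda(\bar w;v)]\big)$ forms a martingale difference sequence with respect to the filtration generated by $v^1,\dots,v^{t}$ (since $w^t$ is measurable w.r.t.\ $v^1,\dots,v^{t-1}$), and it is bounded because $|\scrD_\lambda(w;v)-\scrD_\lambda(w';v)|$ is Lipschitz in the value argument with a constant controlled by $R$ and the row norms of $G$ — on the order of $\kappa$-type quantities. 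I would therefore apply the Azuma--Hoeffding inequality to the reweighted sum $\sum_t \hat\eta^t(\cdots)$, noting $\sum_t (\hat\eta^t)^2 = \cO(\ln T / T)$ with the harmonic-type step sizes, which yields a deviation term of order $\kappa \sqrt{\ln(1/\delta)\cdot \ln T / T}\cdot V$ with probability $\ge 1-\delta$; this is lower-order than the $\kappa^2 \ln T / \sqrt T$ main term and so gets absorbed into the stated big-$\cO$. Combining the deterministic regret bound with this high-probability martingale deviation, and using Jensen/convexity once more for the averaged iterate $\hat w^T$, completes the argument. The delicate points to get right are the precise Lipschitz constant of $v \mapsto \scrD_\lambda(w;v)$ uniformly over the radius-$R$ ball (this is where $\|v\|_\infty$ rather than a pointwise bound matters, and why $V = \Exp_\nu\|v\|_\infty$ suffices only after a further truncation or a direct bounded-differences argument exploiting that $b^t$ selects a single bundle), and bookkeeping the $V$ factors so that step size $V/\sqrt t$ and $\lambda \le 1/V$ make the final bound scale linearly in $V$ as claimed.
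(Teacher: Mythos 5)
Your plan is correct and yields the stated bound, but it organizes the stochastic part differently from the paper. You run the regret machinery on the \emph{empirical} per-round functions $f^t(\cdot)=\scrD_{\lambda}(\cdot\,;v^t)$ (where $g^t$ is an exact subgradient, so the regret lemma has no noise term) and then do an online-to-batch conversion: Azuma--Hoeffding applied to the function-value martingale $\hat\eta^t\bigl[\Exp_\nu[\scrD_{\lambda}(w^t;v)-\scrD_{\lambda}(\bar w;v)]-(\scrD_{\lambda}(w^t;v^t)-\scrD_{\lambda}(\bar w;v^t))\bigr]$. The paper instead instantiates its deterministic Lemma~\ref{fact:sgd:basic:0} directly with the \emph{population} objective $f(w)=\Exp_\nu[\scrD_{\lambda}(w;v)]$ as every round's comparator function, so the randomness appears only through the gradient-noise inner products $(\sgf^t-g^t)^\top(w^t-\bar w)$, where $\sgf^t$ is a population subgradient at $w^t$; these are bounded by $4LB$ via Cauchy--Schwarz and concentrated with Azuma--Hoeffding. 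Both decompositions use the same step-size choice, the same $L$ and $B$ estimates, and the same radius bounds from \Cref{prop:w-bounds}/\Cref{fact:opt:basic:1}, and they give deviation terms of the same order $LB\sqrt{\ln(1/\delta)\ln T/T}$, so the two routes are interchangeable here; yours is perhaps the more familiar statistical-learning packaging, while the paper's keeps everything inside a single regret lemma. One clarification that closes the point you flag as delicate: no truncation of $\|v^t\|_\infty$ is needed, because the quantity you must bound for Azuma--Hoeffding is a difference of $\scrD_{\lambda}(\cdot\,;v^t)$ at two points of the radius-$R$ ball, and the Lipschitz constant of $w\mapsto\scrD_{\lambda}(w;v)$ there is $v$-independent (subgradients have the form $G^\top(q-b)+\lambda w$ with $b\in\cset$, $q\in F$ integer, hence norm at most the same $L$ used in the regret bound); this is exactly the ``bounded-differences argument exploiting that $b^t$ selects a single bundle'' you mention as the alternative, and it is also how the paper obtains its $4LB$ bound, with $V=\Exp_\nu\|v\|_\infty$ entering only through the step size and the radius $R$.
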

There are several terms in the bound of \Cref{fact:sgd:basic:1}, but
the leading term $\kappa^2$ roughly reflects the number of degrees of freedom
(i.e., the dimension $d$ of the price parameter $w$), and we see that
increasing polynomial degree or using bundle pricing weakens
guarantees on convergence time. The quantity $V$ essentially captures the
scale of the bidder valuations. In the simplest case where $\nu$ has
compact support, $V$ corresponds to the largest possible value for a
bundle. Lemma~\ref{fact:Vbounds} in the appendix provides bounds on $V$ for the logit
and probit models, as well as any error distribution with subgaussian
tails; all bounds have the form
$\norm{\bar{v}}_\infty + \cO(\sigma_{\max} \ln(n\ell))$,
where $\bar v$ is the mean valuation
and $\sigma_{\max}$ is the maximum over the valuation's coordinate-wise standard deviations.


It is worth stressing how the bound in \Cref{fact:sgd:basic:1} (as well as the upcoming bound in \Cref{fact:sgd:basic:2}) departs from standard statistical treatments.  In statistical learning theory, it is standard to choose either the radius $R$, or the regularization weight $\lambda > 0$, so as to provide faster or slower convergence.  The same applies to the non-sequential (batch) setting, where a typical bound for kernel classifiers depends purely on $\lambda > 0$ \citep[Corollary 4.3]{bbl_esaim}.
Such an approach is not possible here because $R$ must
be chosen so as to leave the set of optimal solutions
intact---otherwise, the optimal solution would lose its meaning as
clearing prices. The main challenge in proving~\Cref{fact:sgd:basic:1}
is to show how this can be achieved with bounded choices of $R$.

\Cref{fact:sgd:basic:1} only controls convergence of the objective
function $\scrD_{\lambda}$, not the prices themselves. If we want to claim that
the individual bundle prices during the auction are `informative' to
the bidders, then the coordinates of $p^t = Gw^t$ should also be
stable. To this end, we provide the following result.
\begin{corollary}
  \label[corollary]{fact:sgd:params:1}
  Consider the setting of~\Cref{fact:sgd:basic:1} but with $\lambda \in (0,1/V]$,
  and let $w\in \R^d$ be any vector satisfying
  $\Exp_\nu[\scrD_{\lambda}(w; v)] \leq \Exp_\nu[\scrD_{\lambda}(\bar w; v)] + \epsilon$ for some $\epsilon \geq 0$.
  Then \mbox{$\|w - \bar w\|_2^2 \leq 2\epsilon / \lambda$}.
  In particular, after $T$ rounds, with probability at least
  $1-\delta$ prices $\hat p^T = G\hat w^T$ and $\bar p = G\bar w$ satisfy
  \[
    \norm{\hat p^T - \bar p}_\infty
    \leq
    \cO \left( \frac{\kappa \|G\|_{2,\infty}}{\lambda}
    \sqrt{ \frac
      {\ln(T)\sqrt{\ln(1/\delta)}}{\sqrt T}}\right),
  \]
  where $\kappa$ may be bounded as in~\Cref{fact:sgd:basic:1},
  and $\|G\|_{2,\infty} = \max\{ \|Gw'\|_\infty : \|w'\|_2 \leq 1\}$
  can be bounded as $\|G_{\id}\|_{2,\infty} \leq 1$
  and $\|\spoly{r}\|_{2,\infty} \leq m^{r/2}$.
\end{corollary}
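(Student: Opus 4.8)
The plan is to combine two ingredients. First, the expected primal objective $F(\cdot) := \Exp_\nu[\scrD_{\lambda}(\,\cdot\,;v)]$ is $\lambda$-strongly convex when $\lambda > 0$, so that any suboptimality in objective value forces a quantitative bound on the squared distance to the (unique) minimizer $\bar w$. Second, \Cref{fact:sgd:basic:1} already supplies a high-probability bound on the objective suboptimality $F(\hat w^T) - F(\bar w)$ of the averaged iterate. Feeding the second into the first bounds $\|\hat w^T - \bar w\|_2$, and passing from the parameter $w$ to the prices $p = Gw$ is then just an application of the operator norm $\|G\|_{2,\infty}$ together with $\lambda \le 1/V$ to clean up constants.

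For the strong-convexity step, observe that for each fixed $v$ the map $w \mapsto u(Gw;v) + s(Gw)$ is convex, being a sum of maxima of affine functions (see~\eqref{eq:agg-demand-util} and~\eqref{eq:supply-util}) precomposed with the linear map $G$; hence $\scrD_{\lambda}(\,\cdot\,;v) = u(Gw;v) + s(Gw) + \tfrac{\lambda}{2}\|w\|_2^2$ is $\lambda$-strongly convex. Taking the expectation over $\nu$ (the integrability being exactly the one already used to establish \Cref{fact:sgd:basic:1}, valid since $V = \Exp_\nu[\|v\|_\infty] < \infty$) preserves $\lambda$-strong convexity, so $F$ is $\lambda$-strongly convex with minimizer $\bar w$, i.e.\ $0 \in \partial F(\bar w)$. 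The subgradient form of strong convexity then gives $F(w) \ge F(\bar w) + \tfrac{\lambda}{2}\|w - \bar w\|_2^2$ for all $w$, and combining with the hypothesis $F(w) \le F(\bar w) + \epsilon$ yields $\|w - \bar w\|_2^2 \le 2\epsilon/\lambda$, which is the first claim.

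For the ``in particular,'' take $\epsilon$ to be the right-hand side of the bound in \Cref{fact:sgd:basic:1}, namely $\epsilon = \cO(\kappa^2 \ln(T)\sqrt{\ln(1/\delta)}\,V/\sqrt T)$, which holds for $w = \hat w^T$ with probability at least $1-\delta$. The first part then gives $\|\hat w^T - \bar w\|_2 \le \sqrt{2\epsilon/\lambda} = \cO\!\big(\kappa\sqrt{V/\lambda}\,\sqrt{\ln(T)\sqrt{\ln(1/\delta)}/\sqrt T}\big)$. Since $\hat p^T - \bar p = G(\hat w^T - \bar w)$, the definition of $\|G\|_{2,\infty}$ (together with Cauchy--Schwarz) gives $\|\hat p^T - \bar p\|_\infty \le \|G\|_{2,\infty}\,\|\hat w^T - \bar w\|_2$, and finally $\lambda \le 1/V$ lets us replace $\sqrt{V/\lambda}$ by $1/\lambda$, producing the stated estimate. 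The bounds on $\|G\|_{2,\infty}$ are elementary: the rows of $G_{\id}$ are standard basis vectors, so $\|G_{\id}w'\|_\infty = \|w'\|_\infty \le \|w'\|_2$ and thus $\|G_{\id}\|_{2,\infty} \le 1$; and each row of $\spoly{r}$ is a $0/1$ vector with at most $\sum_{j=1}^r \binom{m}{j} \le m^r$ nonzero entries (the same count appearing in the proof of \Cref{prop:w-bounds}), hence Euclidean norm at most $m^{r/2}$, so $\|\spoly{r}\|_{2,\infty} \le m^{r/2}$ by Cauchy--Schwarz.

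There is no substantive obstacle here: the content has already been absorbed by \Cref{fact:sgd:basic:1}. The only points requiring a moment of care are (a) verifying that $\lambda$-strong convexity of $\scrD_\lambda(\,\cdot\,;v)$ survives the expectation over $\nu$ and that the $\bar w$ produced by \Cref{fact:sgd:basic:1} is genuinely the minimizer of $F$ (both immediate once $V < \infty$ provides integrability), and (b) the routine bookkeeping: absorbing the stray $\sqrt V$ using $\lambda \le 1/V$, and the simple row counts underlying the $\|G\|_{2,\infty}$ estimates.
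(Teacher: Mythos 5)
Your proposal is correct and follows essentially the same route as the paper's proof: $\lambda$-strong convexity of $\Exp_\nu[\scrD_{\lambda}(\cdot;v)]$ plus first-order optimality of $\bar w$ gives $\|w-\bar w\|_2^2 \le 2\epsilon/\lambda$, then the operator bound $\|\hat p^T-\bar p\|_\infty \le \|G\|_{2,\infty}\|\hat w^T-\bar w\|_2$ with $\epsilon$ supplied by \Cref{fact:sgd:basic:1}, $\lambda \le 1/V$ absorbing the stray $\sqrt{V}$, and the same row-norm (Cauchy--Schwarz) counts for $G_{\id}$ and $\spoly{r}$. No gaps to report.
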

%
\noindent
In words, this statement converts the convergence in objective value from \Cref{fact:sgd:basic:1} to convergence in prices themselves, assuming $\lambda > 0$.
%
%
However, \Cref{fact:sgd:params:1} has a few weaknesses: (1)
$\lambda > 0$ must be chosen small to ensure there is not too much
discrepancy between demand and supply, thus the bound converges
slowly; (2) the economic meaning of regularizing $\bar{w}$ is still
unclear---it appears to favor a bidder-optimal choice of prices, but
we have no formal statements to this effect.
Note that the choice $\lambda = 1/V$ causes the right hand side to
scale linearly with $V$, matching \Cref{fact:sgd:basic:1} and also the interpretation
of $V$ as units or scale.

\paragraph{Bias-Variance}
The results of this section can be interpreted in terms of a
bias-variance trade-off. The infimal value of
$\Exp_\nu[\scrD_{\lambda}(\,\cdot\,; v)]$ depends on the representation matrix
$G$, and in this way represents the `bias' of the auction. Using a
more expressive class of prices reduces this bias, and the lowest bias
is attained when the prices can support an integer optimal solution.
On the other hand, the actual value of the bounds, which gives the
rate of convergence of the auction, is the `variance' term. Increasing
price expressiveness weakens the bound, and therefore simpler pricing
matrices (e.g., polynomial matrices of low degree) should exhibit
faster convergence. One way to moderate the trade-off is to use the
simplest class of prices available (in terms of dimensionality) that
clears the market, although this can be hard to know a priori.


\section{Arbitrary bidding.}
\label{sec:arbitrary}

In this section we turn to a model where agent bids can be essentially
arbitrary, and even adversarial across rounds. While the model is
behaviorally unreasonable without further constraints, it is still
possible to provide a certain convergence guarantee on the
objective value. On the other hand, it is not possible to obtain price
convergence: the optimal price vectors may drift and oscillate, in
contrast with the conclusions we were able to draw in
\Cref{fact:sgd:params:1}. The model therefore motivates the use of
\emph{activity rules} to constrain agent bids across rounds, and this
section shows how a well-designed activity rule can result in
meaningful convergence guarantees for both the objective and prices.


The arbitrary bidding model is as follows.  In round $t$, upon seeing
prices $p^t = Gw^t$, bidders collectively release an integer  bid
vector $b^t \in \cset \cap \{0,1\}^{n\ell}$ as specified in
step~\ref{step:agents-bids} of the auction.
In contrast with the stochastic model of the previous section, where
$b^t$ must be a best-response with respect to the random valuation vector
$v^t$ according to~\eqref{eq:agg-demand-corresp}, $b^t$ need only be
consistent with some valuation vector $v^t$.
Now, without any constraints on the space of valuations, one can
always find a valuation vector with which a given bid vector is
consistent, whatever the prices.
The first result of this section merely requires there to exist choices of
\mbox{$(v^t)_{t \geq 1}$} which satisfy \mbox{$\norm{v^t}_\infty \leq V$}
for some scalar $V$. This is a mild constraint which effectively means
that $b_i^t(x)$ must be zero whenever $p_i^t(x)$ exceeds $V$.

Our initial result under arbitrary bidding is the following.
Superficially, the statement appears similar to the convergence
statement for the stochastic model given in \Cref{fact:sgd:basic:1}.
The essential difference is that the left-hand side is no longer
competing with a fixed target $\inf_w \Exp_\nu[\scrD_{\lambda}(w; v)]$. Instead,
the comparison is against $\inf_w \sum_{t \in [T]} \scrD_{\lambda}(w; v^t)$, and
each term in the summation can change drastically at each round $t$.
This objective does have an economic interpretation if one views our
procedure as a sequential posted price mechanism rather than an
iterative auction. A new set of $n$ bidders arrives at each round, and
a new set of $m$ items is available for sale. The seller's problem is
to try to post prices that clear the market at each round $t$, before
bidder valuations $v^t$ are revealed, where clearing quality is
captured by the objective $\scrD_{\lambda}(\,\cdot\, ; v^t)$. The result bounds
the \emph{regret} of the procedure against the best fixed prices in
hindsight, which is a standard objective for online
algorithms~\cite{hazan2012survey}.
\begin{theorem}
  \label{fact:sgd:basic:2}
  Consider an iterative auction where bid vectors $(b^t)_{t=1}^T$,
  with some consistent sequence of value vectors $(v^t)_{t=1}^T$, are
  announced in alternation with price parameters $(w^t)_{t=1}^T$
  provided by the auction mechanism invoked with step size
  $\eta^t = V/\sqrt t$ for some $V\geq 0$,
  regularization $\lambda \leq 1/V$,
  and some projection radius
  $R\geq 0$. Then there exists a minimizer $\bar w^T$ to
  $f(w) = \sum_{t=1}^T \eta^t \scrD_{\lambda}(w; v^t)$, and if
  $V \geq \sup_{t\in [T]} \|v^t\|_\infty$ and $R \geq \|\bar w^T\|_2$,
  then
%
  \begin{align*}
    \sum_{t=1}^T \hat\eta^t\left(
      \scrD_{\lambda}(w^t; v^t) - \scrD_{\lambda}(\bar w^T; v^t)
    \right)
    &\leq
    \cO \left( \frac {\kappa^2 V \ln T}{\sqrt T}\right),
  \end{align*}
  where $\hat\eta^t = \eta^t / \sum_{s=1}^T \eta^s$.
  The $\kappa$ quantity depends on representation matrix $G$ and
  projection radius $R$, and may be bounded as in \Cref{fact:sgd:basic:1}.
\end{theorem}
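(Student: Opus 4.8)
The plan is to recognize Theorem~\ref{fact:sgd:basic:2} as a direct application of the standard regret bound for projected online subgradient descent, specialized to the loss sequence $f^t(w) = \eta^t \scrD_\lambda(w; v^t)$, and then to do the bookkeeping needed to extract the explicit $\kappa^2 V \ln T / \sqrt T$ rate. First I would fix $T$ and the valuation sequence $(v^t)_{t=1}^T$, and set $f(w) = \sum_{t=1}^T \eta^t \scrD_\lambda(w; v^t)$. Since each $\scrD_\lambda(\,\cdot\,; v^t)$ is convex (and, for $\lambda > 0$, strongly convex), $f$ is convex; because $R \geq \|\bar w^T\|_2$, the minimizer $\bar w^T$ of $f$ over $\R^d$ coincides with its minimizer over the ball $B_R$, so it suffices to prove a regret bound against the best point in $B_R$. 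Existence of $\bar w^T$ follows from coercivity/continuity as in \Cref{thm:alloc-price-dual-reg}.

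The core step invokes \Cref{fact:subgradient}: at each round, $b^t$ is consistent with $v^t$ at prices $p^t = Gw^t$ and $q^t \in S(p^t)$, so $g^t = G^\top(q^t - b^t) + \lambda w^t$ is a genuine subgradient of $\scrD_\lambda(\,\cdot\,; v^t)$ at $w^t$; hence $\eta^t g^t$ is a subgradient of $f^t$, and the auction's update $w^{t+1} = \Pi_R(w^t - \eta^t g^t)$ is exactly projected subgradient descent on the $f^t$. Applying the textbook mirror-descent/OGD inequality (e.g.\ from~\cite{hazan2012survey,shalev2011online}) with the Euclidean mirror map gives, for any $w \in B_R$,
\[
  \sum_{t=1}^T \eta^t\big(\scrD_\lambda(w^t;v^t) - \scrD_\lambda(w;v^t)\big)
  \;\leq\;
  \frac{\|w^1 - w\|_2^2}{2} \,\cdot\, \frac{1}{\text{(step adjustment)}}
  \;+\; \frac12 \sum_{t=1}^T (\eta^t)^2 \|g^t\|_2^2,
\]
or more precisely, since the effective step size is already folded into $f^t$, the cleanest route is to run the analysis with unit steps on the losses $f^t$ (whose subgradients are $\eta^t g^t$), yielding $\text{regret} \leq \tfrac12\|w^1-w\|_2^2 + \tfrac12\sum_t (\eta^t)^2\|g^t\|_2^2$ after telescoping $\|w^{t+1}-w\|_2^2 \leq \|w^t - \eta^t g^t - w\|_2^2$ and using nonexpansiveness of $\Pi_R$. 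Plugging in $w^1 = 0$ gives the first term $\leq R^2/2$. Dividing through by $\sum_s \eta^s$ and applying Jensen (convexity of each $\scrD_\lambda(\,\cdot\,;v^t)$) to move the average inside gives the averaged-iterate bound on the left-hand side of the theorem.

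It then remains to bound the two pieces. With $\eta^t = V/\sqrt t$ we have $\sum_{t=1}^T \eta^t = \Theta(V\sqrt T)$ and $\sum_{t=1}^T (\eta^t)^2 = V^2 \sum_{t=1}^T 1/t = \Theta(V^2 \ln T)$, so the right-hand side is $O\big((R^2 + V^2 \ln T \cdot \sup_t\|g^t\|_2^2)/(V\sqrt T)\big)$. The main remaining obstacle — and the real content beyond invoking OGD — is bounding $\|g^t\|_2 = \|G^\top(q^t - b^t) + \lambda w^t\|_2$ uniformly in $t$ in terms of $\kappa$. Here $q^t - b^t$ has at most $2n$ nonzero entries each bounded by $1$ (the integrality/sparsity of bids and allocations noted after Figure~\ref{fig:subg-auction}), so $\|G^\top(q^t-b^t)\|_2$ is controlled by the relevant row norms of $G$: for $G = \Gid$ one gets $\|G^\top(q^t-b^t)\|_2 \leq 1+\sqrt n$ (as in \Cref{fact:sgd:basic:1}), and for $G = \spoly r$ one gets $(1+\sqrt n)m^r$, via the same combinatorial row-norm estimates underlying \Cref{prop:w-bounds}. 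The term $\lambda\|w^t\|_2 \leq \lambda R \leq R/V$ since $\lambda \leq 1/V$ and $w^t \in B_R$; with the prescribed $R = (n+1)V\sqrt\ell$ (resp.\ $(n+1)Vm^{r/2}2^r$) this contributes the $2(n+1)\sqrt\ell$ (resp.\ $2(n+1)m^{r/2}2^r$) summand in $\kappa$, so that $\sup_t\|g^t\|_2 \leq \kappa$ and $R \leq \kappa V$. Substituting, the right-hand side becomes $O\big((\kappa^2 V^2 + \kappa^2 V^2 \ln T)/(V\sqrt T)\big) = O(\kappa^2 V \ln T / \sqrt T)$, which is the claimed bound. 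The only subtlety relative to \Cref{fact:sgd:basic:1} is that no high-probability/concentration argument is needed — the comparator is the in-hindsight optimum $\bar w^T$ of the realized loss sequence rather than a population optimum — which is why the $\sqrt{\ln(1/\delta)}$ factor drops out.
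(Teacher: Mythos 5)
Your proposal is correct and follows essentially the same route as the paper: the paper's proof likewise obtains existence of $\bar w^T$ from the coercivity argument (\Cref{fact:opt:basic}), invokes the standard projected online subgradient bound (\Cref{fact:sgd:basic:0}) with $\sgf^t = g^t$ so no concentration term appears, and reuses the $L$ and $B$ (hence $\kappa$) estimates from \Cref{fact:sgd:basic:1} together with the $\sum_t \eta^t = \Theta(V\sqrt T)$, $\sum_t (\eta^t)^2 = \Theta(V^2 \ln T)$ arithmetic. One small remark: the final Jensen step you mention is unnecessary (and would not be valid against the time-varying losses), since the theorem's left-hand side is already the weighted sum of per-round gaps rather than an averaged-iterate quantity.
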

The proof is very similar to that of \Cref{fact:sgd:basic:1}.
Comparing this bound to \Cref{fact:sgd:basic:1}, nearly everything is the same, including the
general growth of the leading term $\kappa^2$ in response to choosing $G_{\id}$ or $\spoly{r}$.
However, as mentioned above,  what differs is the left-hand term: progress is measured against
a time-varying target rather than a time-independent target as in the stochastic model.
As the optimal pricing vector $\bar w^t$ is a function of time, it need not converge in any way, and we cannot hope for convergence in prices either. To illustrate this concretely, we have the following result.

\begin{proposition}
  \label[proposition]{fact:activity_rule:0}
  Suppose the setting of \Cref{fact:sgd:basic:2} with $n = 2$ bidders
  and $m=1$ item, but with step sizes $(\eta^t)_{t\geq 1}$ being any positive reals
  satisfying $\sum_{t\geq 1} \eta^t = \infty$.
  Under bundle or polynomial prices (of any degree),
  there exists a bidding sequence $(b^t)_{t \geq 1}$ consistent with a
  valuation sequence $(v^t)_{t \geq 1} $ such that every sequence of
  optimal price parameters $(\bar w^t)_{t\geq 1}$ fails to converge.
\end{proposition}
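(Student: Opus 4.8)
With a single item $\ell = |X| = 1$, and for bundle prices, linear prices, and polynomial prices of any degree the representation matrix collapses to the same $2\times 1$ matrix $G = (1,1)^\top$, so $Gw = (w,w)$ and the whole problem is one-dimensional. Unwinding the definitions of $u$ and $s$ with $\cset = [0,1]^2$ and $\pset = \conv\{(0,0),(1,0),(0,1)\}$ gives, for any profile $v=(v_1,v_2)$,
$$\scrD_\lambda(w;(v_1,v_2)) = (v_1-w)_+ + (v_2-w)_+ + w_+ + \tfrac{\lambda}{2}w^2 .$$
Fix any $V>0$. I will only ever use two profiles: the ``high'' profile $v^{\mathrm A} = (V,V)$ and the ``low'' profile $v^{\mathrm B} = (0,0)$, both of $\ell_\infty$-norm at most $V$. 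The key structural observation is that the hindsight-optimal parameter at round $t$ depends only on $(v^s)_{s\le t}$, not on the bids; and since an integral best response to the prices $p^t = Gw^t$ always exists (the agents maximize a linear function over the integral polytope $\cset$), any choice of best-response bids gives a bid sequence consistent with whatever valuation sequence I construct. Thus I have complete freedom over the $v^t$'s.

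\textbf{The hindsight optimum, explicitly.} Write $f_t(w) = \sum_{s\le t}\eta^s\scrD_\lambda(w;v^s)$ and let $A_t$ (resp.\ $B_t$) be the sum of the step sizes $\eta^s$, $s\le t$, over rounds using $v^{\mathrm A}$ (resp.\ $v^{\mathrm B}$). A three-region case analysis ($w<0$, $w\in[0,V]$, $w>V$) shows that $f_t$ is strictly decreasing on $(-\infty,0]$, strictly increasing on $[V,\infty)$, and on $[0,V]$ equals the convex quadratic $2VA_t + (B_t-A_t)w + \tfrac{\lambda}{2}(A_t+B_t)w^2$. Hence every minimizer lies in $[0,V]$, and: if $B_t > A_t$ the minimizer is uniquely $\bar w^t = 0$; if $A_t > B_t$ then $\bar w^t = V$ when $\lambda = 0$, and $\bar w^t = \min\{V,\ (A_t-B_t)/(\lambda(A_t+B_t))\}$ when $\lambda > 0$, which in both cases is positive, and is moreover $\ge V/2$ once $(A_t-B_t)/(A_t+B_t) \ge \lambda V/2$ (vacuous when $\lambda = 0$; recall $\lambda V \le 1$).

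\textbf{An oscillating schedule.} Choose the valuations in alternating blocks. Run $v^{\mathrm A}$ first: already after round $1$ we have $A_1 = \eta^1 > 0 = B_1$, so $\bar w^1 = V$. Then switch to $v^{\mathrm B}$ and continue until $B_t > A_t$; since the A-weight is frozen and $\sum_s\eta^s = \infty$, the B-weight overtakes it after finitely many rounds, and at that round $\bar w^t = 0$. Then switch back to $v^{\mathrm A}$ and continue until $A_t \ge \rho B_t$ with $\rho = (1+\lambda V/2)/(1-\lambda V/2) > 1$ (simply until $A_t > B_t$ when $\lambda=0$); again this occurs after finitely many rounds, and at that round $(A_t-B_t)/(A_t+B_t) \ge \lambda V/2$, hence $\bar w^t \ge V/2$. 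Iterating produces indices $T_1<T_2<\cdots\to\infty$ along which $f_{T_j}$ has a unique minimizer that alternates between $0$ and a value $\ge V/2$. Therefore every admissible sequence $(\bar w^t)_{t\ge1}$ of optimal price parameters has $\liminf_t\bar w^t = 0$ and $\limsup_t\bar w^t \ge V/2 > 0$, so it fails to converge. Finally every $\bar w^t\in[0,V]$ and $\|v^t\|_\infty\le V$, so the hypotheses of \Cref{fact:sgd:basic:2} are met for any $R\ge V$ and any $\lambda\le 1/V$.

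\textbf{Main obstacle.} The construction is largely bookkeeping once one sees that for $m=1$ the objective collapses to an explicit one-dimensional piecewise quadratic and that the hindsight optimum is bid-independent. The one delicate point is that, unlike the unregularized case where the type-A optimum equals $V$ exactly, for $\lambda>0$ the type-A optimum is only a fraction of $V$ controlled by the imbalance $A_t/B_t$; each type-A block must therefore be run long enough to drive that fraction above a fixed threshold, which is possible precisely because $\lambda V\le 1$ and $\sum_s\eta^s=\infty$ — this is exactly where the divergence-of-step-sizes hypothesis is used.
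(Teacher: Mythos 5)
Your proof is correct and follows essentially the same construction as the paper's: alternating blocks of high/low valuations whose accumulated step-size weight overtakes the other's (using $\sum_{t\geq 1}\eta^t=\infty$), forcing the unique hindsight optimum at block ends to oscillate between $0$ and a value bounded away from $0$. If anything, you treat the regularized case $\lambda>0$ more carefully than the paper's displayed computation, which drops the $\tfrac{\lambda}{2}w^2$ term and asserts the high-phase optimum equals the item value exactly, whereas you run each high block until the weight imbalance pushes the optimum above $V/2$, using $\lambda V\leq 1$.
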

\noindent
The proof of this fact, given in Appendix~\ref{sec:proofs}, constructs a concrete bidding sequence whereby
the corresponding optima $(\bar w^t)_{t \geq 1}$ oscillate between two cluster points.

To link the behavior of bidders across rounds and recover price
convergence, the auction can make use of \emph{activity rules}. In
fact, activity rules are used in practice specifically to disallow
certain kinds of adversarial bidding behaviors that are assumed away
by simple models of best-response agents (e.g., bid parking and
sniping)~\citep{ausubel2014market}. The rule that is most firmly
grounded in theory is the \emph{revealed preference activity rule},
also called GARP (after the \emph{generalized axiom of revealed
  preference})~\citep{ausubel2014practical}. We consider here the
strictest form of the rule which requires exact adherence to the GARP
axiom.
A sequence of bid vectors $(b^t)_{t \geq 1}$, placed in response to
prices $(p^t)_{t \geq 1}$ over rounds, satisfies the \emph{GARP
  activity rule} if for every sequence of distinct rounds $t_1, t_2, \ldots,
t_{k'}$ (not necessarily consecutive or ordered),
\begin{equation}\label{eq:garp-rule}
\sum_{k=1}^{k'} (b^{t_{k+1}} - b^{t_k})^\top p^{t_k} \geq 0,
\end{equation}
with the convention $t_{k'+1} = t_1$. Our analysis can also accommodate more relaxed forms where bidders
are allowed to violate GARP in earlier rounds.
The GARP activity rule can be
enforced efficiently in practice using network flow
algorithms~\citep{vohra2004advanced}.

A sequence of bid vectors satisfies the GARP activity rule if and only
if there is a fixed valuation vector $v$ consistent with the entire
sequence of bids and prices (even though the bidders may not be
explicitly considering such a valuation); for completeness, a proof of
this fact is provided as~\Cref{garp-lemma} in the appendix. Under
these circumstances, \Cref{fact:sgd:basic:2} can be strengthened to
obtain a more meaningful bound.
\begin{theorem}
  \label{fact:sgd:basic:2:garp}
  Consider the setting of \Cref{fact:sgd:basic:2}, and assume the bid
  sequence $(b^t)_{t \geq 1}$ satisfies the GARP activity rule with
  respect to prices $(p^t)_{t\geq 1}$. Then there exists a single value
  vector $v$ that is consistent with $b^t$ under prices $p^t$, for all
  rounds $t$, and moreover
  \[
    \scrD_{\lambda}(\hat w^T; v) - \scrD_{\lambda}(\bar w; v)
    \leq
    \cO \left( \frac {\kappa^2 V \ln T} {\sqrt{T}}
    \right),
  \]
  where
  $\hat w^T = \sum_{t=1}^T \eta^t w^t / \sum_{s=1}^T \eta^s$
  is the averaged iterate, $\bar w$ is a minimum for
  $\scrD_{\lambda}(\cdot\,; v)$,
  and $\kappa$ may be bounded as in
  \Cref{fact:sgd:basic:2}. Moreover, if $\lambda \in (0,1/V]$, then
  \[
    \norm{\hat{p}^T - \bar{p}}_\infty
    \leq
    \cO \left( \frac{ \kappa \|G\|_{2,\infty} }{ \lambda }
    \sqrt {\frac {\ln T}{\sqrt{T}}} \right).
  \]
where $\hat{p}^T = G\hat{w}^T$ and $\bar{p} = G\bar{w}$,
and $\|G\|_{2,\infty}$ may be bounded as in \Cref{fact:sgd:params:1}.
\end{theorem}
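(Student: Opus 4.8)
The plan is to reduce the GARP-constrained arbitrary-bidding setting to ordinary projected subgradient descent on a \emph{single} fixed objective, and then read off both conclusions from machinery already established. The existence of one consistent valuation is exactly the content of \Cref{garp-lemma}: if $(b^t)_t$ satisfies the GARP inequality~\eqref{eq:garp-rule} against $(p^t)_t$, there is a fixed $v$ with $b^t \in U(p^t;v)$ for every round $t$, and, since under the arbitrary-bidding model $b_i^t(x)$ can be nonzero only when $p_i^t(x) \le V$, the construction can be taken with $\|v\|_\infty \le V$. The key consequence is that, with this fixed $v$ in hand, \Cref{fact:subgradient} says the vector $g^t = G^\top(q^t - b^t) + \lambda w^t$ used in the price update is a subgradient of the single function $\scrD_\lambda(\,\cdot\,;v)$ at $w^t$; so, in contrast with \Cref{fact:sgd:basic:2}, the auction is literally projected subgradient descent on one convex function, with $\bar w$ (its minimizer, unique when $\lambda>0$) as the fixed comparator.

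For the objective bound I would simply invoke \Cref{fact:sgd:basic:2} with the constant consistent sequence $v^t \equiv v$. Then $f(w) = \sum_t \eta^t \scrD_\lambda(w;v^t) = \big(\sum_t \eta^t\big)\scrD_\lambda(w;v)$, so $\bar w^T = \bar w$; the hypotheses $V \ge \sup_t \|v^t\|_\infty = \|v\|_\infty$ and $R \ge \|\bar w\|_2$ hold for the stated choice of $R$ by \Cref{prop:w-bounds}. \Cref{fact:sgd:basic:2} then yields $\sum_t \hat\eta^t\big(\scrD_\lambda(w^t;v) - \scrD_\lambda(\bar w;v)\big) \le \cO(\kappa^2 V \ln T/\sqrt T)$, and since $\scrD_\lambda(\,\cdot\,;v)$ is convex and $\hat w^T = \sum_t \hat\eta^t w^t$ is a convex combination, Jensen's inequality gives $\scrD_\lambda(\hat w^T;v) - \scrD_\lambda(\bar w;v) \le \sum_t \hat\eta^t\big(\scrD_\lambda(w^t;v) - \scrD_\lambda(\bar w;v)\big)$, which is the first display.

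The price bound follows exactly as in \Cref{fact:sgd:params:1}: for $\lambda>0$ the function $\scrD_\lambda(\,\cdot\,;v)$ is $\lambda$-strongly convex (the term $\tfrac\lambda2\|w\|_2^2$ added to a convex function), so, since $\bar w$ is its unconstrained minimizer, $\|w-\bar w\|_2^2 \le \tfrac2\lambda\big(\scrD_\lambda(w;v) - \scrD_\lambda(\bar w;v)\big)$ for every $w$. Taking $w = \hat w^T$ and substituting the objective bound, $\|\hat w^T - \bar w\|_2^2 \le \cO\big(\kappa^2 V \ln T/(\lambda\sqrt T)\big)$, hence $\|\hat p^T - \bar p\|_\infty = \|G(\hat w^T - \bar w)\|_\infty \le \|G\|_{2,\infty}\|\hat w^T - \bar w\|_2 \le \cO\big(\kappa\|G\|_{2,\infty}\sqrt{V/\lambda}\,\sqrt{\ln T/\sqrt T}\big)$; since $\lambda \le 1/V$ gives $\sqrt{V/\lambda} \le 1/\lambda$, this is the claimed bound.

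As for difficulty: nearly all the real work lives in results I may cite — the bounded-radius bookkeeping of \Cref{prop:w-bounds} (so $R$ contains the optimum while keeping $\kappa$ finite), the subgradient-norm accounting behind \Cref{fact:sgd:basic:2}'s $\kappa^2$ term, and the Afriat-type construction in \Cref{garp-lemma}. \textbf{The one genuinely new observation}, and the crux of the proof, is that the GARP rule collapses the time-varying regret target of \Cref{fact:sgd:basic:2} to a single fixed function; once that is recognized, Jensen handles the objective and $\lambda$-strong convexity handles the prices, in direct parallel with \Cref{fact:sgd:params:1}. I expect the only step needing genuine care is verifying that the valuation extracted from \Cref{garp-lemma} really can be taken with $\|v\|_\infty \le V$, since that is what keeps the $R$ and $\kappa$ bounds imported from \Cref{fact:sgd:basic:1} valid.
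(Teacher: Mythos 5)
Your proposal is correct and follows essentially the same route as the paper's proof: invoke \Cref{garp-lemma} to obtain a single consistent $v$, plug the constant sequence $v^t \equiv v$ into \Cref{fact:sgd:basic:2} and collapse the left-hand side by Jensen, then obtain the price bound via $\lambda$-strong convexity exactly as in \Cref{fact:sgd:params:1}. Your closing remark about verifying $\|v\|_\infty \le V$ for the extracted valuation is a fair point of care, but it does not change the argument, which matches the paper's.
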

\noindent
In simpler terms, the GARP activity rule ensures that bidding across rounds is consistent with at least one fixed valuation profile, and for any such profile the auction converges in objective value. With regularization ($\lambda > 0$), we also obtain price convergence, in contrast to Proposition~\ref{fact:activity_rule:0}.


\section{Conclusion.}
\label{sec:conclusion}

This paper obtained concrete bounds on the rate of convergence of
iterative auctions that correspond to subgradient methods for the
underlying optimization problem over prices. Our setup can accommodate
many different pricing schemes and allows one to analyze and compare
them under a single framework. It also admits bidder behaviors beyond
straightforward best-response bidding. We considered two
generalizations of straightforward bidding: stochastic bidders and
arbitrary bidders, with the latter constrained by activity rules.

The convergence rates obtained under both models are very similar. In
both cases, using a more expressive pricing scheme weakens convergence
guarantees. Bounds are proportional to the degrees of freedom in the
prices, so item price bounds are exponentially better than bundle
price bounds. Our analysis quantifies one side of the trade-off
between convergence rate and ability to clear the market, which
mirrors the bias-variance trade-off familiar from statistics. Our
results suggest that an iterative combinatorial auction should use the
simplest class of prices possible that can clear the market. In this
respect, polynomial prices may prove useful in practice.

The stochastic model of bidders allows for errors in their estimates
of the values for various bundles in each round. This translates into
stochastic and imperfect best-response behavior, following standard
random utility models. Our convergence results show that subgradient
auctions perform effective price discovery even under bidding errors,
and the results are robust to correlations between the bidders'
valuations. With regularization, we also obtain convergence in
individual bundle prices. Under the arbitrary bidding model, prices
can oscillate and fail to converge. This motivates the use of revealed
preference activity rules (GARP), which restore consistency with a
fixed valuation vector. Our analysis draws a connection between the
constraints imposed by the activity rule and convergence of the auction.


There are many ways to extend and build on our analysis. An important
avenue for future work is to derive lower bounds on convergence rates
and thereby achieve a separation between pricing schemes. There are
standard tools for producing lower bounds on subgradient
methods~\citep[e.g.][]{agarwal2009information}, but they might
involve pathological constructions lacking economic meaning.
Another avenue is to study other pricing schemes besides polynomial or
bundle pricing, which could be relevant in specific domains and worthy of
study.
Given their prominence in practice, it would also be worthwhile to
analyze monotone price auctions (ascending or descending) within the
present framework. Monotone price auctions can be obtained by
modifying the projection operation in the subgradient method.

\begin{acks}
The authors would like to thank Jason Hartline for valuable feedback.
We also thank participants at seminars at Duke, the University of
Zurich, and the 2015 AMMA Conference for comments.
\end{acks}

\bibliographystyle{ACM-Reference-Format-Journals}
\bibliography{info-prices}

\appendix
\section*{APPENDIX}
\setcounter{section}{0}

\section{Deferred proofs.}
\label{sec:proofs}

This appendix provides complete proofs for all results in the paper.
Several results make use of stand-alone auxiliary lemmas given in the
next section.

\medskip
\begin{proof}[of Theorems~\ref{thm:alloc-price-dual}
    and~\ref{thm:alloc-price-dual-reg}]
  The general existence of an optimum $\bar w$ (for any
  $\lambda\geq 0$) follows from~\Cref{fact:opt:basic}, taking $\nu$ as
  a dirac measure supported on $v$ and nowhere else. When
  $\lambda >0$, the uniqueness of $\bar w$ holds because the
  objective is strictly convex. Note that $F$ is a polyhedral convex set.
  The second duality then follows by applying
  \Cref{fact:duality:basic} with $r(x) = \lambda \|x\|_2^2 / 2$, which
  gives $r^*(y) = \|y\|_2^2 / (2\lambda)$ by direct computation. The
  first duality also follows from \Cref{fact:duality:basic}, but now
  with $r(x) = 0$, meaning $r^*(y) = \iota_{\{0\}}(y)$ and
  $r^*(G^\top q') = \iota_{\ker(G^\top)}(q')$, which gives the dual
  problem as
  \[
    \max\left\{
      v^\top q'
      : q' \in \cset, q \in F, q' - q \in \ker(G^\top)
    \right\},
  \]
  the result following by collapsing $q$ into the constraint on $q'$.
  The invocation of \Cref{fact:duality:basic} also grants the desired optimality conditions;
  in the case of \Cref{thm:alloc-price-dual-reg}, this directly gives the result after translating $r$ and $r^*$ as above,
  whereas with \Cref{thm:alloc-price-dual}, the provided optimality conditions now give
  \[
    \optq' - \optq \in \ker(G^\top),
    \qquad
    \optq' \in U(G\optw;v),
    \qquad
    \optq \in S(G\optw),
  \]
  where the first can be written $\optq' \in \{\optq\} + \ker(G^\top)$,
  thus once again $\optq$ and the condition $\optq\in S(G\optw)$ can be collapsed in by writing
  $\optq' \in S(G\optw) + \ker(G^\top)$ as desired.
\end{proof}

\medskip
\begin{proof}[of~\Cref{integrality}]
If $G$ is full rank, then $\ker(G^\top) = \{0\}$, and thus the
constraint in the dual objective~\eqref{eq:alloc-price-dual} collapses to
$$
q \in \cset \cap (\ker(G^\top) + F) = \cset \cap F = F.
$$
Recalling~\eqref{def:production-set}, $F$ is a polytope with integer
extreme points corresponding to allocations of whole items. Thus the
dual linear program has an integer optimal solution. The optimality
conditions~\eqref{eq:comp-slack} collapse to~ $\optq \in U(\optp) \cap
S(\optp)$ where $\optp = G\optw$, which shows that the dual optimal
solution leads to supporting prices.
\end{proof}

\medskip
\begin{proof}[of~\Cref{prop:w-bounds}]
  Let $v$ be given as specified, and let $\nu$ be a dirac measure
  supported on $v$ and nowhere else. The general existence of an
  optimum $\bar w$ (for any $\lambda\geq 0$) follows from
  \Cref{fact:opt:basic}. When $\lambda >0$, uniqueness of $\bar w$ is
  a consequence of strict convexity of the objective. The bounds on
  $\bar w$ under various choices of $G$ are provided by
  \Cref{fact:opt:basic:1}, choosing $\nu$ as before.
\end{proof}

\medskip
\begin{proof}[of~\Cref{fact:subgradient}]
  The first statement is a consequence of standard subgradient rules
  \citep[Theorem D.2.2.1, Theorem D.4.1.1, Theorem D.4.2.2]{HULL},
  and the definition of subgradient descent is also standard \citep{Bubeck14}.
  The second statement is from \Cref{fact:opt:basic:1}.
\end{proof}

\medskip
\begin{proof}[of~\Cref{fact:subgradient:representer}]
  The expression for $w^t$ holds in the case $w^1 = 0$,
  and by induction and the definition of $w^{t+1}$ in the mechanism,
  \begin{align*}
    w^{t+1}
    &= \gamma^{t+1}\left(w^t - \eta^t g^t\right)
    \\
    &= \gamma^{t+1}(1 - \lambda \eta^t) \left(
      G^\top
      \sum_{s=1}^{t-1} r^t \eta^s (b^s - q^s) \prod_{j = s+1}^{t-1}
      \gamma^{j+1}(1  - \lambda \eta^j)
    \right)
    + \gamma^{t+1}\eta^t G^\top (b^t - q^t)
    \\
    &= G^\top \sum_{s=1}^{t} \gamma^{t+1} \eta^s (b^s - q^s) \prod_{j
      = s+1}^{t} \gamma^{j+1}(1  - \lambda \eta^j)
  \end{align*}
  as desired.
\end{proof}

\medskip
\begin{proof}[of \Cref{fact:sgd:basic:1}]
  First note that an optimal $\bar w$ exists by \Cref{fact:opt:basic}.
  Let $g^t$ be defined as in~\eqref{eq:subgradient}, and let $\sgf^t$
  be any subgradient of $\scrD_{\lambda}(\cdot; v^t)$ at $w^t$.
  Define the quantities
  \[
    L = \max_{t\in[T]}\left\{ \max\{ \|g^t\|_2, \|\sgf^t\|_2\}\right\}
    \qquad\textup{and}\qquad
    B = \max\left\{ \|\bar w\|_2,  \max_{t\in[T]} \|w^t\|_2\right\}.
  \]
  For convenience, define $f(w) = \Exp_\nu[\scrD_{\lambda}(w, v)]$.
  For every $t \in [T]$,
  by Cauchy-Schwarz and the triangle inequality,
  \[
    (\sgf^t - g^t)^\top (w^t - \bar w)
    \leq \|\sgf^t - g^t\|_2\|w^t - \bar w\|_2
    \leq 4LB.
  \]
  Consequently,
  since $\Exp_\nu[\sgf^t - g^t] = 0$, Azuma-Hoeffding grants,
  with probability at least $1-\delta$,
  \begin{align*}
    \sum_{t=1}^T \hat\eta^t (\sgf^t - g^t)^\top(w^t - \bar w)
    &\leq
    4LB \sqrt{2\ln(1/\delta)\sum_{t=1}^T (\hat\eta^t)^2}
    \leq 4LBV\sqrt{2(1 + \ln(t)) \ln(1/\delta)}
    .
  \end{align*}
  Plugging this into \Cref{fact:sgd:basic:0}
  and simplifying the left hand side via convexity
  gives the result.

  For the bounds on $B$ and $L$, first note that $\bar w$ exists by \Cref{fact:opt:basic}.
  Now consider the case $G = G_{\id}$.
  By \Cref{fact:subgradient},
  \[
    \|g^t\|_2
    \leq \|G^\top (b^t - q^t)\|_2 + \lambda \|w^t\|_2
    \leq \|b^t\|_2  + \|q^t\|_2 + \lambda B
    \leq \sqrt{n} + \sqrt{m} + \lambda B.
  \]
  Similarly, taking $b \in U(p^t;v)$ and $q \in S(p^t)$, where $p^t = Gw^t$, to denote subgradient terms for $v$
  drawn from $\nu$~\citep{HULL,ROC},
  \[
    \|\sgf^t\|_2
    \leq \left\|G^\top \Exp_\nu(b - q)\right\|_2 + \lambda \|w^t\|_2
    \leq \sqrt{n} + \sqrt{m} + \lambda B.
  \]
  Lastly, the form of $B$ and bound $\|\bar w\|_2 \leq B$ are provided by \Cref{fact:opt:basic:1}.

  Now consider the case $G = \spoly{r}$.
  By \Cref{fact:subgradient},
  \[
    \|g^t\|_2
    \leq \|G^\top (b^t - q^t)\|_2 + \lambda \|w^t\|_2
    \leq \|G^\top b^t\|_2  + \|G^\top q^t\|_2 + \lambda B
    \leq m^r \sqrt{n} + m^r + \lambda B.
  \]
  The derivation of the remaining properties is as for $G_{\id}$,
  and the result follows.
\end{proof}

\medskip
\begin{proof}[of~\Cref{fact:sgd:params:1}]
  For convenience, define $f(w) = \Exp_\nu[\scrD_{\lambda}(w, v)]$, and
  let $\sgf$ be a subgradient of $f$ at $w$.
  By first order optimality conditions and strong convexity,
  \[
    \epsilon
    \geq f(w) - f(\bar w)
    \geq (w - \bar w)^\top \sgf + \frac{\lambda}{2} \|w - \bar w\|_2^2
    \geq \frac{\lambda}{2} \|w - \bar w\|_2^2,
  \]
  which gives the first result after rearrangement.
  Next, the definition of $\|G\|_{2,\infty}$ gives
  \[
    \norm{\hat{p}^T - \bar{p}}_\infty
    = \norm{G\hat w^T - G \bar w}_\infty
    \leq \norm{G}_{2,\infty} \norm{\hat w^T - \bar w}_2,
  \]
  which gives the second bound after combining with the preceding bound
  and invoking \Cref{fact:sgd:basic:1} to provide the value of $\epsilon$,
  and using the condition $\lambda \leq 1/V$, whereby $V \leq 1 /\lambda$.
  Lastly, to bound $\|G\|_{2,\infty}$ for $G_{\id}$ and $\spoly{r}$,
  first note that the Cauchy-Schwarz inequality implies
  \[
    \|G\|_{2,\infty} = \max\{ \|G_i(x)\|_2 : i\in[n], x\in X\};
  \]
  consequently, in either case, we only need to count the number of 1s
  in rows of $G$.  For $G_{\id}$, this immediately gives $\|G\|_{2,\infty}\leq 1$.
  For $\spoly{r}$, the bundle $\tilde x$ which contains all $m$ items will correspond
  to a row of 1s; since all rows have 0s and 1s, this bundle attains the maximum norm
  (and does not vary with bidder, so we may consider the first bidder, thus row $G_1(\tilde x)$),
  which gives
  \[
    \|\spoly{r}\|_{2,\infty}
    = \|G_1(\tilde x)\|_2
    = \sqrt{\sum_{k=1}^r \binom{m}{k}}
    \leq \sqrt{m^r}.
  \]
\end{proof}

\medskip
\noindent
\begin{proof}[of~\Cref{fact:sgd:basic:2}]
  An optimal $\bar w$ exists by \Cref{fact:opt:basic},
  with measure $\nu$ chosen to
  be the discrete measure over $(v_t)_{t=1}^T$.
  The rate follows from \Cref{fact:sgd:basic:0}
  with $f^t = \scrD_{\lambda}(\cdot, v^t)$ (whereby $\sgf^t - g^t = 0$).
  The estimates on $L$ and $B$ are as in the proof
  of~\Cref{fact:sgd:basic:1}.
\end{proof}

\begin{proof}[of~\Cref{fact:activity_rule:0}]
  First note that for $m=1$, the price parameter vector is
  one-dimensional (recalling our convention that the empty bundle has
  price of 0) and coincides for bundle and polynomial prices of all
  degrees. Thus we write $p = w$ for the price of the single item.

  This proof will construct a sequence of bids, organized into epochs
  ending at times $t_1,t_2,\ldots$, such that for any $k \geq 1$ we
  have $\|\optw^{t_k} - \optw^{t_{k+1}}\|_2 = |\optw^{t_k} -
  \optw^{t_{k+1}}| \geq 1$; since this
  happens for arbitrarily large choices of $k$, the sequence of optima
  is not a Cauchy sequence (and thus does not converge).
  The bidding behavior will be defined in terms of valuations $v^t$ at
  time $t$, and the bids $b^t$ (which will not require more discussion
  in this construction) are merely any choice which maintains the
  consistency of the valuations.

  The construction is as follows. In every round, both bidders have
  the same values, and moreover assign value 0 to the empty bundle. In
  even epochs, both assign value 1 to the item, whereas in odd epochs
  they assign it value 0.

  The epoch lengths $(t_k)_{k\geq 0}$ will be constructed so that
  the objective functions places more emphasis one selecting the single item
  in even epochs, whereas odd epochs will emphasize selecting no item.
  To this end, collect all the step sizes $\eta^s$ from even and odd epochs into $\etae^s$ and $\etao^s$,
  meaning
  \[
    \etae^s = \sum_{i = 1}^s \eta^i \1[\textup{$i$ in even epoch}],
    \qquad\qquad
    \etao^s = \sum_{i = 1}^s \eta^i \1[\textup{$i$ in odd epoch}].
  \]
  Now define $(t_k)_{k \geq 0}$ inductively as $t_0 = 0$, and thereafter,
  given $(t_j)_{j=1}^k$, define $t_{k+1}$ to be an integer sufficiently
  large so that $\etae^{t_{k+1}} > \etao^{t_{k+1}}$ when $k+1$ is even
  and otherwise $\etao^{t_{k+1}} < \etao^{t_{k+1}}$,
  where the existence of $t_{k+1}$ is guaranteed from positivity of $\eta^s$ and $\sum_{s\geq 1} \eta^s = \infty$.

  Now fix some epoch $k$, and set $\etae = \etae^{t_k}$ and $\etao = \etao^{t_k}$ for convenience.
  The objective function evaluates to
  \begin{eqnarray*}
    \sum_{s=1}^{t_k} \eta^s \scrD_{\lambda}(w; v^s)
    & = & 2\etae\max\{1-p,0\} + 2\etao\max\{-p,0\} + (\etae+\etao)\max\{p,0\} \\
    & \mbox{} & \\
    & = & \begin{cases}
            2\etae - 2(\etae + \etao)p & \quad\mbox{if $p \leq 0$,} \\
            (\etae + \etao)p & \quad\mbox{if $p \geq 1$,} \\
            2\etae + (\etao - \etae)p & \quad\mbox{if $p \in [0,1]$.}
  \end{cases}
  \end{eqnarray*}
  As $\etae + \etao > 0$, we see that in the range $p \leq 0$ the minimum is
  uniquely reached at 0, and in the range $p \geq 1$ the minimum is
  uniquely reached at 1. Thus we restrict our attention to the range
  $p \in [0,1]$. There, we see that if $\etao > \etae$, the unique minimum is
  0, while if $\etao < \etae$, the unique minimum is 1. Thus, at the end of
  epoch $k$, we have $\optw^{t_k} = 0$ if $k$ is odd and
  $\optw^{t_k} = 1$ if $k$ is even. This completes the proof.
\if 0

  Let $t_0 = 0$ and $t_1 = 1$, and for
  $k \geq 2$ choose time $t_k$ so that
  $t_k - t_{k-1} > t_{k-1} - t_{k-2}$. This ensures that at the end of
  an even epoch $k$, the total number of rounds with the 1-profile
  exceeds the number of rounds with the 0-profile, because
  \begin{eqnarray*}
  & & \sum_{s=1}^{t_k} \1[\mbox{$s$ in even epoch}] - \1[\mbox{$s$ in
      odd epoch}] \\
  & \mbox{} & \\
  & = & (t^2 - t^1) - (t^1 - t^0) + (t^4 - t^3) - (t^3 - t^2) + \dots
        + (t^k - t^{k-1}) - (t^{k-1} - t^{k-2})  \\
  & > & 0.
  \end{eqnarray*}
  Similarly, 0-profiles outnumber 1-profiles at the end of an odd epoch.

  At the end of epoch $k$, let $q$ denote the number of rounds with
  1-profiles and $q'$ the number of rounds with 0-profiles. The
  objective function evaluates to
  \begin{eqnarray*}
  f(w) & = & 2q\max\{1-p,0\} + 2q'\max\{-p,0\} + (q+q')\max\{p,0\} \\
    & \mbox{} & \\
    & = & \left\{
          \begin{array}{cc}
            2q - 2(q + q')p, & \quad\mbox{if $p \leq 0$} \\
            (q + q')p, & \quad\mbox{if $p \geq 1$} \\
            2q + (q' - q)p, & \quad\mbox{if $p \in [0,1]$}
          \end{array}
          \right.
  \end{eqnarray*}
  As $q + q' > 0$, we see that in the range $p \leq 0$ the optimum is
  uniquely reached at 0, and in the range $p \geq 1$ the optimum is
  uniquely reached at 1. Thus we restrict our attention to the range
  $p \in [0,1]$. There, we see that if $q' > q$, the unique optimum is
  0, while if $q' < q$, the unique optimum is 1. Thus, at the end of
  epoch $k$, we have $\optw^{t_k} = 0$ if $k$ is odd and
  $\optw^{t_k} = 1$ if $k$ is even. This completes the proof.
\fi
\end{proof}

\if 0
\medskip
\begin{proof}[of~\Cref{fact:activity_rule:0}]
  This proof will construct a sequence of bids, organized into epochs
  ending at times $t_1,t_2,\ldots$, such that for any even $p\geq 2$ and any odd $p' \geq 3$,
  any pair of corresponding optima $(\bar w^{t_p}, \bar w^{t_{p'}})$
  satisfies $\|\bar w^{t_p} - w^{t_{p'}}\|_2 \geq 1$;
  since this happens for arbitrarily large choices of $p$ and $p'$,
  the sequence of optima is not a Cauchy sequence (and thus does not converge).

  The construction is as follows.
  \begin{itemize}
    \item
      As in the statement, $G = G_{\id}$,
      there are $n=2$ bidders,
      and there is $n=1$ item.
      Consequently, $d=2$, and without loss of generality,
      for any price parameter $w\in\R^2$ let $w_1$ be the price for
      the empty bundle, and $w_2$ the price for the singleton bundle.

    \item
      The bidding behavior will be defined in terms of valuations $v^t$ at time $t$,
      and the bids $b^t$ (which will not require more discussion in this construction)
      are merely any choice which maintains the consistency of the valuations.

      In every round, both bidders have the same values, and moreover assign value 0
      to the empty bundle.  In even epochs, both assign value 1 to the singleton bundle,
      whereas in odd epochs they assign it value 0.

    \item
      Epoch $p$ ends at time $t_p$.  Time $t_p$ is chosen so that, for any $p\geq 2$,
      the total number of
      times $t\in \{1,\ldots,t_p\}$ whose valuations agree with this epoch is three times
      the number of rounds that disagree with this epoch.
      Numerically, it suffices to take $t_1 = 3$ and $t_p = 3^p - 3^{p-1}$,
      since for $p\geq 2$ this means
      \begin{align*}
        t_p = 3(3^{p-1} - 3^{p-2}) = 3t_{p-1}
      \end{align*}
      as desired.
      Indeed, however, these precise quantities are irrelevant;
      all that matters is that the epoch ends with 3 times as many agreeing valuations as
      disagreeing valuations.
  \end{itemize}

  Now consider an arbitrary epoch $p\geq 2$.
  By the above choices, the objective function has form
  \begin{align*}
    f_1(w)
    = \sum_{t=1}^{t_p} \scrD_{\lambda}(w, v^t)
    &=
    \frac {t_p}{4}
    \left(
      2\max\{0, 0 - w_1, \1[p\textup{ odd}] - w_2
    \right)
    \\
    &\qquad
    +
    \frac {3t_p}{4}
    \left(
      2\max\{0, 0 - w_1, \1[p\textup{ even}] - w_2\}
    \right)
    +
    t_p\max\{0, w_1, w_2\}.
  \end{align*}
  The optima for this function are not unique, however as discussed above it will be shown
  that the distance between any pair of optima from even and odd epochs is large.
  To this end, consider a simplified objective function
  \[
    f_2(w)
    =
    \max\{0, \1[p\textup{ odd}] - w_2\}
    +
    3
    \max\{0, \1[p\textup{ even}] - w_2\}
    +
    2\max\{0, w_2\}.
  \]
  Note that when $w_1 \leq w_2$, then $2f_1(w) = 2 t_p f_2(w)$.
  Moreover, any optimum to $f_1$ must have $w_1 \leq w_2$ and $w_2 \geq 0$,
  and any optimum to $f_2$ has $w_2 \geq 0$.
  Consequently, any optimum to $f_1$ is an optimum to $f_2$,
  and any optimum to $f_1$ may be obtained from an optimum of $f_2$ by replacing $w_1$
  with any value not exceeding $w_2$.

  As such, it suffices to discuss optimality properties of $f_2$.
  When $p$ is even,
  then the subdifferential at $w_2 = 1$ along direction $w_2$ is $[-1,+2]$;
  consequently, $w_2 = 1$ is the unique optimal choice when $p$ is even.
  When $p$ is odd,
  then the subdifferential at $w_2 = 1$ along direction $w_2$ is $[-4,+1]$;
  consequently, $w_2 = 0$ is the unique optimal choice when $p$ is odd.
  Combining this with the above comments that any optimum to $f_1$ can be obtained
  from an optimum to $f_2$ simply by choosing an appropriate $w_1 \leq w_2$,
  it follows that for any even $p\geq 2$ and any odd $p'\geq 3$, the corresponding
  optima $\bar w^{t_p}$ and $\bar w^{t_{p'}}$ satisfy
  \[
    \left\|\bar w^{t_p} - \bar w^{t_{p'}}\right\|_2
    \geq
    \left|\bar w^{t_p}_2 - \bar w^{t_{p'}}_2\right|
    = 1,
  \]
  which rules out the convergence of $(w^{t})_{t\geq 1}$.
\end{proof}
\fi

\if 0
\begin{lemma}
  \label[lemma]{fact:garp_cleaner:helper}
  Consider the setting of \Cref{fact:sgd:basic:2}, but additionally
  the demand sets $(b_s)_{s\geq 1}$ and prices $(Gw_s)_{s\geq 1}$ satisfy GARP as per Definition~\ref{def:garp}.
  For any pair $(i,j)$, partition the auction iterations into two sets $U_{i,j}$ and $L_{i,j}$ defined as
  \[
    L_{i,j} = \{ s \geq 1 : (b_s)_{(i, j)} = 1\}
    \qquad\textup{and}\qquad
    U_{i,j} = \{ s \geq 1 : (b_s)_{(i, j)} = 0\}.
  \]
  Then $\sup_{s\in L_{i,j}} (Gw_s)_{(i, j)} \leq \inf_{s\in U_{i,j}} (Gw_s)_{(i,j)}$.
  Moreover, defining the interval $R_{i,j}$ as
  \[
    R_{i,j} = \left( \sup_{s\in L_{i,j}} (Gw_s)_{(i, j)}, \inf_{s\in U_{i,j}} (Gw_s)_{(i,j)} \right]
  \]
  (with the convention that suprema and infima over empty sets are respectively $-\infty$ and $+\infty$),
  then any value sequence $(v_s)_{s\geq 1}$ with $(v_s)_{(i, j)} \in R_{i,j}$
  is consistent with the provided prices $(Gw_s)_{s\geq 1}$ and demand sets $(b_s)_{s\geq 1}$;
  in particular, $(v_s)_{s\geq 1}$ may be invariant across all iterations.
\end{lemma}
\begin{proof}
  Fix any pair $(i,j)$.
  By definition of $b_s$, every $s\in L_{i,j}$ provides $(Gw_s)_{(i,j)} \leq (v_s)_{(i,j)}$ for any consistent $v_s$,
  and every $s\in U_{i,j}$ provides $(Gw_s)_{(i,j)} > (v_s)_{(i,j)}$; as these are the only relationships $Gw_s$ and $v_s$
  need satisfy, it follows that any requiring $(v_s)_{(i,j)} \in R_{i,j}$ for all $(i,j,s)$
  suffices to guarantee that $(v_s)_{s\geq 1}$ is consistent.  It remains to be shown that $R_{i,j}$ is a valid interval,
  and this is where GARP comes in.

  If either $U_{i,j}$ or $L_{i,j}$ are empty, the $R_{i,j}$ is well-defined (since one side is infinite), thus suppose both are nonempty.
  First consider any pair of iterations $(s,t)$ with $s<t$ and $s \in L_{i,j}$ whereas $t \in U_{i,j}$.  By GARP,
  \[
    0 \geq (Gw_s - Gw^t)_{(i,j)}(b_s - b^t)_{(i,j)} = (Gw_s - Gw^t)_{(i,j)},
  \]
  meaning $(Gw_s)_{(i,j)} \leq (Gw^t)_{(i,j)}$.  Analogously, consider $s<t$ but now with $s\in U_{i,j}$ and $t\in L_{i,j}$,
  GARP implies $(Gw_s)_{(i,j)} \geq (Gw^t)_{(i,j)}$.  Since $s$ and $t$ were arbitrary, it follows that
  $\sup_{s\in L_{i,j}} (Gw_s)_{(i, j)} \leq \inf_{s\in U_{i,j}} (Gw_s)_{(i,j)}$, and $R_{i,j}$ is well-defined.
\end{proof}
\fi

\medskip \noindent

\begin{proof}[of \Cref{fact:sgd:basic:2:garp}]
  The existence of $v$ is granted by~\Cref{garp-lemma}.
  Plugging in the consistent sequence $(v^t)_{t \geq 1}$ with $v^t = v$ into
  \Cref{fact:sgd:basic:2} gives the first inequality after collapsing the left
  hand side via Jensen's inequality as in the proof of \Cref{fact:sgd:basic:1}.
  The second inequality is proved analogously to \Cref{fact:sgd:params:1}.
\end{proof}


\section{Technical lemmas.}
\label{sec:technical-lemmas}

This appendix provides results relating to the optimization problem of
pricing studied throughout. The first result is a helper lemma
bounding indirect utilities. The next is a generic duality result. We
prove duality using a general convex regularizer, which covers the
squared norm regularizer used in the paper. The two subsequent results
study the minimizers of the pricing problem. We also provide a
standard convergence bound for online subgradient descent, a proof
that the GARP activity rule implies that bidding is consistent with a
fixed valuation vector, and lastly bounds on the quantity
$\Exp_\nu(\|v\|_\infty)$ of relevance to the analysis of stochastic
bidding.

\begin{lemma}
  \label[lemma]{fact:u_s_nice}
  Both $u$ and $s$ are convex, closed, polyhedral, and nonnegative.
  Moreover, for any prices $p\in \R^{n\ell}$ and any valuations $v\in \R^{n\ell}$,
  \[
    u(p;v) \geq -\|v\|_\infty -\min_{\substack{i\in [n] \\ x\in X}} p_i(x)
    \qquad\textup{and}\qquad
    s(p) \geq \max_{\substack{i\in [n] \\ x\in X}} p_i(x).
  \]
\end{lemma}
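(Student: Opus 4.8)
The statement has two parts: structural properties of $u$ and $s$ (convex, closed, polyhedral, nonnegative), and the two explicit lower bounds. The plan is to handle these in turn, since the structural claims are essentially by inspection from the definitions~\eqref{eq:agg-demand-util} and~\eqref{eq:supply-util}, while the lower bounds come from plugging in cleverly chosen feasible allocations.

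First I would address the structural properties. Both $u(p;v)$ and $s(p)$ are, for fixed $v$, pointwise maxima over a compact polytope of affine functions of $p$: $u(p;v) = \max_{q\in\cset}(v-p)^\top q$ and $s(p) = \max_{q\in F}p^\top q$. Since each feasible $q$ contributes an affine function of $p$, and the polytopes $\cset$ and $F$ have finitely many extreme points, both functions are maxima of finitely many affine functions, hence convex, closed (continuous, in fact), and polyhedral. Nonnegativity of $u$ follows because $q = 0$ (the empty bundle for every agent) lies in $\cset$, giving $u(p;v) \geq (v-p)^\top 0 = 0$; note this uses the convention that the empty bundle has value and price zero. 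Nonnegativity of $s$ follows similarly because $q = 0 \in F$, giving $s(p) \geq 0$.

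Next, the two explicit bounds. For $s$: fix the pair $(i^\star,x^\star)$ attaining $\max_{i,x} p_i(x)$ and consider the allocation $q$ that gives bundle $x^\star$ to agent $i^\star$ and the empty bundle to everyone else. This $q$ is an extreme point of $F$ (one bundle per agent, each item used at most once since only one agent receives a nonempty bundle), so $s(p) \geq p^\top q = p_{i^\star}(x^\star) = \max_{i,x}p_i(x)$, as claimed. For $u$: again take the same pair $(i^\star,x^\star)$ minimizing $p_i(x)$ over all $(i,x)$, and consider $q\in\cset$ assigning bundle $x^\star$ to agent $i^\star$ and the empty bundle to all other agents; this is a valid point of $\cset = \cset_1\times\cdots\times\cset_n$ since each $\cset_i$ contains its extreme points. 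Then
\[
  u(p;v) \geq (v-p)^\top q = v_{i^\star}(x^\star) - p_{i^\star}(x^\star) \geq -\|v\|_\infty - \min_{i,x} p_i(x),
\]
using $v_{i^\star}(x^\star)\geq -\|v\|_\infty$ and $p_{i^\star}(x^\star) = \min_{i,x}p_i(x)$. This gives the stated bound.

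I do not expect any real obstacle here; the lemma is a warm-up collecting elementary facts. The only mild subtlety is making sure the allocations used to witness the lower bounds are genuinely feasible — for $s$ one must check the per-item supply constraint, which holds trivially since only one agent receives anything, and for both one uses that the convex hulls in~\eqref{def:consumption-set} and~\eqref{def:production-set} contain the relevant $0$-$1$ extreme points. A careful write-up should also note that the min/max over the finite index set $\{(i,x): i\in[n],x\in X\}$ is attained, so the chosen witnesses exist.
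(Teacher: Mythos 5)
Your proposal is correct and follows essentially the same route as the paper: the paper also obtains the structural properties directly from the definitions as maxima of affine functions (citing standard results on support functions) and proves the two lower bounds by plugging in exactly the same witnesses, namely the standard basis vectors in $\cset$ and $F$ corresponding to the price-minimizing and price-maximizing agent–bundle pair. Your write-up merely makes the feasibility checks and the attainment of the min/max slightly more explicit.
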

\begin{proof}
  Both $u$ and $s$ are convex, closed, and polyhedral by definition \cite[Chapter 19]{ROC}.
  They are nonnegative by their definition, noting that $0 \in F$ and $0 \in \cset$.
  The remaining inequalities follow similarly by definition:
  letting $q_1 \in \cset$ and $q_2\in F$ denote standard basis vectors so that
  \[
    p^\top q_1
    = \min_{\substack{i\in [n] \\ x\in X}} p_i(x)
    \qquad\textup{and}\qquad
    p^\top q_2 =
    \max_{\substack{i\in [n] \\ x\in X}} p_i(x),
  \]
  then
  \begin{align*}
    u(p;v)
    = \max_{q\in \cset}(v-p)^\top q
    \geq v^\top q_1 - p^\top q_1
    \geq - \|v\|_\infty - p^\top q_1
    \qquad\textup{and}\qquad
    s(p)
    = \max_{q\in F} p^\top q
    \geq p^\top q_2.
  \end{align*}
\end{proof}

\begin{lemma}
  \label[lemma]{fact:duality:basic}
  Let closed convex bounded below $r : \R^d \to \R$,
  matrix $G \in \R^{n\ell \times d}$
  and vector $v \in \R^{n\ell}$ be given.
  Then
  $$
    \inf\left\{ u(Gw;v) + s(Gw) + r(w) : z \in \bR^d \right\} \\
    =
     \max\left\{ v^\top q'
                      - r^*(G^\top (q'-q)) : q' \in \cset,
          q \in F \right\},
  $$
  where $r^*(y) = \sup_w (y^\top w - r(w))$ is the convex (Fenchel) conjugate
  of $r$ \cite[Chapter 12]{ROC}.
  Feasible points $\optw$ and $(\optq, \optq')$ are optimal for the
  primal and dual problems, respectively, if 
  they satisfy
  the conditions
  $G^\top (\optq' - \optq) \in \subdiff r(\optw)$,
  $\optq' \in U(G\optw;v)$,
  and $\optq \in S(G\optw)$.
\end{lemma}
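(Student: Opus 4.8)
The plan is to obtain \Cref{fact:duality:basic} as a specialization of the Fenchel--Rockafellar duality theorem applied to the composite problem $\inf_w\, r(w) + \phi(Gw)$ with $\phi(p) := u(p;v) + s(p)$, and then to translate the abstract conjugate and optimality data into the economic objects $U$ and $S$. First I would collect the structural prerequisites: by~\Cref{fact:u_s_nice} the functions $u(\cdot;v)$ and $s$ are closed, convex and polyhedral, and because $\cset$ and $\pset$ are bounded they are finite --- hence continuous --- on all of $\R^{n\ell}$; consequently $\phi$ is a finite closed convex polyhedral function, and $r$ is closed proper convex by hypothesis (with $u,s\geq 0$ and $r$ bounded below, the primal value is finite). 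Continuity of $\phi$ on all of $\R^{n\ell}$ makes the Fenchel constraint qualification trivial ($\IM(G)\cap\operatorname{cont}(\phi)\ne\emptyset$), so there is no duality gap and the dual supremum is attained:
\[
  \inf_{w\in\R^d}\bigl\{\,r(w) + \phi(Gw)\,\bigr\}
  = \max_{y\in\R^{n\ell}}\bigl\{\,-r^*(G^\top y) - \phi^*(-y)\,\bigr\}.
\]

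Next I would compute $\phi^*$. Writing $s(p)=\sigma_\pset(p)$ and $u(p;v)=\sigma_\cset(v-p)$ as support functions, direct conjugation yields $s^*=\iota_\pset$ and $u^*(z;v)=v^\top z + \iota_\cset(-z)$ (equal to $v^\top z$ if $-z\in\cset$ and $+\infty$ otherwise). Since $u(\cdot;v)$ and $s$ are polyhedral (in particular their domains coincide), the conjugate of their sum is the \emph{exact} infimal convolution $\phi^* = u^*(\cdot;v)\,\square\,s^*$. Substituting and reparametrizing via $q' := -z_1\in\cset$ and $q := z_2\in\pset$ converts $-\phi^*(-y)$ into $\sup\{\,v^\top q' : q'\in\cset,\ q\in\pset,\ q'-q = y\,\}$; inserting this into the display above and merging the two suprema produces exactly the claimed dual $\max\{\,v^\top q' - r^*(G^\top(q'-q)) : q'\in\cset,\ q\in\pset\,\}$. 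Attainment of the combined maximum then follows from attainment of the outer $\max$ over $y$ together with compactness of $\cset\times\pset$.

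For the sufficient optimality conditions I would use the Fenchel--Young characterization of a zero-gap primal/dual pair: $\bar w$ and $\bar y$ are jointly optimal iff $G^\top\bar y\in\subdiff r(\bar w)$ and $-\bar y\in\subdiff\phi(G\bar w)$. By the sum rule (available since $u(\cdot;v)$ and $s$ are finite everywhere), the identity $\subdiff\sigma_\pset(p)=S(p)$, and the affine chain rule $\subdiff_p\,\sigma_\cset(v-p) = -U(p;v)$, one gets $\subdiff\phi(p) = S(p) - U(p;v)$. Thus, taking $\bar y := \bar q' - \bar q$, the stated hypotheses $\bar q'\in U(G\bar w;v)$, $\bar q\in S(G\bar w)$ and $G^\top(\bar q'-\bar q)\in\subdiff r(\bar w)$ imply both Fenchel--Young equalities, so $\bar w$ is primal optimal. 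A short computation --- evaluating $\sigma_\cset(v-G\bar w)=(v-G\bar w)^\top\bar q'$ and $\sigma_\pset(G\bar w)=(G\bar w)^\top\bar q$, then plugging into $\phi(G\bar w)+\phi^*(-\bar y) = \ip{G\bar w}{-\bar y}$ --- shows $\phi^*(-\bar y) = -v^\top\bar q'$, so the dual objective at $(\bar q,\bar q')$ already equals the optimal dual value, making $(\bar q,\bar q')$ dual optimal.

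I expect the main obstacle to be the bookkeeping in the conjugate computation: keeping signs straight through the composition $p\mapsto\sigma_\cset(v-p)$, confirming that the conjugate of the sum equals the infimal convolution \emph{and} that it is attained (this is exactly where polyhedrality is used --- for generic convex functions the infimal convolution need not be exact or attained), and, in the optimality part, aligning the abstract condition $-\bar y\in\subdiff\phi(G\bar w)$ with the individual correspondences $U$ and $S$ rather than merely with their difference. The remaining ingredients --- subdifferentials of support functions, the finite-everywhere sum rule, and Fenchel--Young --- are standard and can be cited from~\cite{ROC,HULL}.
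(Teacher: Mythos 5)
Your proposal is correct and follows essentially the same route as the paper's proof: Fenchel duality applied to $w\mapsto r(w)+(u+s)(Gw)$, computation of $u^*$ and $s^*$ as indicator-plus-linear conjugates of support functions, exactness of the infimal convolution $(u+s)^*=u^*\,\square\,s^*$ via polyhedrality, a change of variables to reach the stated dual over $(q,q')\in F\times\cset$, and identification $\subdiff u(p)=-U(p;v)$, $\subdiff s(p)=S(p)$ to translate the Fenchel--Young optimality conditions. The only cosmetic difference is how dual optimality of $(\optq,\optq')$ is confirmed (your direct evaluation of $\phi^*(-\bar y)=-v^\top\optq'$ versus the paper's first-order condition on the inner infimal-convolution minimization), and both are valid.
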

\begin{proof} For convenience, let $\iota_S$ denote the
  indicator function for convex set $S$, defined as $\iota_S(q) = 0$
  for $q \in S$ and $+\infty$ otherwise.
  Additionally, we write $u(Gw) = u(Gw;v)$ since $v\in\R^{n\ell}$ is fixed throughout.
  By \Cref{fact:u_s_nice}, both $u$ and $s$
  are convex, closed, polyhedral, and bounded below by 0.
  Since $u(p) = \sup_{q \in \cset} (v-p)^\top q$, applying standard conjugacy rules~\cite[Theorem 12.3]{ROC},
  $$u^*(q) = \iota_{\cset}(-q) + v^\top q.$$
  Similarly, as $s(p) = \sup_{q \in F} p^\top q$, we have
  $$s^*(q) = \iota_F(q).$$
  Combining these pieces, both $u^*$ and $s^*$ are
  polyhedral~\cite[Theorem 19.2, Corollary 19.2.1, Theorem 19.4]{ROC},
  and thus
  \begin{equation}
    (u + s)^*(q) = \min\left\{u^*(q') + s^*(q - q') : q' \in
                       \bR^{n\ell}\right\} \label{eq:infconv}
  \end{equation}
  where attainment on the right-hand side holds because the conjugate
  is proper, which in turn holds because $u$ and $s$ are bounded below
  and $F$ is nonempty~\cite[Theorem 16.4, Corollary 19.3.4]{ROC}.

  Since $u$, $r$ and $s$ are bounded below and finite everywhere,
  it follows by Fenchel duality~\cite[Theorem 3.3.5, Exercise 3.3.9.f]{borwein2010convex} that
  \begin{equation*}
    \inf\left\{ u(Gw) + s(Gw) + r(w) : w \in \bR^d \right\}
    =  \max\left\{ -(u+s)^*(-q) - r^*(G^\top q) : q \in
      \bR^{n\ell} \right\},
  \end{equation*}
  and moreover that a pair $(\optw, \optq)$ is optimal for the primal and dual problems,
  respectively, iff
  $G^\top \optq \in \subdiff r(\optw)$ and $-\optq \in \subdiff
  (u+s)(G\optw)$.

  To simplify the dual expression,
  note by \eqref{eq:infconv} and other conjugacy relations above
  that
  \begin{eqnarray}
    & & \max\left\{ -(u+s)^*(-q) - r^*(G^\top q) : q \in
      \bR^{n\ell} \right\}
    \notag\\
    & = & \max\left\{ - \min\left\{u^*(q') + s^*(-q - q') : q' \in
                       \bR^{n\ell}\right\} - r^*(G^\top q) : q \in
                   \bR^{n\ell} \right\} \notag\\
    & = & \max\left\{ - \iota_{\cset}(-q') - v^\top q' + \iota_F(-q-q')
                      - r^*(G^\top q) : q,q' \in
          \bR^{n\ell} \right\} \notag\\
    & = & \max\left\{ - v^\top q'
                      - r^*(G^\top q) : q \in \bR^{n\ell}, q' \in -\cset,
          q + q' \in -F \right\}.
          \notag
  \end{eqnarray}
  Combining this with the cosmetic changes of variable $q' \mapsto -q'$ and subsequently $q \mapsto q' - q$ leads to
  \begin{equation*}
    \max\left\{ v^\top q' - r^*(G^\top (q'-q)) : q' \in \cset, q \in F \right\}
  \end{equation*}
  as desired.

  It remains to prove the sufficient conditions for optimality.
  Consequently, suppose $(\optw, (\optq,\optq'))$ are given as in the statement,
  meaning
  $G^\top(\optq' - \optq) \in \partial r(\optw)$,
  $\optq' \in U(G\optw)$, and
  $\optq \in S(G\optw)$.
  In order to show these are optimal, they will be shown to satisfy both the optimality conditions above,
  which provide optimality of $(\optw,\optq)$, and also an optimality condition on the infimal convolution,
  which in turn grants optimality of $\optq'$.
  To proceed with this analysis, it is necessary to first reverse the
  change of variable on these assumed conditions,
  meaning first performing $\optq' - \optq \mapsto \optq$ and then $-\optq' \mapsto \optq'$,
  which means the transformed variables satisfy the conditions
  $G^\top\optq \in \partial r(\optw)$,
  $\optq' \in -U(G\optw)$, and
  $\optq + \optq' \in -S(G\optw)$.

  According to the duality statements above,
  in order for $(\optw,\optq)$ to be optimal, it suffices (as above)
  to show $G^\top \optq \in \subdiff r(\optw)$ and $-\optq \in \subdiff
  (u+s)(G\optw)$;
  the first of these holds by assumption, and to decode the second, note
  by the convexity
  of $\cset$ and $F$ as well as the fact that $u$ and $s$ are maximizations
  over linear functions that
  \begin{align*}
    \partial u(p)
    &= \conv\left(
    \{ -q \in \cset : (v-p)^\top q = u(p) \}
    \right)
    = - \{ q \in \cset : (v-p)^\top q = u(p) \}
    = - U(p),
    \\
    \partial s(p)
    &= \conv\left(
    \{ q \in F : p^\top q = s(p) \}
    \right)
    = \{ q \in F : p^\top q = s(p) \}
    = S(p),
  \end{align*}
  whereby the rule $\partial (u+s) = \partial u + \partial s$ for finite convex functions
  and the assumptions $\optq' \in -U(G\optw)$ and $\optq + \optq' \in -S(G\optw)$
  grant
  \[
    -\optq \in S(G\optw) + \{\optq'\}  \in S(G\optw) - U(G\optw) = \partial (u+s)(G\optw),
  \]
  which establishes the second optimality condition.

  It remains to be shown that $\optq'$ is also optimal.
  For this,
  applying first order sufficient conditions
  to $q' \mapsto u^*(q') + s^*(-\optq-q')$,
  it suffices to show that
  \[
    0
    \in \partial u^*(\optq') + \partial (q' \mapsto s^*(-\optq - q'))(\optq')
    = \partial u^*(\optq') - \partial s^*(-\optq - \optq'),
  \]
  where the second equality used composition rules for subdifferentials \citep[Theorem D.4.2.1]{HULL}.
  This in turn completes the proof, since standard conjugacy rules \cite[Proposition E.1.4.3]{HULL}
  grant $G\optw \in \partial u^*(\optq')$ via $\optq' \in \partial u(G\optw) = -U(G\optw)$
  and
  $G\optw \in \partial s^*(-\optq - \optq')$ via $\optq + \optq \in -\partial s(G\optw) = -S(G\optw)$,
  whereby
  $0 = G\optw - G\optw \in \partial u^*(\optq') - \partial s^*(-\optq - \optq')$.
\end{proof}

\begin{lemma}
  \label[lemma]{fact:opt:basic}
  Let closed convex bounded below $r : \R^d \to \R$,
  matrix $G \in \R^{n\ell \times d}$
  and vector $v \in \R^{n\ell}$ be given.
  Suppose further that $r$ is either constant or has compact level sets.
  Then given any probabilty measure $\nu$ over $v \in \R^{n\ell}$ with $\Exp_\nu[ \|v\|_\infty ]<\infty$,
  the function
  \begin{equation}
    \label{eq:opt:repeated}
    h(w) = \Exp_\nu \left[ u(Gw;v) + s(Gw) + r(w) \right]
  \end{equation}
  attains a minimum.
\end{lemma}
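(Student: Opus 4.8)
The plan is to verify that $h$ is a closed, proper, convex function on $\R^d$ and then apply a Weierstrass-type argument; the subtlety is that $h$ is \emph{not} coercive when $G$ lacks full column rank, and the two hypotheses on $r$ (constant, or compact level sets) are exactly what is needed to control the remaining recession directions. First I would check that $h$ is finite-valued. By \Cref{fact:u_s_nice} the maps $u$ and $s$ are nonnegative and $r$ is bounded below, so $h(w) \ge r(w) > -\infty$ for all $w$. For finiteness from above, every $q \in \cset$ satisfies $\|q\|_1 \le n$, so $u(p;v) = \max_{q \in \cset}(v - p)^\top q \le n(\|v\|_\infty + \|p\|_\infty)$; since $\Exp_\nu[\|v\|_\infty] < \infty$ this makes $\Exp_\nu[u(Gw;v)]$ finite, while $\Exp_\nu[s(Gw)] = s(Gw)$ and $r(w)$ are finite. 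Hence $h : \R^d \to \R$ is finite; it is convex as an expectation of the convex maps $w \mapsto u(Gw;v)$ and $w \mapsto s(Gw)$ plus $r$, and being finite-valued and convex on all of $\R^d$ it is continuous, hence closed and proper.

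The case where $r$ has compact level sets is then immediate: since the first two terms of $h$ are nonnegative, $\{w : h(w) \le h(0)\} \subseteq \{w : r(w) \le h(0)\}$, which is compact and contains $0$, so $h$ attains its minimum on this nonempty compact set by continuity, and this is its global minimum.

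The main work is the case $r$ constant; as a constant shift affects neither the minimizers nor the claim, take $r \equiv 0$. Then $h(w) = \phi(Gw)$ with $\phi(p) := \Exp_\nu[u(p;v)] + s(p)$, which by the bounds above is finite, convex, continuous and nonnegative on $\R^{n\ell}$. It suffices to show $\phi$ attains its minimum over the subspace $\IM(G)$, since any minimizer $p^\ast = Gw^\ast$ there gives $h(w^\ast) = \min_{p \in \IM(G)}\phi(p) = \min_w h(w)$. I would do this by showing $\phi$ has bounded --- hence, being continuous, compact --- level sets already on $\R^{n\ell}$; intersecting a nonempty one with the closed subspace $\IM(G)$ gives a nonempty compact set on which $\phi$ attains its minimum over $\IM(G)$. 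To bound the level sets, compute the recession function $\phi_\infty$: using $u(tp';v)/t \le n\|v\|_\infty + n\|p'\|_\infty$ for $t \ge 1$ as an integrable dominating function, dominated convergence lets one pass the expectation through the limit $t \to \infty$, giving $\phi_\infty(p') = \max_{q \in \cset}(-p')^\top q + \max_{q \in F}(p')^\top q$ (independent of $\nu$), with both terms $\ge 0$ because $0 \in \cset$ and $0 \in F$. If $\phi_\infty(p') \le 0$ then both terms vanish, so $(p')^\top q \ge 0$ for all $q \in \cset$ and $(p')^\top q \le 0$ for all $q \in F$; taking in the first the extreme point of $\cset$ that assigns a single bundle $x$ to a single agent $i$ gives $p'_i(x) \ge 0$, and taking in the second the feasible allocation in $F$ in which agent $i$ receives bundle $x$ while all other agents receive nothing gives $p'_i(x) \le 0$, for every $i$ and $x$. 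Hence $p' = 0$, so the recession cone $\{p' : \phi_\infty(p') \le 0\}$ of every (nonempty) level set of $\phi$ is $\{0\}$, i.e. the level sets are bounded, and the conclusion follows.

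I expect the recession-cone step in the constant-$r$ case to be the main obstacle: justifying the interchange of expectation and $t \to \infty$ limit to identify $\phi_\infty$, and then extracting $p' = 0$ from the vanishing of the two recession terms via the combinatorial structure of $\cset$ and $F$ (that their extreme points include all single-agent single-bundle configurations, and $F \subset \cset$). A more uniform but citation-heavier route treats both cases together: $h$ is closed proper convex with $h_\infty \ge 0$ everywhere and $h_\infty(d) = u_\infty(Gd) + s_\infty(Gd) + r_\infty(d)$, so $h_\infty(d) = 0$ forces $Gd = 0$ and $r_\infty(d) = 0$; whether $r$ is constant (so $r_\infty \equiv 0$) or has compact level sets (so $r_\infty(d) = 0 \Rightarrow d = 0$), this yields $h_\infty(-d) = r_\infty(-d) = 0$, and Rockafellar's attainment criterion for closed proper convex functions (namely $h_\infty(d) \le 0 \implies h_\infty(-d) \le 0$) gives the result.
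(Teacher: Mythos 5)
Your proof is correct and follows essentially the same route as the paper's: the compact-level-set case is dispatched identically, and for constant $r$ both arguments quotient out $\ker(G)$ and show the remaining objective is coercive along any direction that actually moves prices --- the paper by proving 0-coercivity of $h + \iota_{\ker(G)^\perp}$ in $w$-space (via the lower bounds of \Cref{fact:u_s_nice}), you by showing the recession function of $\phi(p)=\Exp_\nu[u(p;v)]+s(p)$ is strictly positive at every $p'\neq 0$ and then minimizing over $\IM(G)$. The extra steps you supply (the dominated-convergence interchange and the single-agent, single-bundle test vectors in $\cset$ and $F$) are exactly the content the paper packages into \Cref{fact:u_s_nice}, so the two arguments differ only cosmetically.
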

\newcommand{\imGt}{\ker(G)^\perp}

\begin{proof}
   If $r$ has compact level sets, then the
  result follows since the rest of $h$ is bounded below, and thus $h$
  itself has compact level sets and attains a minimum.
  Otherwise, suppose $r$ is equal to some constant $c\in \R$ everywhere;
  in this case, $h$ is invariant over $\ker(G)$ (due to $w$ only appearing as $Gw$ now that $r$ is constant),
  so it is convenient to explictly rule out changes along $\ker(G)$
  and consider the auxiliary function
  \[
   f(w) = h(w) + \iota_{\imGt}(w).
  \]
  We will show that $f$ is 0-coercive,
  and thus has compact level sets and attains a minimum \citep[Proposition B.3.2.4]{HULL}.
  This in turn completes the proof, since a minimum for $f$ is also a minimum of $h$
  as follows.
  For any $w \in \R^{d}$
  consider the decomposition $w = w_\perp + w_\tk$ where $w_\perp \in \imGt$ is the orthogonal
  projection of $w$ onto $\imGt$ and $w_\tk\in \ker(G)$ is the orthogonal projection of $w$ onto
  $\ker(G)$.
  Since $h$ is invariant to $\ker(G)$,
  then $h(w) = h(w_\perp)$.
  But as $h$ and $f$ agree over $\imGt$,
  we in fact have $h(w) = h(w_\perp) = f(w_\perp)$,
  and moreover
  \[
    \inf\left\{ h(w) : w\in\R^d\right\}
    = \inf\left\{ h(w) : w\in \imGt \right\}
    = \inf\left\{ f(w) : w\in \imGt \right\}.
  \]
  Since $f$ is infinite off of $\imGt$, its minimum occurs along $\imGt$,
  and the above equalities grant that this minimum is also a minimum for $h$.

  To prove 0-coercivity of $f$, let $w \in \R^d$ be an
  arbitrary nonzero direction,
  and note that it suffices to consider $w \in \imGt$ (since $f$
  is $+\infty$ in other directions). Let $p = Gw$.
  There are now two cases to consider
  on the sign of $p_{\min} = \min\{p_i(x) : i\in [n], x\in X\}$:
  either $p_{\min} < 0$, or not.
  If $p_{\min} < 0$, then \Cref{fact:u_s_nice} grants
  \begin{eqnarray*}
    \lim_{t\to\infty} \frac {f(tw) - f(0)}{t}
    = \lim_{t\to\infty} \frac{\Exp_\nu \left[
      u(tp; v) \right] +
    s(tp) + c - c}{t}
    \geq \lim_{t\to\infty} \frac {\Exp_\nu\left[-\|v\|_\infty - tp_{\min} \right]}{t}
    = -p_{\min}
    > 0,
  \end{eqnarray*}
  which means $f$ is 0-coercive \citep[Proposition B.3.2.4]{HULL}.
  For the other case, $w \neq 0$ combined with $w \in \imGt$ implies
  that $p = Gw \neq 0$.
  Thus $p_{\min} \geq 0$ means that
  $p_{\max} = \max\{ p_i(x) : i\in[n], x\in X\} > 0$.
  Once again invoking \Cref{fact:u_s_nice},
  \begin{eqnarray*}
    \lim_{t\to\infty} \frac {f(tw) - f(0)}{t}
    = \lim_{t\to\infty} \frac{\Exp_\nu \left[
      u(tp; v) \right] +
    s(tp) + c - c}{t}
    \geq \lim_{t\to\infty} \frac {t p_{\max}}{t}
    = p_{\max} > 0,
  \end{eqnarray*}
  proving 0-coercivity \citep[Proposition B.3.2.4]{HULL}.
\end{proof}

\begin{lemma}
  \label[lemma]{fact:opt:basic:1}
  Consider the setting of \Cref{fact:opt:basic}, providing objects $G$
  and $\nu$, but suppose $r(w) = \lambda \|w\|_2^2/2$ for some
  $\lambda \geq 0$. Let $h$ denote the function
  in~(\ref{eq:opt:repeated}), let $\bar w$ denote any minimizer of $h$
  (as provided by \Cref{fact:opt:basic}), and set
  $\cW_0 = \{ w \in \R^d : h(w) \leq h(0)\}$.
  \begin{itemize}
    \item
      If $G = G_{\id}$,
      then every $w \in \cW_0$ (including $\bar w$) satisfies
      $\|w\|_\infty \leq (n+1)V$
      and
      $\|w\|_2 \leq (n+1)V \sqrt{\ell}$.

    \item
      If $G = \spoly{r}$ for some integer $r\geq 1$,
      then every $w \in \cW_0$ (including $\bar w$) satisfies
      $\|w\|_\infty \leq (n+1)V 2^r$
      and
      $\|w\|_2 \leq (n+1)V m^{r/2} 2^r$.
  \end{itemize}
\end{lemma}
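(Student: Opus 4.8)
The plan is to follow the two-stage strategy sketched after Proposition~\ref{prop:w-bounds}: first bound $\|Gw\|_\infty$, the vector of explicit bundle prices, for \emph{every} $w$ in the sublevel set $\cW_0$ (which is nonempty, containing $0$, and which contains all minimizers of $h$ since $\min h \le h(0)$); then convert this into a bound on $\|w\|_\infty$ using the structure of $G$; and finally pass to $\|w\|_2$ by counting coordinates. For the first stage, write $V = \Exp_\nu[\|v\|_\infty]$ and observe that $h(0) = \Exp_\nu[u(0;v)]$ since $s(0) = 0$ and $r(0) = 0$, and that $u(0;v) = \max_{q\in\cset} v^\top q \le n\|v\|_\infty$ because $\cset$ is a product of $n$ polytopes whose extreme points are standard basis vectors or the origin; hence $h(0) \le nV$. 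Now fix $w \in \cW_0$, put $p = Gw$, and let $p_{\min} = \min_{i,x} p_i(x)$ and $p_{\max} = \max_{i,x} p_i(x)$, the extrema taken over $i\in[n]$, $x\in X$. Using $r \ge 0$, the trivial bounds $u \ge 0$ and $s \ge 0$, and the lower bounds of Lemma~\ref{fact:u_s_nice}, the chain $nV \ge h(w) \ge \Exp_\nu[u(p;v)] + s(p)$ gives $p_{\max} \le nV$ (bounding $s(p) \ge p_{\max}$ and $\Exp_\nu[u(p;v)] \ge 0$) and $p_{\min} \ge -(n+1)V$ (bounding $\Exp_\nu[u(p;v)] \ge -V - p_{\min}$, since $p_{\min}$ is constant in $v$, and $s(p) \ge 0$). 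Since $p_{\min} \le p_i(x) \le p_{\max}$ for all $i,x$, this yields $\|p\|_\infty = \max_{i,x}|p_i(x)| \le (n+1)V$.

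For the second stage with $G = \Gid$, the representation matrix is a vertical stack of $n$ copies of the $\ell\times\ell$ identity, so $p_i(x) = w(x)$ for every agent $i$ and bundle $x\in X$; hence $\|w\|_\infty = \|p\|_\infty \le (n+1)V$, and $\|w\|_2 \le \sqrt{\ell}\,\|w\|_\infty \le (n+1)V\sqrt{\ell}$. For $G = \spoly{r}$, the price of a bundle $x$ is $p_i(x) = \sum_{\emptyset\ne S\subseteq x,\ |S|\le r} w(S)$, which for bundles $x$ of size at most $r$ reduces to $\sum_{\emptyset\ne S\subseteq x} w(S)$; Möbius inversion over the Boolean lattice then recovers each coordinate as an alternating sum $w(S) = \sum_{T\subseteq S} (-1)^{|S|-|T|} p_i(T)$, with the normalization $p_i(\emptyset) = 0$. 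Bounding each of the at most $2^{|S|}\le 2^r$ terms by $\|p\|_\infty$ gives $\|w\|_\infty \le 2^r(n+1)V$, and since $w$ has $\sum_{k=1}^r \binom{m}{k} \le m^r$ coordinates, $\|w\|_2 \le \sqrt{\sum_{k=1}^r\binom{m}{k}}\,\|w\|_\infty \le (n+1)V\,m^{r/2}2^r$, using $\sum_{k=1}^r\binom{m}{k} \le m^r$ exactly as in the bound on $\|\spoly{r}\|_{2,\infty}$.

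The routine parts are the first stage (a direct application of Lemma~\ref{fact:u_s_nice}) and the $\ell_2$ estimates (coordinate counts). The main obstacle is the combinatorial step for $\spoly{r}$: recovering $w(S)$ by inclusion--exclusion requires the prices $p_i(T)$ of all sub-bundles $T\subseteq S$, so one must ensure that these bundles' prices are among those controlled in the first stage---this holds automatically whenever $X$ contains all bundles of size at most $r$, which is the relevant regime for degree-$r$ polynomial pricing (and in particular when $X$ is the full bundle space).
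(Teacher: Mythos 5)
Your proposal is correct and follows essentially the same route as the paper's proof: bound $h(0)\leq nV$, use the lower bounds on $u$ and $s$ from \Cref{fact:u_s_nice} to get $\|Gw\|_\infty\leq (n+1)V$ on the sublevel set, then pass to $\|w\|_\infty$ directly for $\Gid$ and via M\"obius inversion for $\spoly{r}$, and finish by counting coordinates with $\sum_{k=1}^r\binom{m}{k}\leq m^r$. Your explicit remark that the inversion needs the prices of all sub-bundles of size at most $r$ to be controlled (i.e., these bundles appear in $X$) is a point the paper leaves implicit, but it does not change the argument.
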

\begin{proof}
  Before specializing the choice of $G$, there are a few general
  properties to note. First we have
  \[
    h(0)
    = \Exp_\nu[u(0;v)] + 0 + 0
    = \Exp_\nu\left[ \max_{q\in \cset} v^\top q \right]
    \leq  \Exp_\nu\left[ n \|v\|_\infty \right]
    = nV,
  \]
  where the bound follows from H\"{o}lder's inequality.
  In particular $w \in \cW_0$ implies $h(\bar w) \leq h(w) \leq h(0) \leq nV$.
  Consequently, for any $w \in \cW_0$, letting $p = Gw$, we have
  from \Cref{fact:u_s_nice} that
  \begin{eqnarray*}
    nV
    \geq h(w)
    \geq \Exp_{\nu}[u(p;v)] + 0 + 0
    \geq -\Exp_\nu[ \|v\|_\infty ] - \min_{i \in [n], x \in X} p_i(x)
  \end{eqnarray*}
  and
  \begin{eqnarray*}
    nV
    \geq h(w)
    \geq 0 + s(p) + 0
    \geq \max_{i \in [n], x \in X} p_i(x),
  \end{eqnarray*}
  which together imply
  \begin{equation}
    \norm{p}_{\infty} = \norm{Gw}_{\infty} \leq (n+1)V.
    \label{eq:opt:basic:1:a}
  \end{equation}
  Now consider the case $G = G_{\id}$.  In this case,
  (\ref{eq:opt:basic:1:a}) directly provides $\|w\|_\infty \leq (n+1)V$ for $w\in\cW_0$,
  and thus $\|w\|_2 \leq (n+1) V \sqrt{\ell}$.

  The remainder of the proof will handle the case $G=\spoly{r}$ for some integer $r\geq 1$,
  with an arbitrary $w\in \cW_0$ as before. The feature space now has
  one component for each bundle of size at most $r$; thus we write
  $w(x)$ for the component of $w$ corresponding to $x \in X$ where
  $|x| \leq r$. Let $x \in X$ be of size at most $r$. We have
  $$
  p(x) = \sum_{x' \subseteq x} w(x').
  $$
  By M\"{o}bius inversion~\citep[][Theorem 1, Example 2]{bender1975applications},
  $$
  w(x) = \sum_{x' \subseteq x} (-1)^{|x \setminus x'|} p(x'),
  $$
  and therefore
  $$
  |w(x)| \leq 2^r (n+1)V.
  $$
  Thus $\norm{w}_\infty \leq (n+1)V 2^r$ and $\norm{w}_2 \leq (n+1)V 2^r
  \sqrt{d}$, where $d$ is the dimension of the feature space under
  $\spoly{r}$. Applying the simple bound $d \leq m^r$ yields the result.
\end{proof}

\medskip
The following convergence result is standard; for other versions, see~\cite{Bubeck14}.

\begin{lemma}
  \label[lemma]{fact:sgd:basic:0}
  Let the following objects be given as specified.
  \begin{itemize}
    \item
      A closed convex (but not necessarily bounded) constraint set $\cW \subseteq \R^d$.
    \item
      Iterates $(w^t)_{t\geq 1}$ and supporting objects $(g^t,\eta^t)_{t\geq 1}$
      with $w^1 \in \cW$ arbitrary, $\eta^t = c / \sqrt{t}$
      for some scalar $c>0$,
      and $w^{t+1} = \Pi_{\cW}(w^t - \eta^t g^t)$ where $g^t\in \R^d$ is arbitrary.
    \item
      A sequence of finite convex functions $(f^t)_{t\geq 1}$,
      where $\sgf^t$ will denote an arbitrary subgradient of $f^t$
      at $w^t \in \R^d$.
  \end{itemize}
  Then for any time horizon $T\geq 1$, any comparator $u \in \cW$, and any $L \geq \sup_{t \in [T]} \|g^t \|_2$,
  \begin{align*}
    &\frac 1 {\sum_{t=1}^T \eta^t} \sum_{t=1}^T \eta^t \left(
          f^t(w^t) - f^t(u)
      \right)
      \\
      &\qquad\qquad\leq
      \frac 1 {c\sqrt{T}} \left(
          \|w^1-u\|_2^2
          + 2\sum_{t=1}^T \eta^{t} (w^t - u)^\top (\sgf^t - g_{t}) +  L^2 c^2\ln(eT)
      \right).
  \end{align*}
\end{lemma}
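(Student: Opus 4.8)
The plan is to run the textbook one‑step‑potential analysis of projected online subgradient descent, in three moves: a per‑round inequality coming from the projection, a telescoping sum, and a final normalization by $\sum_t \eta^t$.

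First I would track the potential $\Phi^t = \|w^t - u\|_2^2$. Since $u \in \cW$ and orthogonal projection onto a closed convex set is nonexpansive toward points of that set, $\Phi^{t+1} = \|\Pi_{\cW}(w^t - \eta^t g^t) - u\|_2^2 \le \|w^t - \eta^t g^t - u\|_2^2$. Expanding the square gives $\Phi^{t+1} \le \Phi^t - 2\eta^t (g^t)^\top(w^t - u) + (\eta^t)^2\|g^t\|_2^2$. I would then rewrite $(g^t)^\top(w^t-u) = (\sgf^t)^\top(w^t-u) - (\sgf^t - g^t)^\top(w^t-u)$ and invoke the subgradient inequality $(\sgf^t)^\top(w^t-u) \ge f^t(w^t) - f^t(u)$ (valid since $f^t$ is finite and convex and $\sgf^t \in \partial f^t(w^t)$), obtaining the per‑round bound
\[
2\eta^t\bigl(f^t(w^t) - f^t(u)\bigr) \le \Phi^t - \Phi^{t+1} + (\eta^t)^2\|g^t\|_2^2 + 2\eta^t(\sgf^t - g^t)^\top(w^t - u).
\]

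Second, I would sum this over $t = 1, \dots, T$. The potential differences telescope, $\sum_{t=1}^T (\Phi^t - \Phi^{t+1}) = \Phi^1 - \Phi^{T+1} \le \|w^1 - u\|_2^2$; the step‑size term is controlled using $\|g^t\|_2 \le L$ and the harmonic estimate $\sum_{t=1}^T (\eta^t)^2 = c^2\sum_{t=1}^T \tfrac1t \le c^2(1 + \ln T) = c^2\ln(eT)$, giving
\[
2\sum_{t=1}^T \eta^t\bigl(f^t(w^t) - f^t(u)\bigr) \le \|w^1 - u\|_2^2 + L^2 c^2\ln(eT) + 2\sum_{t=1}^T \eta^t(\sgf^t - g^t)^\top(w^t - u).
\]
Finally, I would normalize: since $\eta^t = c/\sqrt t \ge c/\sqrt T$ for $t \le T$, we have $\sum_{t=1}^T \eta^t \ge c\sqrt T$, so dividing through by $2\sum_{t=1}^T\eta^t$ and bounding the resulting coefficient by $1/(c\sqrt T)$ yields exactly the claimed inequality with comparator $u$. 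I do not expect a genuine obstacle here; the only points requiring a little care are the nonexpansiveness of $\Pi_{\cW}$ toward points of $\cW$, the fact that the subgradient split is legitimate for the arbitrary $g^t$ allowed in the statement, and the two elementary sum estimates $\sum_{t\le T}\tfrac1t \le \ln(eT)$ and $\sum_{t\le T}\tfrac1{\sqrt t} \ge \sqrt T$ that fix the displayed constants.
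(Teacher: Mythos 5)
Your proposal is correct and follows essentially the same route as the paper's proof: the nonexpansive-projection potential expansion, the split of $g^t$ into $\sgf^t$ plus the noise term, telescoping, the bound $\sum_t (\eta^t)^2 \le c^2\ln(eT)$, and normalization. The only cosmetic difference is the lower bound on $\sum_{t\le T}\eta^t$ (your termwise estimate $\eta^t \ge c/\sqrt T$ versus the paper's integral comparison $\sum_t \eta^t \ge 2c(\sqrt{T+1}-1)$ followed by $4(\sqrt{T+1}-1)\ge\sqrt T$), and both yield the stated $1/(c\sqrt T)$ coefficient.
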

\begin{proof}
  For any $t\geq 1$,
  by properties of orthogonal projection onto closed convex sets \citep[Proposition 3.1.3]{HULL},
  \begin{align*}
    \| w^{t+1} - u \|_2^2
    &= \left\| \Pi_\cW(w^t - \eta^{t} g^{t}) - \Pi_\cW (u) \right\|_2^2
    \\
    &\leq \left\| w^t - \eta^{t} g^{t} - u \right\|_2^2
    \\
    &= \| w^t - u \|_2^2 - 2\eta^{t} (w^t - u)^\top g^t + (\eta^{t})^2 \|g^{t}\|_2^2
    \\
    &\leq \| w^t - u \|_2^2 - 2\eta^{t} (w^t - u)^\top\sgf^t + 2\eta^{t} (w^t - u)^\top(\sgf^t - g^{t}) + (\eta^{t})^2 L^2,
  \end{align*}
  which by rearrangement and the definition of $\sgf^t$ gives
  \begin{align*}
    2\eta^{t}\left(f^t(w^t) - f^t(u)\right)
    &\leq -2\eta^{t}(u - w^t)^\top \sgf^t
    \\
    &\leq
    \| w^t - u \|_2^2 - \| w^{t+1} - u \|_2^2
    + 2\eta^{t} (w^t - u)^\top(\sgf^t - g^t)+ (\eta^{t})^2 L^2.
  \end{align*}
  Summing across $t\in [T]$,
  \begin{align*}
    &\frac 1 {\sum_{t=1}^T \eta^t} \sum_{t=1}^T \eta^t\left(
      f^t(w^t) - f^t(u)
    \right)
    \\
    &\qquad\leq
    \frac 1 {2\sum_{t=1}^T \eta^t} \left(
      \|w^1 - u\|_2^2 - \|w^{T+1} - u\|_2^2
      + 2\sum_{t=1}^T \eta^{t} (w^t - u)^\top(\sgf^t - g^{t}) +  L^2 \sum_{t=1}^T(\eta^{t})^2
    \right),
  \end{align*}
  and the result follows from the estimates
  \[
    \sum_{t=1}^T \eta^t
    \geq
    c \int_1^{T+1}\frac{dx}{\sqrt{x}}
    \geq 2c (\sqrt{T+1} - 1)
    \qquad
    \textup{and}
    \qquad
    \sum_{t=1}^T (\eta^{t})^2
    \leq
    c^2(1 + \ln(T)),
  \]
  as well as the elementary inequality $4(\sqrt{T+1} - 1) \geq
  \sqrt{T}$.
\end{proof}

\medskip
\noindent
The following result is closely related to Afriat's theorem, for which
several proofs are available~\cite{fostel2004two}.
\begin{lemma}\label[lemma]{garp-lemma}
  Let $(b^t)_{t\geq 1}$ be the sequence of bids and $(p^t)_{t\geq 1}$
  be the sequence of prices. There exists a single value vector $v$
  that is consistent with $b^t$ under prices $p^t$, for all
  $t \in [T]$, if and only if the sequence of bids satisfies the GARP
  activity rule~\eqref{eq:garp-rule} with respect to the sequence of
  prices.
\end{lemma}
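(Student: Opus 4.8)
\emph{The plan} is to prove this as an Afriat-type rationalizability statement: the GARP inequality~\eqref{eq:garp-rule} will be read as the assertion that a suitable weighted directed graph has no negative cycle, and a consistent value vector will be built from shortest-path potentials in that graph (the standard ``Afriat numbers'' construction). The forward implication (a consistent $v$ exists $\Rightarrow$ GARP) works directly with the full bid vectors, but the reverse implication needs the observation that consistency $b^t\in U(p^t;v)$ is a bidder-by-bidder condition — the objective $v^\top q-p^\top q$ is separable over the product $\cset=\cset_1\times\cdots\times\cset_n$, so $b^t\in U(p^t;v)$ iff each $b^t_i$ maximizes $v_i^\top q_i-(p^t_i)^\top q_i$ over $\cset_i$ — so I would carry out the construction one bidder at a time, applying~\eqref{eq:garp-rule} restricted to that bidder's bid stream, and then concatenate the $v_i$ to get the joint $v$.

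\emph{Forward direction.} Fix a consistent $v$ and a cyclic list of distinct rounds $t_1,\dots,t_{k'}$. Since $b^{t_{k+1}}\in\cset$, optimality of $b^{t_k}$ at $p^{t_k}$ gives $(v-p^{t_k})^\top b^{t_k}\ge (v-p^{t_k})^\top b^{t_{k+1}}$, equivalently $(b^{t_{k+1}}-b^{t_k})^\top p^{t_k}\ge v^\top(b^{t_{k+1}}-b^{t_k})$. Summing over $k$ and noting the right-hand side telescopes around the cycle to $v^\top 0=0$ yields~\eqref{eq:garp-rule}.

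\emph{Reverse direction.} Fix a bidder $i$ and let $x_t\in X$ be the bundle encoded by $b^t_i$ when it is non-empty. On the node set $X\cup\{g\}$, where $g$ is a ``ground'' node held at value $0$ (encoding the normalization that the empty bundle is worth $0$), form the graph with, for each round $t$: arcs $x_t\to y$ of weight $p^t_i(y)-p^t_i(x_t)$ and $x_t\to g$ of weight $-p^t_i(x_t)$ when $i$ bids the non-empty $x_t$, and arcs $g\to y$ of weight $p^t_i(y)$ when $i$ bids the empty bundle. A simple cycle of this graph corresponds to a cyclic list of distinct rounds (distinct chosen bundles force distinct rounds), and its total weight equals exactly the left-hand side of~\eqref{eq:garp-rule} for that list, specialized to bidder $i$; hence GARP says the graph has no negative cycle. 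By the Bellman--Ford characterization there is then a potential $\pi$ with $\pi(g)=0$ and $\pi(\mathrm{head})\le\pi(\mathrm{tail})+\mathrm{weight}$ on every arc (if $i$ never bids empty, $g$ has no out-arcs and one instead takes any potential feasible for the remaining difference constraints and shifts it up so that $\pi\ge p^t_i(x_t)$ on chosen bundles). Set $v_i(x)=\pi(x)$ on bundles ever chosen by $i$, and $v_i(x)$ equal to a sufficiently small constant otherwise. Arc-feasibility then reads off precisely the consistency constraints: the $x_t\to g$ arcs give $v_i(x_t)\ge p^t_i(x_t)$, the $x_t\to y$ arcs give $v_i(x_t)-p^t_i(x_t)\ge v_i(y)-p^t_i(y)$ for chosen $y$, the $g\to y$ arcs cover the rounds where $i$ bids empty, and the very negative value on unchosen bundles disposes of the remaining comparisons. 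Thus $b^t_i$ maximizes $v_i^\top q_i-(p^t_i)^\top q_i$ over $\cset_i$ for every $t$, and $v=(v_1,\dots,v_n)$ is consistent throughout.

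\emph{Main obstacle.} The telescoping and the Bellman--Ford step are routine; the crux is to see that in the reverse direction one must certify optimality of $b^t_i$ against \emph{all} extreme points of $\cset_i$ — in particular against the origin (the empty bundle, with its forced value $0$), which is exactly why the per-bidder decomposition and the ground node $g$ are needed — and to check that restricting the GARP quantifier to \emph{distinct} rounds costs nothing, since a negative cycle may be taken simple. The only fiddly bookkeeping is the degenerate case where a bidder never bids the empty bundle, handled by the shift noted above.
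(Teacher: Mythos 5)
Your proposal is correct in substance but takes a genuinely different route from the paper's. The paper introduces auxiliary utility levels $\zeta_i^t$ and the Afriat system $\zeta_i^t - p_i^{t\top}b_i^t \geq \zeta_i^s - p_i^{t\top}b_i^s$, shows feasibility of that system is equivalent to consistency via the closed-form valuation $v_i(x)=\min_t\{\zeta_i^t - p_i^{t\top}b_i^t + p_i^t(x)\}$ (verifying optimality over all of $H_i$ by writing points as convex combinations of extreme points), and then characterizes feasibility by Farkas' lemma: the alternative is a min-cost circulation LP over the complete graph on rounds with arc costs $(b^t-b^s)^\top p^s$, whose optimal value is $0$ precisely when no (simple) cycle has negative cost, i.e., precisely under GARP. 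You dispense with the LP duality entirely: working bidder by bidder, you set up a difference-constraint graph on bundles plus a ground node for the empty bundle, identify negative simple cycles with GARP violations over distinct rounds, and obtain the utility levels as Bellman--Ford potentials, handling never-bid bundles with a very negative constant instead of the min-formula. Both proofs pivot on the same fact (GARP $=$ no negative cycle); yours replaces the Farkas/circulation machinery by the elementary feasibility criterion for difference constraints, at the cost of some bookkeeping (ground node, unchosen bundles, the case of a bidder who never bids empty), while the paper's min-formula yields one closed-form rationalizing valuation with no case analysis. Your forward direction, telescoping the optimality inequalities around the cycle, is also more direct than the paper's, which routes necessity through feasibility of the Afriat system.

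One step deserves a flag: you apply \eqref{eq:garp-rule} ``restricted to that bidder's bid stream,'' but as displayed the rule is an aggregate inner product over all $n\ell$ coordinates, and the aggregate inequality does not imply its per-bidder restriction. For example, with one item, two bidders, and anonymous prices equal to $1$ in round $1$ and $0$ in round $2$, let bidder $1$ bid the item at price $1$ and nothing at price $0$ (inconsistent with any fixed $v_1$), while bidder $2$ does the opposite (consistent); every sum in \eqref{eq:garp-rule} is nonnegative, yet no consistent $v$ exists. So the lemma really concerns the per-bidder form of the rule---which is how revealed-preference activity rules are imposed in practice, and which is what your bidder-by-bidder construction (and the paper's per-bidder Afriat system) actually uses. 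State that reading explicitly rather than deriving it from the aggregate display; with that fixed, your argument goes through.
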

\begin{proof}
  Given the sequences of bids and prices, form the following system of
  linear inequalities in the variables $\zeta_i^t$ for each $i \in
  [n]$ and $t \in [T]$:
  \begin{equation}\label{eq:afriat-system}
    \zeta_i^t - p_i^{t \top} b_i^t \geq \zeta_i^s - p_i^{t \top} b_i^s
    \hspace{10pt} (i \in [n], s,t \in [T]).
  \end{equation}
  Feasibility of these inequalities is a necessary condition for there
  to exist a valuation $v$ consistent with each bid, because they must
  hold for $\zeta_i^t = v_i^{\top} b_i^t$. To establish sufficiency,
  define
\begin{equation}\label{eq:consistent-v}
  v_i(x) = \min_{t \in [T]} \left\{ \zeta_i^t - p_i^{t \top} b_i^t +
    p^t_i(x) \right\}
\end{equation}
%
for $i \in [n]$ and $x \in X$. If $x^t$ is the bundle associated with
$b_i^t$ (recall that bid vectors in the auction are integer), then
$v_i^{\top} b_i^t = v_i(x^t) = \zeta_i^t$ because, by~\eqref{eq:afriat-system},
$$
\zeta_i^t - p_i^{t \top} b_i^t + p_i(x^t) = \zeta_i^t \leq \min_{s \in
  [T]} \left\{ \zeta_i^s - p_i^{s \top} b_i^s +
    p^s_i(x^t) \right\}.
$$
Now let $e_j$ for $j=1,\dots,\ell$ be the
unit vectors in $\bR^\ell$ and let $e_0$ be the origin. Consider any
$b_i \in H_i$, which can be written as a convex combination
$b_i = \sum_{j=0}^\ell \alpha_j e_j$ for non-negative weight
$\alpha_j$ that sum to 1. We have
$$
  (v_i - p_i^t)^\top (b_i^t - b_i) = \sum_{j=0}^\ell \alpha_j
                                         \left[ (v_i - p_i^t)^\top
                                         (b_i^t - e_j) \right]
  = \sum_{j=0}^\ell \alpha_j \left[
        \zeta_i^t - p_i^{t \top}
        b_i^t + p_i^{t \top} e_j - v_i^\top e_j
        \right]
\geq 0,
$$
where the last inequality follows from~\eqref{eq:consistent-v}.
Therefore the $v$ defined in~\eqref{eq:consistent-v} is consistent
with all bids in the sequence at the given prices, and there exists a
single value vector $v$ consistent with all bids if and only
if~\eqref{eq:afriat-system} is feasible.

By the Farkas lemma, inequalities~\eqref{eq:afriat-system} are
feasible if and only if the optimal value of the following linear
program is 0. The LP has a non-negative variable $\lambda_{st}$ for
each $s,t \in [T]$.
\begin{eqnarray*}
  \min_{\lambda \geq 0} & \ds \sum_{s,t \in [T]} (b^t - b^s)^\top p^s
                          \,\lambda_{st} & \\
\mbox{s.t.} & \ds \sum_{s \in [T]} \lambda_{st} + \sum_{s \in [T]}
                \lambda_{ts} = 0 & (t \in [T]). \\
\end{eqnarray*}
This is exactly the LP corresponding to a minimum cost circulation
problem over a complete directed graph with a node for each
$t \in [T]$, where the cost of edge $(s,t)$ is $(b^t - b^s)^\top p^s$.
See~\cite[Chapter 4]{bertsekas1998network} for the equivalence of
min-cost flow and circulation problems to their LP formulations.
If~\eqref{eq:garp-rule} does not hold, then there is a negative cost
cycle, and the value of the LP is negative (in fact, unbounded below).
Conversely, assume~\eqref{eq:garp-rule} holds. By the conformal
realization theorem~\cite[Proposition 1.1]{bertsekas1998network}, any
circulation can be decomposed into a sum of simple cycle flows.
As~\eqref{eq:garp-rule} implies that the cost of each simple cycle is
non-negative, the cost of the circulation itself is non-negative, by
linearity of cost. The optimal value of the LP is therefore 0, which
is achieved by setting each $\lambda_{st}$ to 0.
Thus~\eqref{eq:afriat-system} is feasible if and only if the bids
satisfy the GARP activity rule.
\end{proof}

\medskip
\noindent
The following result bounds the maximum value, across all agents and
bundles, under the standard logit (Gumbel) and probit (Gaussian) random utility models.
The result also covers the more general case of any subgaussian error
term, and we stress that error components may be correlated in this case.

\begin{lemma}
  \label[lemma]{fact:Vbounds}
  Consider the following choices of distribution $\nu$ over
  $v\in\bR^{n\ell}$, which takes the form $v = \meanv + \epsilon$.
  In each case $\meanv \in \bR^{n\ell}$ and $\sigma\in\bR^{n\ell}$ are deterministic quantities,
  with $\sigma_{\max} = \max_i \sigma_i$ for convenience, and
  $\epsilon \in \bR^{n\ell}$ has zero mean.
  \begin{itemize}

    \item 
      If $\epsilon_i$ is drawn from a Gumbel distribution with scale
      parameter $\sigma_i$,
      then $$\Exp_\nu [ \|v\|_\infty ] \leq \|\meanv\|_\infty + 2\sigma_{\max}\ln(2n\ell\sqrt{\pi}).$$

    \item
      If $\epsilon_i$ is drawn from a Gaussian with mean $\meanv$ and
      variance $\sigma_i^2$, then
      $$\Exp_\nu [ \|v\|_\infty ] \leq \|\meanv\|_\infty + \sigma_{\max}\sqrt{2\ln(2n\ell)}.$$

    \item 
      If $\epsilon_i$ is subgaussian with parameters
      $(0, \sigma_i^2)$,
      then $$\Exp_\nu [ \|v\|_\infty ] \leq \|\meanv\|_\infty + \sigma_{\max}\sqrt{2\ln(2n\ell)}.$$
  \end{itemize}
\end{lemma}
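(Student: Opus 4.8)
The plan is to handle all three models with one template and then specialize the one-dimensional moment bounds. First I would reduce to controlling the noise alone: by the triangle inequality for $\|\cdot\|_\infty$,
\[
  \Exp_\nu[\|v\|_\infty]
  = \Exp_\nu[\|\meanv + \epsilon\|_\infty]
  \le \|\meanv\|_\infty + \Exp_\nu\!\big[\textstyle\max_k |\epsilon_k|\big],
\]
where $k$ ranges over all $n\ell$ coordinates of $\epsilon$. The engine is the standard Chernoff--Jensen maximal inequality: for any $t>0$, using $e^{t|\epsilon_k|} = \max\{e^{t\epsilon_k}, e^{-t\epsilon_k}\} \le e^{t\epsilon_k}+e^{-t\epsilon_k}$ and $e^{t\max_k|\epsilon_k|} = \max_k e^{t|\epsilon_k|} \le \sum_k e^{t|\epsilon_k|}$, Jensen's inequality gives
\[
  \exp\!\Big(t\,\Exp_\nu\big[\textstyle\max_k|\epsilon_k|\big]\Big)
  \;\le\; \sum_k \Big(\Exp_\nu[e^{t\epsilon_k}] + \Exp_\nu[e^{-t\epsilon_k}]\Big).
\]
This step uses only a union bound over coordinates, so no independence is required and the bound is insensitive to arbitrary correlations among the $\epsilon_k$, matching the lemma's claim. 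It then remains to bound the scalar moment generating functions $\Exp_\nu[e^{\pm t\epsilon_k}]$ in each model and optimize over $t$.

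For the subgaussian case (and the Gaussian case as a special instance), each $\epsilon_k$ satisfies $\Exp_\nu[e^{\pm t\epsilon_k}] \le e^{t^2\sigma_k^2/2} \le e^{t^2\sigma_{\max}^2/2}$ directly from the definition of subgaussianity applied to $\epsilon_k$ and to $-\epsilon_k$. Plugging this in gives $\exp(t\,\Exp_\nu[\max_k|\epsilon_k|]) \le 2n\ell\, e^{t^2\sigma_{\max}^2/2}$, hence $\Exp_\nu[\max_k|\epsilon_k|] \le \ln(2n\ell)/t + t\sigma_{\max}^2/2$; choosing $t = \sqrt{2\ln(2n\ell)}/\sigma_{\max}$ yields $\sigma_{\max}\sqrt{2\ln(2n\ell)}$, which is exactly the stated bound. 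A $\cN(0,\sigma_k^2)$ variable is subgaussian with parameter precisely $\sigma_k^2$, so the Gaussian bound is identical.

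The Gumbel case is the one that needs genuine care, and I expect it to be the main obstacle, because a Gumbel variable is \emph{not} subgaussian (its right tail is only exponential), so the generic inequality above is unavailable. Instead I would use the exact Gumbel moment generating function. Writing the mean-zero, scale-$\sigma_k$ Gumbel as $\epsilon_k = \sigma_k(G - \gamma)$ with $G$ a standard Gumbel and $\gamma$ the Euler--Mascheroni constant, one has $\Exp_\nu[e^{t\epsilon_k}] = e^{-\gamma t\sigma_k}\,\Gamma(1 - t\sigma_k)$, valid \emph{only} for $t\sigma_k < 1$, and $\Exp_\nu[e^{-t\epsilon_k}] = e^{\gamma t\sigma_k}\,\Gamma(1 + t\sigma_k)$. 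I would then pick $t = 1/(2\sigma_{\max})$, so that $s_k := t\sigma_k \in (0,\tfrac12]$ stays safely inside the domain of validity. Using the elementary Gamma estimates $\Gamma(1-s) \le \Gamma(\tfrac12) = \sqrt\pi$ and $\Gamma(1+s) \le 1$ for $s \in [0,\tfrac12]$, together with $e^{-\gamma s}\le 1$ and $e^{\gamma s}\le e^{\gamma/2}$, each term is bounded by $\Exp_\nu[e^{t\epsilon_k}] + \Exp_\nu[e^{-t\epsilon_k}] \le \sqrt\pi + e^{\gamma/2} \le 2\sqrt\pi$. The maximal inequality then gives $\exp\!\big(\tfrac{1}{2\sigma_{\max}}\Exp_\nu[\max_k|\epsilon_k|]\big) \le 2\sqrt\pi\, n\ell$, i.e.\ $\Exp_\nu[\max_k|\epsilon_k|] \le 2\sigma_{\max}\ln(2n\ell\sqrt\pi)$, which is the claimed bound. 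The only delicate point is keeping $t\sigma_k$ inside the finite domain of the Gumbel MGF while still extracting a clean constant from the Gamma-function estimates on $(0,1)$ and $(1,2)$; once that is arranged the computation is routine, and the Gaussian and subgaussian cases fall out of the same template with no further work.
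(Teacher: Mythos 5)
Your proposal is correct and follows essentially the same route as the paper's proof: the same Jensen/union-bound maximal-MGF inequality $\exp(t\,\Exp_\nu[\max_k|\epsilon_k|]) \leq \sum_k(\Exp_\nu[e^{t\epsilon_k}]+\Exp_\nu[e^{-t\epsilon_k}])$, the same choice $t=\sqrt{2\ln(2n\ell)}/\sigma_{\max}$ for the Gaussian/subgaussian cases, and the same choice $t=1/(2\sigma_{\max})$ with Gamma-function estimates for the Gumbel case. Your explicit treatment of the centering factors $e^{\mp\gamma t\sigma_k}$ is in fact slightly more careful than the paper, which invokes the location-zero Gumbel MGF $\Gamma(1-s\sigma_i)$ while assuming $\epsilon$ has zero mean, but the resulting constants are the same either way.
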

\begin{proof}
  First recall the following standard derivation linking maxima of random variables
  and their moment generating functions \citep[e.g.,][Section 2.5]{blm_conc}.  For any
  (possibly dependent) random variables $(X_1,\ldots,X_M)$, note by convexity
  for any $t\geq 0$ that
  \begin{align}
    \exp\left(t \Exp_\nu[\max_i |X_i| ] \right)
    &\leq
    \Exp_\nu\left[ \exp( t\max_i |X_i| ) \right]
    \notag\\
    &=
    \Exp_\nu\left[\max_i \exp( t |X_i| ) \right]
    \notag\\
    &\leq
    \Exp_\nu\left[\sum_i \exp( t |X_i| ) \right]
    \notag\\
    &\leq
    \sum_i \left( \Exp_\nu\left[ \exp( t X_i ) \right] + \Exp_\nu\left[\exp(-tX_i)\right] \right).
    \label{eq:ranutil:helper}
  \end{align}

  \noindent
  For the Gumbel distributions with scale parameter $\sigma_i$, first note for every $s\in\bR$ that
  $\Exp_\nu[\exp(s\epsilon_i)] = \Gamma(1-s\sigma_i)$.  Combining this
  with~\eqref{eq:ranutil:helper}, we have
  \[
    \Exp_\nu[\max_i |\epsilon_i| ]
    \leq \frac 1 t \ln\left(
      \sum_i
      \left(\Gamma(1-t\sigma_i) + \Gamma(1+t\sigma_i)\right)
    \right),
  \]
  whereby the choice $t = 1/(2\sigma_{\max})$ combined with properties of $\Gamma$ grants
  \[
    \Exp_\nu[ \max_i |\epsilon_i| ]
    \leq 2\sigma_{\max} \ln\left(
      \sum_i
      \left(\Gamma(1-\sigma_i /(2\sigma_{\max})) + \Gamma(1+\sigma_i/(2\sigma_{\max}))\right)
    \right)
    \leq
    2\sigma \ln\left(2n\ell\sqrt{\pi})\right),
  \]
  and the desired bound follows since $\Exp_\nu[\|v\|_\infty] \leq \|\meanv\|_\infty + \Exp_\nu[\|\epsilon\|_\infty]$.

  Next consider the bound on arbitrary subgaussian random variables, which will immediately grant the Gaussian bound.
  As each $\epsilon_i$ satisfies $\Exp_\nu[ \exp(t\epsilon_i) ] \leq \exp(t^2\sigma_i^2/2)$ by definition,~\eqref{eq:ranutil:helper} gives
  \begin{align*}
    \exp\left(t \Exp_\nu[ \max_i |\epsilon_i| ] \right)
    &\leq
    \sum_i \left( \Exp_\nu\left[ \exp( t\epsilon_i) \right] + \Exp_\nu\left[\exp(-t\epsilon_i )\right] \right)
    \\
    &\leq
    2 \sum_i \exp(t^2\sigma_i^2/2)
    \\
    &\leq
    2 n\ell \exp(t^2\sigma_{\max}^2/2).
  \end{align*}
  We therefore obtain
  \begin{align*}
    \Exp_\nu[\max_i |\epsilon_i| ]
    &\leq
    \frac {\ln(2n\ell)} t
    + \frac {t\sigma_{\max}^2}{2}.
  \end{align*}
  Since this expression holds for all $t>0$, using the optimal choice
  $t = \sqrt{2\ln(2n\ell)/\sigma_{\max}^2}$ gives the result
  since $\Exp_\nu[\|v\|_\infty] \leq \|\meanv\|_\infty + \Exp_\nu[\|\epsilon\|_\infty]$.
\if 0
  With the general subgaussian bound in hand, the Gaussian case now follows by noting that a
  the restriction of a $\cN(\meanv,\varSigma)$ random variable to coordinate $i$ is now a $\cN(\meanv_i,\sigma_{i}^2)$
  random variable, which is moreover subgaussian with parameters $(\meanv_i, \sigma_i^2)$.
\fi
%
\end{proof}

\end{document}